\documentclass{article} % For LaTeX2e
\usepackage{iclr2018_conference,times}
\usepackage{hyperref,url}
\usepackage{wrapfig}
\usepackage{float}
%  packages
\usepackage{amsmath,amssymb}
\usepackage{float}
\usepackage{bbm}
\usepackage{color}
\usepackage{hyperref}
\usepackage[ruled]{algorithm2e}
\usepackage{amsthm,amssymb,amsmath,mathtools}
\usepackage{algcompatible}
\usepackage{algpseudocode}
\usepackage{graphicx}
\graphicspath{ {images/} }
\RequirePackage{filecontents}
\usepackage{subcaption} % provides 'subfigure' environment
\usepackage{graphicx} % more modern
\usepackage{subcaption}
\usepackage{dsfont}
\usepackage{setspace}
\usepackage{sectsty}
\usepackage{hyperref}
\usepackage{bbm}
%############# My Commands ###################
\newcommand{\pth}[1]{\left( #1 \right) }

%### For Firefighter ###
\newcommand*{\argmin}{\text{argmin}}
\newcommand*{\argmax}{\text{argmax}}

\newcommand*{\minl}{\min\limits}
\newcommand*{\maxl}{\max\limits}
\newcommand*{\infl}{\inf\limits}

\newcommand{\abs}[1]{\left|#1\right|}
\newcommand{\cut}[1]{\text{cut}\left(#1\right)}
\newcommand{\set}[2]{\left\{ #1 : #2 \right\}}

\newcommand{\E}{\operatorname{\mathbb{E}}}
\newcommand{\Graph}[2]{G=\left(#1,#2\right)}
\newcommand{\Loss}[2]{L_b^{#1}\left(#2\right)}

\newcommand{\ceil}[1]{\left\lceil #1 \right\rceil}
\newcommand{\curly}[1]{\left\{ #1 \right\}}
\newcommand{\braces}[1]{\left\{ #1 \right\}}
\newcommand{\GR}[2]{GR_{#2}\left(#1\right)}
\newcommand{\EGR}[2]{EGR_{#2}\left(#1\right)}
\newcommand{\MGR}[2]{MGR_{#2}\left(#1\right)}
\newcommand{\mgr}[2]{mgr_{#2}\left(#1\right)}
\newcommand{\ESR}[2]{ESR_{#2}\left(#1\right)}
\newcommand{\MSR}[2]{MSR_{#2}\left(#1\right)}
\newcommand{\msr}[2]{msr_{#2}\left(#1\right)}
\newcommand{\FOAP}{\pi^F}
\newcommand{\UC}[2]{\mathcal{U}^{#1}_{#2}}
\newcommand{\subplus}[1]{\left[ ~#1~ \right]_+}
\newcommand{\indicator}[1]{ \mathbbm{1}\curly{#1} }

\makeatletter
\newcommand{\@giventhatstar}[2]{\left(#1\,\middle|\,#2\right)}
\newcommand{\@giventhatnostar}[3][]{#1\left(#2\,#1|\,#3#1\right)}
\newcommand{\giventhat}{\@ifstar\@giventhatstar\@giventhatnostar}
\makeatother

\usepackage{pgf,tikz}
\newtheorem*{AuxLemma}{Aux. Lemma}
\newtheorem{remark}{Remark}

\newtheorem{theorem}{Theorem}[section]
\newtheorem{corollary}{Corollary}[section]
\newtheorem{lemma}{Lemma}[section]

\newtheorem{definition}{Definition}[section]

\usetikzlibrary{arrows, automata, matrix}
\usetikzlibrary{shapes.geometric}
\usepackage{pgfplots}

\usepackage[ruled]{algorithm2e}

%\usepackage{etoolbox}
%\providetoggle{proofs}
%\settoggle{proofs}{true}

\usepackage{multirow}

\title{The Stochastic Firefighter Problem}

% Authors must not appear in the submitted version. They should be hidden
% as long as the \iclrfinalcopy macro remains commented out below.
% Non-anonymous submissions will be rejected without review.

\author{Guy Tennenholtz\\
	Technion - Israel Institute of Technology
\\
\AND{Constantine Caramanis} \\
The University of Texas at Austin \\
\AND{Shie Mannor} \\
Technion - Israel Institute of Technology
}

% The \author macro works with any number of authors. There are two commands
% used to separate the names and addresses of multiple authors: \And and \AND.
%
% Using \And between authors leaves it to \LaTeX{} to determine where to break
% the lines. Using \AND forces a linebreak at that point. So, if \LaTeX{}
% puts 3 of 4 authors names on the first line, and the last on the second
% line, try using \AND instead of \And before the third author name.

\iclrfinalcopy % Uncomment for camera-ready version, but NOT for submission.

\begin{document}
	\maketitle
	\begin{abstract}
The dynamics of infectious diseases spread is crucial in determining their risk and offering ways to contain them. We study sequential vaccination of individuals in networks.  In the original (deterministic) version of the Firefighter problem, a fire breaks out at some node of a given graph. At each time step, b nodes can be protected by a firefighter and then the fire spreads to all unprotected neighbors of the nodes on fire. The process ends when the fire can no longer spread. We extend the Firefighter problem to a probabilistic setting, where the infection is stochastic. We devise a simple policy that only vaccinates neighbors of infected nodes and is optimal on regular trees and on general graphs for a sufficiently large budget.  We derive methods for calculating upper and lower bounds of the expected number of infected individuals, as well as provide estimates on the budget needed for containment in expectation. We calculate these explicitly on trees, d-dimensional grids, and Erd\H{o}s R\'{e}nyi graphs. Finally, we construct a state-dependent budget allocation strategy and demonstrate its superiority over constant budget allocation on real networks following a first order acquaintance vaccination policy. 
	\end{abstract}
	
\section{Introduction}
Consider an outbreak of a fatal disease. Flu outbreaks happen every year and vary in severity, depending in part on what type of virus is spreading. The influenza or flu pandemic of 1918 was the deadliest in modern history, infecting over 500 million people worldwide and killing 20-50 million victims. Between the years 2014-2016 West Africa experienced the largest outbreak of Ebola in history, with multiple countries affected. A total of 11,310 deaths were recorded in Guinea, Liberia, and Sierra Leone. With such pandemics, global connectedness may trigger a cascade of infections. The outbreak of Ebola which raged in West Africa echoes a scenario where long-range routes of transmission - most prominently, international air routes - can allow the deadliest viral strains to outrun their own extinction, and in the process kill vastly more victims than they would have otherwise. Successful pathogens leave their hosts alive long enough to spread infection. Supplies for vaccinating the vast population may be scarce as opposed to the contagion speed. We search for immunization strategies, attempting to find methods to eliminate epidemic threats by vaccinating parts of a network. Infections may propagate quickly and be discovered only at late stages of propagation.

The analysis of epidemic spreading in networks has produced results of practical importance, but only recently the study of epidemic models under dynamic control has begun. Vaccination policies define rules for identification of individuals that should be made immune to the spreading epidemic. In this paper we study  sequential vaccination policies that use full information on the infectious network's state and topology, under specific budget constraints. The problem of vaccination is central in the study of epidemics due to its practical implications. Viruses, sickness, and opinions can propagate through complex networks, influencing society, with benefits, but also calamitous effects. Viruses spreading over phones and email, online shaming over social networks, human disease via multiple relationships, are merely a few examples of the fundamental necessity of epidemic research.

%\subsection{The Firefighter Problem}
%\input{"../../../../Msc Thesis/sections/firefighter/firefighter"}
%\subsection{Spreading Vaccination}
%\input{"../../../../Msc Thesis/sections/firefighter/spreading"}

\subsection{Our Contribution and Related Work}
\label{relationWorkSection}

The Firefighter problem was first introduced by \cite{hartnell1995firefighter}. It is a deterministic, discrete-time model of the spread of a fire on the nodes of a graph.  In its original version, a fire breaks out at some node of a given graph. At each time step, $b$ nodes can be protected by a firefighter and then the fire spreads to all unprotected neighbors of the nodes on fire (this protection is permanent). The process ends when the fire can no longer spread. At the end, all nodes that are not on fire are considered saved. The objective is at each time step to choose a node that will be protected by a firefighter such that a maximum number of nodes in the graph is saved at the end of the process. 
The Firefighter problem has received considerable attention (see, e.g., \cite{anshelevich2012approximability}, \cite{cai2008firefighting}, \cite{develin2004fire}, \cite{iwaikawa2011improved}, \cite{king2010firefighter}, \cite{ng2008generalization}). Develin and Hartke \cite{develin2007fire} proved a previous conjecture that $2d-1$ firefighters per time step are needed to contain a fire outbreak starting at a single node in the d-dimensional grid when $d \geq 3$. They also found an optimal solution to the case of the 2d grid, with 2 firefighters per time step. Their optimal solution is depicted in Figure \ref{fig:grid2d_optimal}. The Firefighter problem has been found to be infamously difficult. It is known to be NP-Complete, even when restricted to bipartite graphs (\cite{macgillivray2003firefighter}) or to trees of maximum degree three (\cite{finbow2007firefighter}), and was proven to be NP-hard to approximate within $n^{1-\epsilon}$ for any $\epsilon > 0$ (\cite{anshelevich2009approximation}). A survey of results on the Firefighter problem can be found in \cite{finbow2009firefighter}.

\begin{figure*}
\begin{center}
\centerline{
\includegraphics[width=0.6\columnwidth]{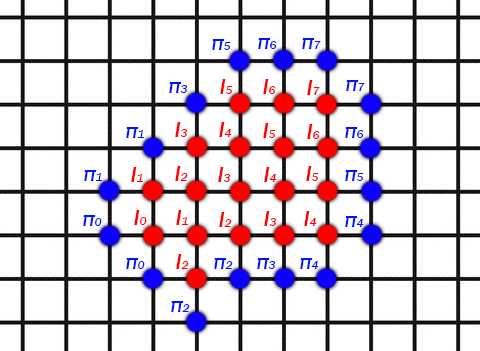}
}
\caption{Solution to the integer program used in \cite{develin2007fire} to prove the optimal solution to the (deterministic) Firefighter problem on the 2d grid for the case of a single source with a budget of $b = 2$. The fire outbreak starts at time $t=0$ at the root ($I_0$), and then spreads to the red nodes labeled $I_t$, where $t$ indicates infection time. The blue nodes $\pi_t$ are the nodes which are defended by firefighters at time $t$. This placement of two firefighters per time step completely contains the outbreak in 8 time steps, allowing a minimum number of 18 burnt nodes.}
\label{fig:grid2d_optimal}
\end{center}
\vskip -0.2in
\end{figure*}

The control of epidemics has been extensively studied for the past two decades. The dynamic allocation of cure has been studied in \cite{gourdin2011optimization,chung2009distributing,borgs2010distribute,drakopoulos2014efficient,drakopoulos2015network}. The Firefighter problem \cite{hartnell1995firefighter,develin2007fire,esperet2013fire,finbow2007firefighter,anshelevich2009approximation,macgillivray2003firefighter,king2010firefighter,cai2010surviving} and this paper do not model healing processes of individuals, making the task of finding an optimal vaccination strategy dependent only on the spreading infection process and the underlying graph. Not including a ``natural healing" process while only allowing vaccination of non-infected individuals, differentiates the Firefighter problem and our problem from these stochastic epidemic models.
More related to our work, are the studies of vaccine allocation in \cite{cohen2003efficient,preciado2013optimal,peng2013vaccination,ruan2012epidemic,miller2007effective,peng2010epidemic,bodine2011minimizing}. Victor M. Preciado et al. \cite{preciado2013optimal}, propose an optimization strategy for optimal vaccine allocation. Their model assumes slightly modifiable infection rates and a cost function based on a mean field approach. Information-driven vaccination is studied in \cite{ruan2012epidemic}, showing the spread of the information will promote people to take preventive measures and consequently suppress the epidemic spreading. Miller JC and Hyman JM propose in \cite{miller2007effective} to only vaccinate those nodes with the most unvaccinated contacts. An acquaintance immunization policy was proposed in \cite{cohen2003efficient}, where it was shown that such a policy is efficient in networks with broad-degree distribution. A different approach \cite{bodine2011minimizing}, considers minimizing the social cost of an epidemic. The above studies' analysis is based on epidemic thresholds and mean-field approximations of the evolution process. In contrast to these, this paper can be viewed as a stochastic extension to the Firefighter problem, which studies the transient, short-term behaviour of a spreading infection. 

Another variant of the Firefighter problem proposed by \cite{anshelevich2012approximability}, assumes the vaccination is also a process that spreads through the network. In the spreading vaccination model, vaccinated nodes propagate through the network as well. If a node $v$ is adjacent to a vaccinated node, then $v$ itself becomes vaccinated during the next time step (unless it is already infected or vaccinated). This type of model may depict conflicting ideas co-existing in a social network or the fact that vaccines can be infectious as well, since they are often an attenuated version of the actual disease. Anshelevich el al. \cite{anshelevich2012approximability} show that the ``spreading vaccination'' problem can be written as a maximization problem of a monotone submodular function over a matroid. These optimization problems are known to have a greedy algorithm which achieves a constant factor approximation \cite{krause2014submodular}. The reader may refer to \cite{anshelevich2012approximability} for an exhaustive account of the problem. This paper focuses on the original Firefighter problem, without spreading vaccination.

Previous research on immunization policies modeled the problem either deterministically or approximately, considering long-term effects alone. Defending a network from an attack of a virus must take into account its transient behavior, fast propagation, as well as the natural budget constraints of defending the network. It is thus vital to find on-line policies which acknowledge short-term effects and the defender's limited protection capabilities. This differentiates our paper from previous research, and takes a leap towards exact analysis, under a transient probabilistic framework.

\noindent \textbf{Our contributions are as follows.}
\begin{enumerate}
\item We extend the original Firefighter problem to a stochastic framework and define optimality criteria for (a) number of infected individuals and (b) budget needed for containment.
\item We propose several approaches for analyzing immunization and containment of ``fast-moving" (i.e., not infinitesimally slow) epidemics in networks. This is in contrast to the common assumption of small infection rates. We then develop upper and lower bounds for these.
\item We construct an algorithm for state dependent budget (i.e., budget allocation which changes according to the state of infected and vaccinated individuals at time $t$).
\end{enumerate}

\noindent \textbf{The paper is organized as follows.}
\begin{enumerate}
\item In Sections \ref{modelSection} and \ref{preliminariesSection} we define the model and optimality criteria, as well as define graph preliminaries that are used throughout our derivations.
\item In Section \ref{fopSection} we propose a simple greedy policy, which only chooses to vaccinate nodes neighboring the infection. An acquaintance immunization policy was proposed in \cite{cohen2003efficient}, where it was shown that such a policy is efficient in networks with broad-degree distribution. This greedy policy can be efficiently calculated on any graph, and is thus of high interest. We show such a policy is optimal when either the network topology is a regular tree, or the graph is finite and the infection spread is sufficiently slow.
\item In Section \ref{GrowthRateSection} we develop mathematical machinery to characterize the topology-dependent speed. We define growth rates, \mbox{maximal $/$ minimal} growth rates, as well as expected growth rates. We show in Theorem \ref{MGRThm} how these quantities can be used to bound the expected limiting cardinality of an infection under any vaccination policy.
\item Section \ref{containmentSection} considers criteria for containment of an infection. In Theorem \ref{containmentThm1} we construct upper and lower estimates on the budget needed to contain an infection.
\item We apply the results of Theorem \ref{containmentThm1} to specific network topologies, including regular trees, $d$-dimensional grids, and Erd\H{o}s-R\'{e}nyi graphs to obtain explicit bounds on the budget needed for containment. We summarize these results in Table \ref{table:summary}.
\item In Section \ref{stateDependentBudgetSection} we propose an algorithm for state-dependent budget allocation, based on estimated local growth rate behavior of the network. We show this strategy achieves better containment on two real world networks, when compared to a constant budget strategy which consumes an equal global budget.
\end{enumerate}

\section{Model}
\label{modelSection}
This section defines the model of our problem and the criteria we wish to optimize. A natural way to extend the Firefighter problem to a probabilistic setting is to model it as an MDP (\cite{bertsekas1995dynamic}). The NP-Completeness of the Firefighter problem impedes much for its research. Modeling the problem as probablistic, as well as defining different objectives to the problem, may allow us to obtain stronger results. We will distinguish between two different criteria. The first, and most natural criterion, considers minimizing the number of infected nodes at the end of the infection process. This criterion, which is the same as the original Firefighter problem objective, is hard to optimize generally. We consider an alternative criterion, which is ultimately easier to evaluate. The containment criterion asks the following question: What is the minimal budget needed to ensure an infection won't reach a certain size? This question can be answered immediately once an optimal policy is known, whereas knowing the latter does not give us the former. Strictly speaking, this question may be more feasible to answer, and is thus the main focus of our work. In regard to the optimality criterion, we find upper and lower bounds on the optimal loss.

\subsection{Stochastic Firefighter Model as an MDP}

We model the problem as a discrete SIR epidemic model, defined by parameters $G$ (graph network topology), $p \in \left(0,1\right]$ (infection probability / speed), and ${b \in \curly{0, 1,2,3,\hdots}}$ (vaccination budget per stage). We consider a network, represented by an undirected graph $\Graph{V}{E}$, where $V$ denotes the set of nodes and $E$ denotes the set of edges. Two nodes are said to be neighbors if $\left(u,v\right) \in E$. We use the notation $u \leftrightarrow v$ to denote neighboring nodes. We use $n$ to denote the number of nodes in $G$, and do not restrict ourselves to finite graphs.

Assume a spreading infection on $G$ to be a discrete time contact process, modeled by a discrete time Markov chain $\left \{s_t\right \}_{t=0}^{\infty}$. An infected node infects each of its healthy neighbors with probability $p$. A state of the chain, $s_t$, is defined as the pair $s_t = \left(I_t, B_t\right)$, where $I_t$ and $B_t$ denote the set of infected and vaccinated nodes at time $t$, respectively. At all times, $I_t \cap B_t = \emptyset$ (i.e., a node can either be healthy, infected, or vaccinated). 

The process $s_t$ is initialized at some given state $s_0 = (I_0, B_0)$ with transitions occurring independently according to the following dynamics
\begin{enumerate}
\item If a node $i$ is infected, it remains infected forever.
\item If a node $i$ is vaccinated, it remains vaccinated forever.
\item If a node $i$ is healthy at time $t$, then node $i$ stays healthy at time $t+1$ with probability $q_{i,t}$, where
\begin{equation*}
q_{i,t} = \left(1-p\right) ^ { \abs{\set{j}{i \leftrightarrow j, j \in I_t} } }.
\end{equation*}
This in turn means node $i$ moves to the infected state  at time $t+1$ with probability
\begin{equation*}
1 - q_{i,t} = 1 - \left(1-p\right) ^ { \abs{\set{j}{i \leftrightarrow j, j \in I_t} } }.
\end{equation*}
\end{enumerate}
Informally, conditions 1 and 2 mean that nodes remain in their infected or vaccinated states at all times. Condition 3 means that at each iteration nodes infect each of their healthy neighbors independently with probability $p$.

Our MDP is defined by a state space $\mathcal{S}$ given by tuples $(I,B)$ that are subsets of $V\times V$, action space $\mathcal{A}$ which is a subset of $V$ of maximum cardinality $b$, a transition probability matrix as defined by the infection dynamics, and a cost function with a penalty of $1$ for each infected node. A stationary Markov control policy $\pi: \mathcal{S} \to \mathcal{A}$ determines at each state $s = (I,B)$ what set of healthy nodes to vaccinate. Once a set of nodes is chosen to be vaccinated they are added to $B$. We impose a budget constraint of the form
\begin{equation*}
\abs{\pi(s)} \leq b
\end{equation*}
to all $s \in \mathcal{S}$. More specifically, our action space $\mathcal{A}$ is a subset of the healthy nodes in state $s$ with cardinality at most $b$.

To remove ambiguity in regard to the order of transitions, we assume that at each time step $t$, policy $\Pi_{Sb}$ first chooses the set of nodes to vaccinate $\pi(s_t)$, after which remaining healthy nodes are infected by their neighbors according to the contagion dynamics.
We denote the set of all stationary policies by $\Pi_S$.

Since any reasonable policy uses all of its budget at each iteration, we restrict ourselves to a subset of $\Pi_S$, which we denote by $\Pi_{Sb}$, such that all $\pi \in \Pi_{Sb}$ vaccinate exactly $b$ nodes at each iteration. 
That is, 
\begin{equation*}
\abs{\pi(s)} = b.
\end{equation*}

 \subsection{Optimality Objective} 
 \label{optimalityCriteriaSection}

We define an optimal policy as one which minimizes the expected number of infected nodes once there are no more nodes to vaccinate and infect. Formally, for an initial state ${s_0 = \left(I_0, B_0\right)}$, we define the expected loss of a policy $\Pi_{Sb}$ by
\begin{equation}
\label{LossDef}
\Loss{\pi}{s_0} = \limsup\limits_{T \to \infty} \E^{\pi} \giventhat{\frac{1}{T} \sum_{t=0}^{T} \abs{I_t}}{s_0},
\end{equation}
where $\E ^\pi$ denotes the expected value induced by policy $\Pi_{Sb}$.
For finite graphs, as $t \to \infty$, an absorbing state will be reached  almost surely. Then, the loss function can also be written as
\begin{equation*}
\Loss{\pi}{s_0} = \E^{\pi} \giventhat{\abs{I_{T*}}}{s_0},
\end{equation*}
where $T^*$ is a random variable that depicts the time in which an absorbing state is reached under policy $\Pi_{Sb}$ (i.e., when there are no more healthy neighbors to infect).
When $B_0 = \emptyset$ we  write $\Loss{\pi}{I_0}$ in place of $\Loss{\pi}{s_0}$.
We define the optimal loss $L^*$ by
\begin{equation*}
\Loss{*}{s_0} = \inf_{\pi \in \Pi_{Sb}} \Loss{\pi}{s_0},
\end{equation*}
When it exists, we define an optimal policy $\pi^*$ by
\begin{equation*}
\pi^* \in \argmin_{\pi \in \Pi_{Sb}} \Loss{\pi}{s_0}.
\end{equation*}

%\noindent \textbf{Properties of $L^*$}
%\begin{lemma}[Monotonicity of the Loss]
%	\leavevmode
%	\begin{enumerate}
%		\item If $s_1 = (I_1,B), s_2 = (I_2,B)$, and $I_1 \subseteq I_2$ then
%		${\Loss{*}{s_1} \leq \Loss{*}{s_2}}$
%		\item If $b_1 \leq b_2$ then $L_{b_2}^{*}(s) \leq L_{b_1}^{*}(s)$
%		\item If $p_1 \leq p_2$ then $L_{p_1}^{*}(s) \leq L_{p_2}^{*}(s)$
%		\item Let $G_1 = (V_1,E_1)$ and $G_2 = (V_1,E_2)$ be graphs, such that ${V_1 \subseteq V_2}$ and ${E_1 \cap E_2 = E_1}$. Suppose ${s = (I,B)}$ where $I,B \subseteq V_1$. Then  ${L_{G_1}^{*}(s) \leq L_{G_2}^{*}(s)}$
%	\end{enumerate}	
%\end{lemma}

%The following lemma provides a worst case gap for the loss difference of any policy $\Pi_{Sb}$ and an optimal policy
%\begin{lemma}
%\label{lemma:optimalityGap}
%Let $\Graph{V}{E}$ be a finite graph on $n$ nodes, and let $s = (I,B)$.
%Then
%\begin{equation*}
%\abs{\Loss{*}{s} - \Loss{\pi}{s}} \leq \frac{p}{b} n^2, \forall \pi \in \Pi.
%\end{equation*}
%\end{lemma}

%\subsection{Asymptotic Optimality}
%
%When considering infinite graphs it is sometimes convenient to search for approximately optimal policies, that is, policies which are optimal asymptotically up to a constant factor.
%\begin{definition}
%We say a policy $\pi \in \Pi_{Sb}$ is asymptotically optimal if
%\begin{equation*}
%\Loss{\pi}{s_0} - \Loss{*}{s_0} \leq O(1).
%\end{equation*}
%\end{definition}

\subsection{Containment Objective}

We also investigate whether an infection can be contained (in expectation). We consider two different questions. For inifinite graphs, we ask: can an infection ever be contained under a finite budget $b$? (i.e., will it grow forever or not?). On finite graphs, we consider containment with thresholds. That is, given a maximal cardinality of infected nodes (i.e., a worst-case ``acceptable'' loss), is a budget $b$ sufficient to ensure an infection does not grow to be larger than that cardinality? The second question applies to infinite graphs as well. Specifically, given a threshold ${\theta \in \mathbb{N} \cup \left\{ \infty \right\} }$, we wish to find a minimal budget such that $\Loss{*}{I_0} \leq \theta$ when $\theta < \infty$, or $\Loss{*}{I_0} < \infty$ when $\theta = \infty$.
Formally we have:
\begin{definition}
\label{def:containment}
Let $\theta \in \mathbb{N} \cup \left\{ \infty \right\}$. Let $\Graph{V}{E}$ be a graph on $n$ nodes, and $s_0 = (I_0,B_0)$ be an initial state.
When $\theta < \infty$, we say that \underline{$b$ contains $I_0$ in $\theta$} if $L_{b}^{*}(I_0) \leq \theta$. When $\theta = \infty$, we say that \underline{$b$ contains $I_0$} if $L_{b}^{*}(I_0) < \infty$. 
\end{definition}
\begin{definition}
We say that $b_\theta$ is a \underline{weak upper bound} if for any $b > b_\theta$, $b$ contains $I_0$ in $\theta$.
We say that $b_\theta$ is a \underline{weak lower bound} if for any $b < b_\theta$, $b$ does not contain $I_0$ in $\theta$.
We say $b_\theta$ is a \underline{tight bound} if $b_\theta$ is both a weak upper and lower bound.
\mbox{We define equivalently for the case of $\theta = \infty$ and denote the containment bound by $b_\infty$}.
\end{definition}

\section{Preliminaries}
\label{preliminariesSection}
For $X_k$ discrete-time $\mathbb{R}$-valued process, we denote $\limsup \limits_{k \to \infty} X_k$ by $X_\infty$ 
We use the notation $\subplus{z}$ for any $z \in \mathbb{R}$ to denote the positive part of $z$.
We denote the set of edges in a graph by $E$ and use the notation $u \leftrightarrow v$ to denote $\left(u,v\right) \in E$. We denote by $\Delta$ the maximal node degree in a given graph.
The number of neighbors of a set of nodes is defined to be the number of connected nodes to that set, and is denoted by $N(A)$ for some $A \subset V$. In a similar way, for a state $s = (I,B)$, its neighborhood is defined as the set of nodes
$$
N(s) = N(I)\backslash B.
$$

Another important measure for sets is their cuts. A cut of a set of nodes in a graph is defined to be the number of edges exiting the set.
For any two sets $A,B$ and a node $v$, the cut from $A$ to $v$ is defined by
$$
\cut{A,v} = \abs{\set{\left(u,v\right) \in E}{ u \in A }}.
$$
The cut from $A$ to $B$ is then defined by
$$
\cut{A,B} = \sum_{v \in B} \cut{A,v}.
$$ 
We can then define the cut of $A$ as the cut from $A$ to $N(A)$, that is,
$$
\cut{A} = \cut{A,N(A)}.
$$
Equivalently, for a state $s = (I,B)$ and node $v \in N(s)$, the cut from $s$ to $v$ is defined by \\
\begin{equation*}
\cut{s,v} = \abs{\set{\left(u,v\right) \in E}{ u \in I }},
\end{equation*}
and the cut of $s$ by
$$
\cut{s} = \cut{I,N(s)} = \sum_{v \in N(s)} \cut{s,v}.
$$

\section{First-Order Policies}
\label{fopSection}
Since finding optimal structural policies on general graphs is NP-Hard, we consider ones that are both tractable and intuitive. We propose a simple policy and prove its optimality on trees.

\begin{definition}
A policy $\pi$ is called a first-order policy if at each time $t$ it only vaccinates a subset of nodes in the immediate neighborhood of $I_t$. That is, $\pi(s) \subseteq N(s)$.
\end{definition}

Under first-order policies, the number of potential states we must consider decreases, simplifying the problem of finding optimal policies. In many cases, though, such policies are sub-optimal. We provide such an example in Figure \ref{fig: FOAPExample}. Consider the graph in Figure \ref{fig: FOAPExample}. Next assume, for clarity of the illustration, that infections proceed deterministically (i.e, $p = 1$). In such a case, it is straightforward that an optimal policy would vaccinate $b$ nodes at level 3 at time $t = 0$, then $b$ nodes at level 3 at time $t = 1$. Any other policy would ensure the infection reaches level 4, thus obtaining a loss that is of order $n$, which may be very large. Characterizing settings where first-order policies fail or succeed is a key contribution of this work.

\begin{figure}
	\centering
	\begin{tikzpicture}
	\tikzset{   %%% define the parameters for once
		mstyle/.style={column sep=1cm, row sep=0.3cm,nodes={state},font=\bfseries},
		line/.style={draw,very thick,blue},
	}
	
	\matrix(m)[matrix of nodes,ampersand replacement=\&,mstyle]{
		\&          \&          \& $v_{41}$ \\
		\& $v_{21}$ \& $v_{31}$ \& $v_{42}$ \\
		$I_0$ \& $v_{22}$ \&          \& $v_{43}$ \\
		\& $v_{23}$ \& $v_{32}$ \&  \\
		\&          \&          \& $v_{4n}$ \\
	};
	
	\draw[line](m-3-1) -- (m-2-2);
	\draw[line](m-3-1) -- (m-3-2);
	\draw[line](m-3-1) -- (m-4-2);
	
	\draw[line](m-2-2) -- (m-2-3);
	\draw[line](m-2-2) -- (m-4-3);
	\draw[line](m-3-2) -- (m-2-3);
	\draw[line](m-3-2) -- (m-4-3);
	\draw[line](m-4-2) -- (m-2-3);
	\draw[line](m-4-2) -- (m-4-3);
	
	\draw[line](m-2-3) -- (m-1-4);
	\draw[line](m-2-3) -- (m-2-4);
	\draw[line](m-2-3) -- (m-3-4);
	\draw[line](m-2-3) -- (m-5-4);
	\draw[line](m-4-3) -- (m-1-4);
	\draw[line](m-4-3) -- (m-2-4);
	\draw[line](m-4-3) -- (m-3-4);
	\draw[line](m-4-3) -- (m-5-4);
	\end{tikzpicture}
	\caption{An example of a graph for which any first-order policy is sub-optimal. Suppose $b = 1$ and $p = 1$. An optimal policy will vaccinate nodes $v_{31}$ and $v_{32}$ at iterations $t = 0$ and $t = 1$ respectively, such that $\Loss{*}{I_0} = 4$. Any policy which does not vaccinate these nodes at the first two iterations will ensure a loss that is of order $O(n)$.  Specifically, when $b=1$ and $p=1$, a first-order policy $\pi^F$ will obtain a loss $\Loss{\FOAP}{I_0} = n+3$.  }  
	\label{fig: FOAPExample}
\end{figure}
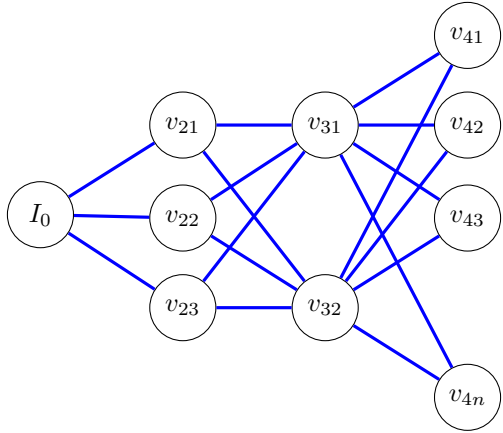

It is interesting to distinguish, when it is optimal to only consider such first-order policies. 
\begin{definition}
	The optimal first-order policy is defined by
	\begin{equation}
	\label{def:FOAP}
	\FOAP(s) \in \argmin \left\{\Loss{\pi}{s} : \pi(s) \subseteq N\left(s\right) \right\} .
	\end{equation}
\end{definition}
\noindent Two natural first-order policies are the CUT policy and the random policy.
\begin{definition}
	The CUT policy is a first-order policy which minimizes the immediate cut, that is
	\begin{equation}
	\label{def:CUT}
	\pi^{CUT}(s) \in \argmin \left\{ \cut{s} \right\}.
	\end{equation}
\end{definition}
\begin{definition}
	The random policy $\pi^R$ is a first-order policy defined by uniformly choosing any $b$ nodes in $N(s)$.
\end{definition}

The CUT and random first-order policies are easy and efficient to calculate. Although they are only an approximation of an optimal first-order policy, their simplicity brings forth the natural question: when is using a first-order policy a good approximation? In the next subsections we will show two instances in which an optimal first-order policy is optimal, and in further sections use first-order policies to achieve bounds for containment, the optimal loss, and the first-order loss. We will ultimately use these containment bounds in Section \ref{stateDependentBudgetSection} together with a first-order policy to achieve a state dependent budget allocation strategy.

\subsection{Optimality of Slow Propagation}

The following theorem is a cornerstone for the rest of this paper.
\begin{theorem}
	\label{FOAPOptimalityThm}
	Let $\Graph{V}{E}$ be a finite graph on $n$ nodes, and let $I_0$ be an initial infection on $G$. Then there exists a positive value $p_{{\rm th}}$ such that $\pi^{F}$ is optimal, for every $p < p_{{\rm th}}$.
\end{theorem}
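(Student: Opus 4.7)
I would combine a perturbation expansion around $p = 0$ with a local exchange argument, aiming to show that for $p$ sufficiently small every optimal policy can be converted to a first-order one without loss, so $\pi^F$ is globally optimal. Since $G$ is finite, the MDP state space has at most $3^{|V|}$ elements and the set of stationary deterministic policies in $\Pi_{Sb}$ is finite. For each such $\pi$, $L_b^\pi(s_0, p)$ is a rational function of $p$ on $[0,1]$, in particular continuous, with $L_b^\pi(s_0, 0) = |I_0|$. I would call $\pi$ \emph{absorbing} if its deterministic $p=0$ trajectory $\tilde s_0^\pi, \tilde s_1^\pi, \ldots$ reaches an absorbing state at some finite time $T_\pi$, equivalently if $\pi$ eventually vaccinates all of $N(I_0)\setminus B_0$. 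Non-absorbing $\pi$ leave a strictly positive cut forever, so $\liminf_{p \downarrow 0} L_b^\pi(s_0, p) \geq |I_0| + 1$ while $\lim_{p \downarrow 0} L_b^{\pi^F}(s_0, p) = |I_0|$, and they are beaten by $\pi^F$ once $p$ is small.

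For an absorbing $\pi$ I would establish the expansion
\[
L_b^\pi(s_0, p) \;=\; |I_0| \;+\; p \cdot a_\pi \;+\; O(p^2), \qquad a_\pi := \sum_{t=0}^{T_\pi - 1} \cut{\tilde s_t^\pi + V_t^\pi},
\]
with $V_t^\pi = \pi(\tilde s_t^\pi)$, by writing the probability that the first $T_\pi$ stochastic steps agree with the deterministic trajectory as $\prod_t (1-p)^{\cut{\tilde s_t^\pi + V_t^\pi}} = 1 - p \cdot a_\pi + O(p^2)$, and noting that any cascade of more than one infection contributes $O(p^2)$. The heart of the proof is then an exchange lemma: if an absorbing $\pi$ vaccinates some $u \notin N(\tilde s_t^\pi)$ at step $t$ while some $w \in N(\tilde s_t^\pi) \setminus V_t^\pi$ exists, then since $\pi$ is absorbing it must vaccinate $w$ at some later step $t' > t$. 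Defining $\tilde\pi$ to follow the schedule obtained from $\pi$'s deterministic trajectory by swapping $u$ and $w$ between times $t$ and $t'$ (and otherwise coinciding with $\pi$), one checks that for every $t \leq s < t'$ the set $N(I_0) \setminus B_s$ loses the element $w$, shrinking each intermediate cut by $\cut{I_0, w} \geq 1$. Hence $a_{\tilde\pi} \;\leq\; a_\pi - (t' - t) \cdot \cut{I_0, w} \;<\; a_\pi$.

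Iterating the exchange (which must terminate since $a_\pi$ is a non-negative integer strictly decreasing at each swap) converts any absorbing $\pi$ into a first-order $\hat\pi$ with $a_{\hat\pi} \leq a_\pi$, strictly less when $\pi$ is not first-order. By the optimality of $\pi^F$ among first-order policies, $a_{\pi^F} \leq a_{\hat\pi} < a_\pi$ for every non-first-order absorbing $\pi$. Because the policy class is finite, I can then pick a single $p_{\rm th} > 0$ such that for every $p < p_{\rm th}$ the positive gap $p \cdot (a_\pi - a_{\pi^F})$ dominates the $O(p^2)$ residuals uniformly over non-first-order absorbing $\pi$, and such that non-absorbing policies are also uniformly beaten. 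This yields $L_b^{\pi^F}(s_0, p) \leq L_b^\pi(s_0, p)$ for all $\pi \in \Pi_{Sb}$, which is the theorem. The main obstacle is making the first-order expansion rigorous with a policy-independent $O(p^2)$ constant; finiteness of $G$ and of the policy class delivers such a uniform bound, but the book-keeping is delicate. A secondary technicality is verifying that $|\tilde\pi(s)| = b$ is preserved after the swap, which is automatic since we exchange one vaccine for another of equal cardinality.
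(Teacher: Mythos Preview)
Your approach is correct and follows a genuinely different route from the paper. The paper argues by conditioning on the event $E = \{I_\infty \subseteq I_0 \cup N(I_0)\}$: it bounds $\Pr(E^c) \leq p^2 \tilde M$ via crude path-counting (any node at distance $\geq 2$ needs at least two transmissions), observes that on $E$ any vaccine spent outside $N(I_0)$ is wasted so a first-order policy beats $\pi$ by at least $p$ in conditional expectation, and concludes $L^{\pi^F}(s_0) \leq L^{\pi}(s_0) - p + n p^2 \tilde M < L^{\pi}(s_0)$ for $p < (n\tilde M)^{-1}$. You instead make the perturbation structure explicit by isolating the linear coefficient $a_\pi = \sum_t \cut{\tilde s_t^\pi + V_t^\pi}$ and running an exchange argument directly on it. What your route buys is a cleaner combinatorial reduction (for small $p$ the problem becomes a weighted scheduling problem on $N(I_0)$, with weights $\cut{I_0,w}$) and an argument that handles the entire deterministic schedule at once rather than a single deviating step; what the paper's route buys is an explicit, if very loose, formula for $p_{\rm th}$.

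Two small points. First, your non-absorbing case is vacuous: since every $\pi \in \Pi_{Sb}$ vaccinates $b$ healthy nodes per step and $G$ is finite, the $p=0$ trajectory exhausts $V \setminus I_0$ in at most $\lceil n/b \rceil$ steps, so every such policy is absorbing. (Your associated claim $\liminf_{p\downarrow 0} L^\pi \geq |I_0|+1$ would in any case not follow merely from a positive cut for finitely many steps, but the case never fires.) Second, after a swap your $\tilde\pi$ is really an open-loop schedule along a \emph{new} deterministic trajectory; to place it back in $\Pi_{Sb}$ you should note that the trajectory states are pairwise distinct (since $|B_t|$ strictly increases) and extend $\tilde\pi$ arbitrarily off that trajectory, which leaves $a_{\tilde\pi}$ unchanged. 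Neither point threatens the argument.
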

That is, on finite graphs, there exists a small enough $p$ such that a first-order policy always achieves the smallest loss. The proof reveals, moreover, that a first-order policy may be a good approximation of an optimal policy, as long as $p$ is not too big. Thus, the result of Theorem \ref{FOAPOptimalityThm} underscores the importance of studying fast growing infections as well. Equivalently, one may examine the ratio $\frac{b}{p}$. We would expect problems with a small ratio to be much more simple (i.e., small $p$ and large $b$). A main focus in this paper is studying containment objective which do not confine us to the ``small $p$" or ``large $b$" regimes. \\\\
%\proof{Proof of Theorem \ref{FOAPOptimalityThm}}: 
\begin{proof}
Let ${s = (I,B)}$ be a state. Assume by contradiction there exists a stationary policy $\pi \in \Pi^*$ such that $\pi(s)\cap N(s) \neq \emptyset$. Let $\pi^{F}$ be a first-order policy as defined in equation \ref{def:FOAP}.  Without loss of generality, suppose $s_0 = s$, so that $I_0 = I$. Also, for clarity of the proof, and without loss of generality, assume $B_0 = B = \emptyset$. Throughout this proof we will use $d(A,v)$ to denote the minimal distance between a set $A$ and a node $v$. 
	
	Since $G$ is a finite graph, an absorbing state is reached with probability $1$. We thus use the notation $I_\infty$ to denote $I_{T^*}$, where $T^*$, as defined in Section \ref{optimalityCriteriaSection}, is the random variable depicting the time an absorbing state is reached, that is the time under which $N(s_{T^*}) = \emptyset$. To prove optimality of the first-order policy, we show there exists a small enough $p$, such that vaccinating in $N(I_0)$ ensures a smaller loss. 	
	For a budget $b$ it easy to see that $T^*$ is upper bounded by $\frac{n}{b}$ a.s. For the remainder of the proof we will use the notation $T_a = \frac{n}{b}$ to denote that upper bound.	Let $v \in I_0^c$ with $dist(I_0,v) \geq 2$, and let $M_v$ be the number of paths in $G$ leading from $I_0$ to $v$ of length at most $T_a$. We define the event $A_k^v$ by
	\begin{equation}
	\label{AkDef}
	A_k^v = \{ v \in I_{k} \}.
	\end{equation}
	Since $dist(I_0,v) \geq 2$, it must be that $P(A_1^v) = 0$. Then,
	\begin{equation}
	\label{N2ProbInequality}
	\Pr \giventhat{A_\infty^v}{s_0 = s} \leq 
	\Pr \giventhat{ \bigcup_{k=2}^{\infty} A_k^v }{s_0 = s} =
	\Pr \giventhat{ \bigcup_{k=2}^{T_a} A_k^v }{s_0 = s} \leq
	\sum_{k=2}^{T_a} \Pr \giventhat{ A_k^v }{s_0 = s} \leq
	T_a p^2 M_v,
	\end{equation}
	
	Next we define the event
	\begin{equation*}
	E = \{ I_\infty \subseteq I_0 \cup N(I_0) \},
	\end{equation*}
	and note that $E^c$ can be written in terms of events $A_\infty^u$ from Equation (\ref{AkDef}), as
	\begin{equation*}
	E^c = \bigcup_{u \in \left(I_0 \cup N(I_0)\right)^c} A_\infty^u.
	\end{equation*}
	For brevity, we will use the notation $\tilde{M} = \sum_{u \in \left(I_0 \cup N(I_0)\right)^c} M_u$. Then we have
	\begin{align}
	&\Pr \giventhat{E^c}{s_0} = 
	\Pr \giventhat{\bigcup_{u \in \left(I_0 \cup N(I_0)\right)^c} A_\infty^u}{s_0} \leq \nonumber  \\
	& \leq \sum_{u \in \left(I_0 \cup N(I_0)\right)^c} \Pr \giventhat{ A_\infty^u}{s_0} \leq 
	p^2 \sum_{u \in \left(I_0 \cup N(I_0)\right)^c} T_a M_u = p^2 \tilde{M},
	\label{probBoundEq}
	\end{align}
	where in the first inequality we used the union bound, and in the second we used the inequality in Equation (\ref{N2ProbInequality}).
	 Since $\pi$ does not vaccinate at least one node in $N(s_0)$, the immediate loss of $\pi$ compared to $\pi^F$ can be bounded by
	\begin{equation*}
	\E^{\pi^{F}} \giventhat{\abs{I_{1}}}{s_0} \leq 	
	\E^{\pi} \giventhat{\abs{I_{1}}}{s_0} - p.
	\end{equation*}
	Furthermore, given the event $E$,
	\begin{equation}
	\label{immediateLossEq1}
	\E^{\pi^{F}} \giventhat{\abs{I_{\infty}}}{s_0,E} \leq 	
	\E^{\pi} \giventhat{\abs{I_{\infty}}}{s_0,E} - p.
	\end{equation}
	This is true because given the event $E$ all nodes that are of distance greater than $1$ will never be infected, thus any policy which does not vaccinate at time $t=0$ in $N(s_0)$ will do at least $p$ worse than a first-order policy.
	
	Next note that
	\begin{equation}
	\label{immediateLossEq2}
	\E^{\pi} \giventhat{\abs{I_{\infty}}}{s_0,E} \leq 
	\E^{\pi} \giventhat{\abs{I_{\infty}}}{s_0} =
	\Loss{\pi}{s_0}.
	\end{equation}
	Using Equations (\ref{immediateLossEq1}) and (\ref{immediateLossEq2}) we then obtain
	\begin{equation}
	\label{LossBoundEq1}
	\E^{\pi^{F}} \giventhat{\abs{I_{\infty}}}{s_0,E} \leq \Loss{\pi}{s_0} -p.
	\end{equation}
	Furthermore it trivially holds that
	\begin{equation}
	\label{LossBoundEq2}
	\E^{\pi^{F}} \giventhat{\abs{I_{\infty}}}{s_0,E^c} \leq n.
	\end{equation}
	
	Finally let $p_{th} = \left(n\tilde{M}\right)^{-1}$, and take $p < p_{th}$.
	Then
	\begin{align*}
	&\Loss{\pi^F}{s_0} =
	\E^{\pi^F} \giventhat{\abs{I_{\infty}}}{s_0} =
	\E^{\pi^F} \giventhat{\abs{I_{\infty}}}{s_0,E} \Pr \giventhat{E}{s_0} +
	\E^{\pi^F} \giventhat{\abs{I_{\infty}}}{s_0,E^c} \Pr \giventhat{E^c}{s_0} \leq \\
	& \leq \left(\Loss{\pi}{s_0}-p\right) \Pr \giventhat{E}{s_0} + 
	n \Pr \giventhat{E^c}{s_0} \leq
	\Loss{\pi}{s_0}-p + np^2 \tilde{M} < \Loss{\pi}{s_0},
	\end{align*}
	where here the first inequality uses Equations (\ref{LossBoundEq1}) and (\ref{LossBoundEq2}). In the second inequality we have used the fact that $\Pr \giventhat{E}{s_0} \leq 1$ as well as the inequality in equation \ref{probBoundEq}. Finally, the last strict inequality uses the fact that $p < p_{th} = \left(n\tilde{M}\right)^{-1}$. \\
	We have reached a contradiction to the optimality of $\pi$, thereby completing the proof.
%\Halmos
%\endproof
\end{proof}

%\newpage
\subsection{Optimality on Trees}
	\label{TreeExampleSection}	
	We show next that regardless of the transmission probability $p$, first-order policies are optimal on trees.
	
	\begin{theorem}
		\label{optimalTreeThm}
		Let $T$ be a tree, and suppose an infection initiates at its root. Then $\pi^F$ is optimal.
	\end{theorem}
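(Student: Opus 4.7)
The plan is a coupling-and-exchange argument showing that any $\pi \in \Pi_{Sb}$ can be modified into a first-order policy with no larger expected loss, from which optimality of $\FOAP$ follows. The crucial tree-structural fact I would exploit is that between any two nodes there is a unique path, so vaccinating a single node $u$ permanently severs the subtree $T_u$ rooted at $u$ from every source of infection, regardless of how the rest of the process unfolds.

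First I would fix an arbitrary $\pi \in \Pi_{Sb}$ and a time $t$ at which $\pi(s_t) \not\subseteq N(s_t)$, and pick some $v \in \pi(s_t) \setminus N(s_t)$. Since $I_t$ is connected and contains the root, the unique path from the root to $v$ crosses the boundary of $I_t$ at exactly one node; let $u$ be the first node on this path outside $I_t$. If $u \in B_t$ then the path is already blocked, so vaccinating $v$ is redundant and its slot can be reassigned to any healthy node without loss. Otherwise $u \in N(s_t)$, and I would define a comparator $\pi'$ by swapping $v$ for $u$ at time $t$; at subsequent times $\pi'$ mimics $\pi$ on nodes still healthy in the $\pi'$-process and fills any leftover budget with arbitrary healthy-node vaccinations (which can only reduce the loss).

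To compare the two processes, I would couple them through common Bernoulli$(p)$ variables $\xi_{a,b,r}$ on every edge at every time. Because the tree has no cycles, once $\pi'$ vaccinates $u$ no infection can ever enter $T_u$, and the infection restricted to $V \setminus T_u$ evolves independently of what happens in $T_u$. Proving by induction on $s \geq t$ the invariant $I_s^{\pi'} \cap T_u = \emptyset$ and $I_s^{\pi'} \setminus T_u = I_s^\pi \setminus T_u$ yields $I_\infty^{\pi'} \subseteq I_\infty^\pi$ almost surely, hence $\Loss{\pi'}{s_0} \leq \Loss{\pi}{s_0}$. Iterating this exchange across all states and times converts $\pi$ into a first-order policy with no larger loss, and since the optimum over stationary Markov policies coincides with the optimum over history-dependent policies in this stopping MDP, this forces $\FOAP$ to be optimal.

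The hard part will be the bookkeeping of $\pi'$ at times $s > t$: since $\pi$'s future actions depend on the $\pi$-state, $\pi'$ is most naturally history-dependent relative to its own state, and one must handle overlaps cleanly when $\pi$'s later vaccinations fall inside $T_u$ (they become redundant and the freed budget must be reallocated outside). This bookkeeping, together with carrying the inductive invariant on the coupled infection sets through every time step, is the technical core of the argument; everything else reduces to the unique-path property of the tree.
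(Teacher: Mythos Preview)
Your proposal is correct and follows essentially the same exchange argument as the paper: replace a non-neighbor vaccination by its unique ancestor on the boundary of $I_t$, use the tree's unique-path property to conclude $I_\infty^{\pi'} \subseteq I_\infty^\pi$ almost surely, and iterate. The paper's proof is much terser and simply asserts $I_\infty^\pi \geq I_\infty^{\tilde{\pi}}$ a.s.\ from the subtree-containment $T_u \subset T_v$, glossing over exactly the coupling and history-dependence bookkeeping you flag as the technical core; your added detail is sound (the invariant should read $I_s^{\pi'}\setminus T_u \subseteq I_s^\pi\setminus T_u$ rather than equality once you reallocate freed budget, but this only strengthens the conclusion).
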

%\proof{Proof of Theorem \ref{optimalTreeThm}}: 
\begin{proof}
Let $T_x$ denote the subtree of a node $x$ in the tree, and note that 
		\begin{equation}
		\label{eq:TreeOptimality1}
		x \in B_t \Rightarrow T_x \cap I_\infty = \emptyset.
		\end{equation}
		Assume by contradiction that the first-order policy, $\pi^F$, is sub-optimal. Let $\pi$ be an optimal policy that is not a first-order policy. Then there is a state from which $\pi$ vaccinates a node outside the neighborhood of $s_t$. Denote this node by $u$. Then, there exists a node $v \in N(s_t)$ such that $v$ is in a shortest path between $I_t$ and $u$, which is also a unique path, since $T$ is a tree. Also, we have $T_u \subset T_v$. Let $\tilde{\pi}$ be the policy which vaccinates $v$ instead of $u$ at time $t$. Then by these facts, and using Equation (\ref{eq:TreeOptimality1}),
		\begin{equation*}
		I_\infty^\pi \geq I_\infty^{\tilde{\pi}} \quad a.s.
		\end{equation*}
		We can continue this procedure until $\tilde{\pi}(s_t) \subseteq N(s_t)$. Therefore, $\tilde{\pi}$ is a first-order policy, and achieves a minimal loss, which contradicts our assumption that $\pi^F$ is sub-optimal. 
%\Halmos
%\endproof
\end{proof}

\subsection{First-Order Policy for Trees}
	Theorem \ref{optimalTreeThm} shows us that for trees, a first-order policy is always optimal. Finding this optimal first-order policy, however, can still be NP-Hard (see \cite{finbow2007firefighter}). We therefore look for instances of trees where an explicit structural policy can be established. For this reason, we consider the case of $d$-regular trees.
	 
	Let $T$ be a regular tree of degree $d$. We denote by $Lev(v)$ the level (from the root) of a node $v$, where we consider the root to be of level 1, that is, $Lev(root) = 1$. We also denote the number of levels by $L$, where we allow $L = \infty$. We assume an initial infection $I_0$ spreads in $T$ such that $I_0$ is a connected set. We define the set $R_t$ as
	\begin{equation*} \label{eq1}
	R_t = 					   
	\argmin_{u \in N(s_t)} Lev(u).
	\end{equation*}
	
	\noindent Note that once $root \notin I_0$ it must be that $\abs{R_0} = 1$.
	
	In order to obtain exact analytical results, we disregard edge effects by assuming $T$ is deep enough such that the infection never reaches its leaves. That is, we implicitly assume a limit of $L \to \infty$. Algorithm \ref{alg:treePolicy} outlines an optimal structural policy for regular trees, which is indeed a first-order policy since it only considers nodes in $N(I_t)$. We prove this policy is indeed optimal in the following Theorem.
	
\begin{algorithm}[t]
	\SetAlgoNoLine
	\KwIn{State $s_t$.}
	\KwOut{Next node to vaccinate $v$}
	\eIf {$R_0 = root$ and $\abs{N(s_t)} \geq 2$}{
	$v \in \argmin_{u \in N(s_t) \backslash R_0} Lev\left(u\right)$}{
$v \in R_t$
	}
	\Return $v$
\caption{Tree Policy}
\label{alg:treePolicy}
\end{algorithm}
	
	\begin{theorem}
		\label{optimalTreePolicyThm}
		Let $T$ be a regular tree, and let $I_0$ be a connected set. Then the tree policy outlined in Algorithm \ref{alg:treePolicy} is optimal.
	\end{theorem}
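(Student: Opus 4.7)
The plan is to combine Theorem \ref{optimalTreeThm} with a symmetry argument that exploits the $d$-regularity of $T$. By Theorem \ref{optimalTreeThm}, it suffices to prove the tree policy is optimal within the class of first-order policies. The key structural fact I would establish first is that when a first-order policy is applied starting from a connected $I_0$, the infection $I_t$ remains connected for all $t$, and at every state $s_t = (I_t, B_t)$ the removal of $I_t$ disconnects $T$ into components, each containing a unique element of $N(I_t)$. Because $T$ is $d$-regular and contains no cycles, each non-vaccinated $u \in N(s_t) = N(I_t) \setminus B_t$ has exactly $d-1$ outward neighbors, and its outer subtree (nodes reachable from $u$ without crossing $I_t$) is isomorphic to a canonical rooted tree in which the root has $d-1$ children and every descendant has $d-1$ children.

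I would then prove an exchange lemma: if two first-order policies $\pi_1$ and $\pi_2$ agree everywhere except at some step $t$, where they vaccinate distinct subsets of $N(s_t)$ of the same cardinality, and if the subsequent choices of $\pi_2$ are translated via the outer-subtree isomorphism, then $\Loss{\pi_1}{s_0} = \Loss{\pi_2}{s_0}$. Two ingredients are needed: (a) the future infection processes in distinct outer subtrees are independent, because they correspond to disjoint components of $T \setminus I_t$; and (b) the isomorphism between any two outer subtrees transports infection sample paths while preserving the number of infected nodes. Iterating this exchange (or equivalently, constructing a coupling across all time steps) shows that any first-order policy which vaccinates $\min(b, |N(s_t)|)$ distinct neighbors at each state attains the same expected loss. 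Algorithm \ref{alg:treePolicy} produces such a policy, so it matches the optimal first-order policy and is therefore optimal.

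The step I expect to be the hardest is formalizing the coupling in the presence of a nonempty $B_t$ accumulated from prior steps. This remains tractable on a tree because a first-order policy only vaccinates boundary nodes, and once vaccinated each such node $u$ permanently disables its outward subtree and remains in $N(I_t) \cap B_t$ for all subsequent $t$ (since the edge from $u$ into $I_t$ persists as $I_t$ grows); hence $B_t$ never contaminates the outer subtrees rooted at the elements of $N(s_t)$, preserving the symmetry at every step. The special rule in Algorithm \ref{alg:treePolicy} for the case $R_0 = \text{root}$ is merely a tie-breaking convention between interchangeable choices: because the root has degree $d$ like every other vertex, vaccinating the root or any other minimum-level neighbor yields identical expected losses under the same symmetry argument, so the algorithm's particular convention does not affect optimality.
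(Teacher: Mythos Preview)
Your argument rests on the assertion that ``the root has degree $d$ like every other vertex,'' so that the outward component at any $u\in N(s_t)$ is a copy of the same $(d{-}1)$-branching rooted tree and all first-order choices are interchangeable by coupling. That is not the tree the paper works with. Check Lemma~\ref{neighborLemma}: the identity $|N(A)|=|A|(d{-}1)+1$ already fails for a vertex-transitive $d$-regular tree at $|A|=2$ (one gets $2d{-}2$, not $2d{-}1$); it holds precisely when every non-root vertex has $d$ children and hence degree $d{+}1$. The paper's own Case~3 makes this explicit: when $R_0=\mathrm{root}$ it says the root has $d{-}1$ healthy children while ``all other children of $v_0$ will have $d$ children,'' and concludes that an optimal policy must prefer the higher-degree neighbours. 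In this paper ``regular tree of degree $d$'' means the rooted $d$-ary tree, and a non-root boundary vertex has $d$ outward neighbours, not $d{-}1$.

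Under that reading your exchange lemma breaks exactly where Algorithm~\ref{alg:treePolicy} departs from the generic rule. If $\mathrm{root}\in N(s_t)$ its outward subtree has only $d{-}1$ branches while every other neighbour's has $d$; these are not isomorphic, vaccinating the root is strictly worse, and the clause ``if $R_0=\mathrm{root}$ then avoid $R_0$'' is the substantive content of Case~3, not a tie-break. More generally, whenever $\mathrm{root}\notin I_t$ the unique upward neighbour $R_t$ has an outward component that eventually passes through the degree-$d$ root and is therefore not isomorphic to the downward $d$-ary subtrees of the remaining neighbours either; the paper handles this via a comparison (the level-monotonicity $L^*_{G_i}\le L^*_{G_j}$ after collapsing $I_0$ to a single node $v_0$) rather than via identification. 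To salvage your approach you would need to add such a comparison argument for the asymmetric neighbour, instead of declaring it equivalent to the others.
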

%\proof{Proof of Theorem \ref{optimalTreePolicyThm}}:
\begin{proof}
Recall that throughout all of our derivation we assume $I_0$ is a connected set, so that $I_t$ must also be a connected set for all $t \geq 0$.  \\
		
		\noindent \textbf{Case 1: $root \in I_0$} \\
		
		Denote by $T_i$ a subtree of $T$ with its root at level $i$ of $T$. Also denote by $L^*_{T_i}$ the optimal loss for an infection spreading on the graph $G_i = T_i$ with an infection initiated at its root. 
		It follows that
		\begin{equation}
		\label{treeInequality}
		L^*_{G_i} \leq L^*_{G_j}, \forall i \geq j.
		\end{equation}

		Let $s = (I,B)$ with $root \in I$. Using the fact that $\FOAP(s) \subseteq N(s)$ together with equation \ref{treeInequality} yields $\FOAP(s) \in \argmin_{u \in N(s)} Lev\left(u\right)$. \\
		
		\noindent \textbf{Case 2: $root \notin I_0$ and $R_0 \neq root$} \\	
		
		We create a new tree $T'$ as follows. Take all nodes in $I_0$ and merge them together to one node $v_0$ such that all neighbors of $I_0$ remain neighbors of $v_0$. $T'$ is an asymmetric tree with $v_0$ as its root. One of the children of $v_0$ is $R_0$. Since $R_0 \neq root$ it has exactly $d$ healthy children in $T'$. This means we can use equation \ref{treeInequality} of Case 1 for $T'$ (as the root is contained in $v_0 = I_0$). Hence, since $R_0$ is of lower level than all other children of $v_0$, it must be vaccinated first.\\
		
		\noindent \textbf{Case 3: $root \notin I_0$ and $R_0 = root$} \\

		In this case, we again construct $T'$ as in case 2. Once $T'$ is created, $R_0$ will have $d-1$ healthy children, while all other children of $v_0$ will have $d$ children. This comes into conflict with Equation (\ref{treeInequality}) which deals with a case of all nodes having the same degree. 
		
		Since we only consider the case of $L \to \infty$, an optimal policy must choose to vaccinate nodes with higher degree, that is any nodes in the neighborhood of $v_0$ excluding $R_0$, leaving $R_0$ as a last priority for vaccination. A visualization of the process of building $T'$ for this case is depicted in Figure \ref{markerTtag}. %\Halmos
		
		\begin{figure}[t!]
			\centering
			\begin{subfigure}{0.3\textwidth}
				\centering
				\includegraphics[width=0.9\linewidth]{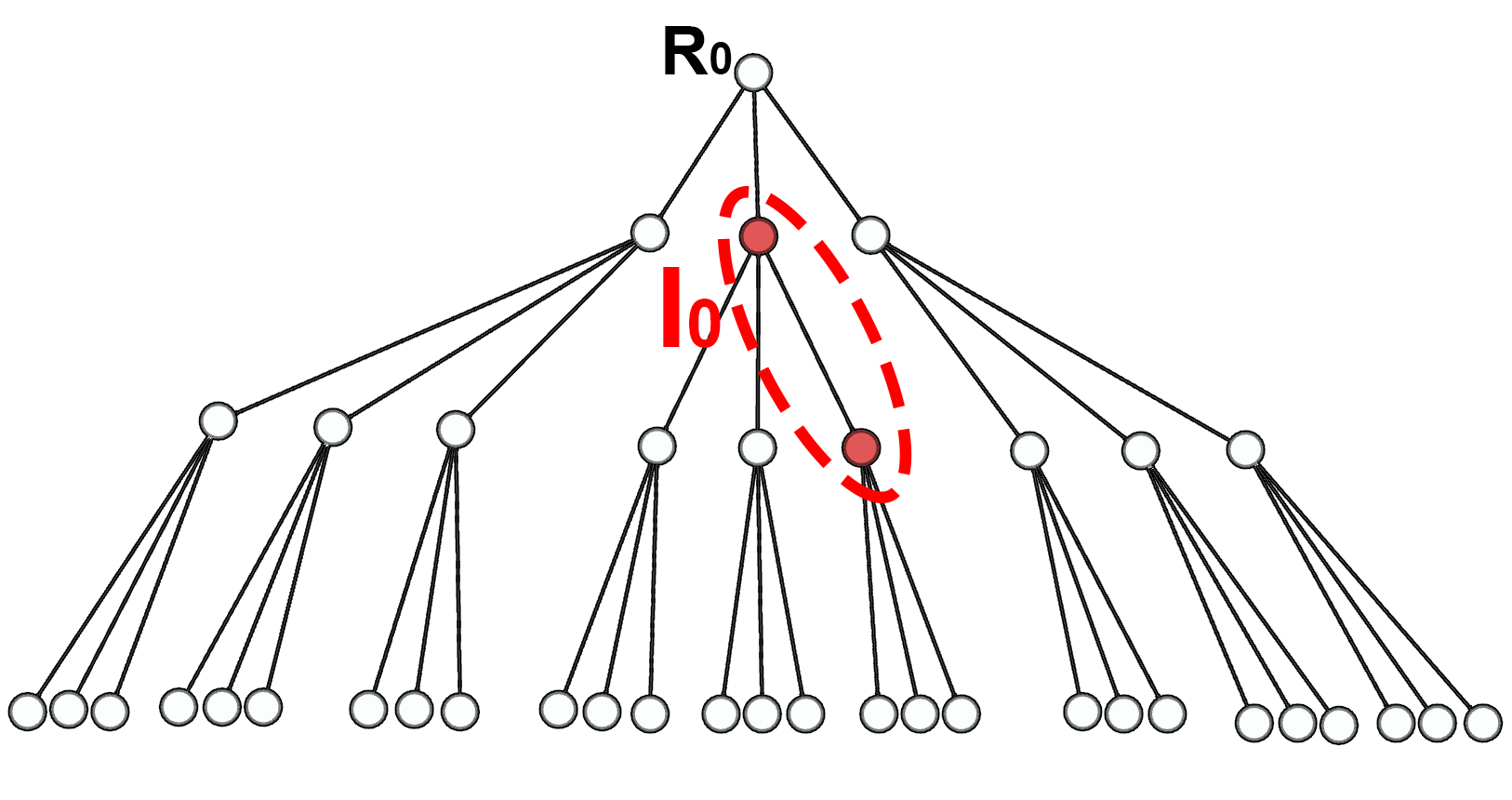}
				\caption{}
			\end{subfigure}
			\begin{subfigure}{0.3\textwidth}
				\centering
				\includegraphics[width=0.9\linewidth]{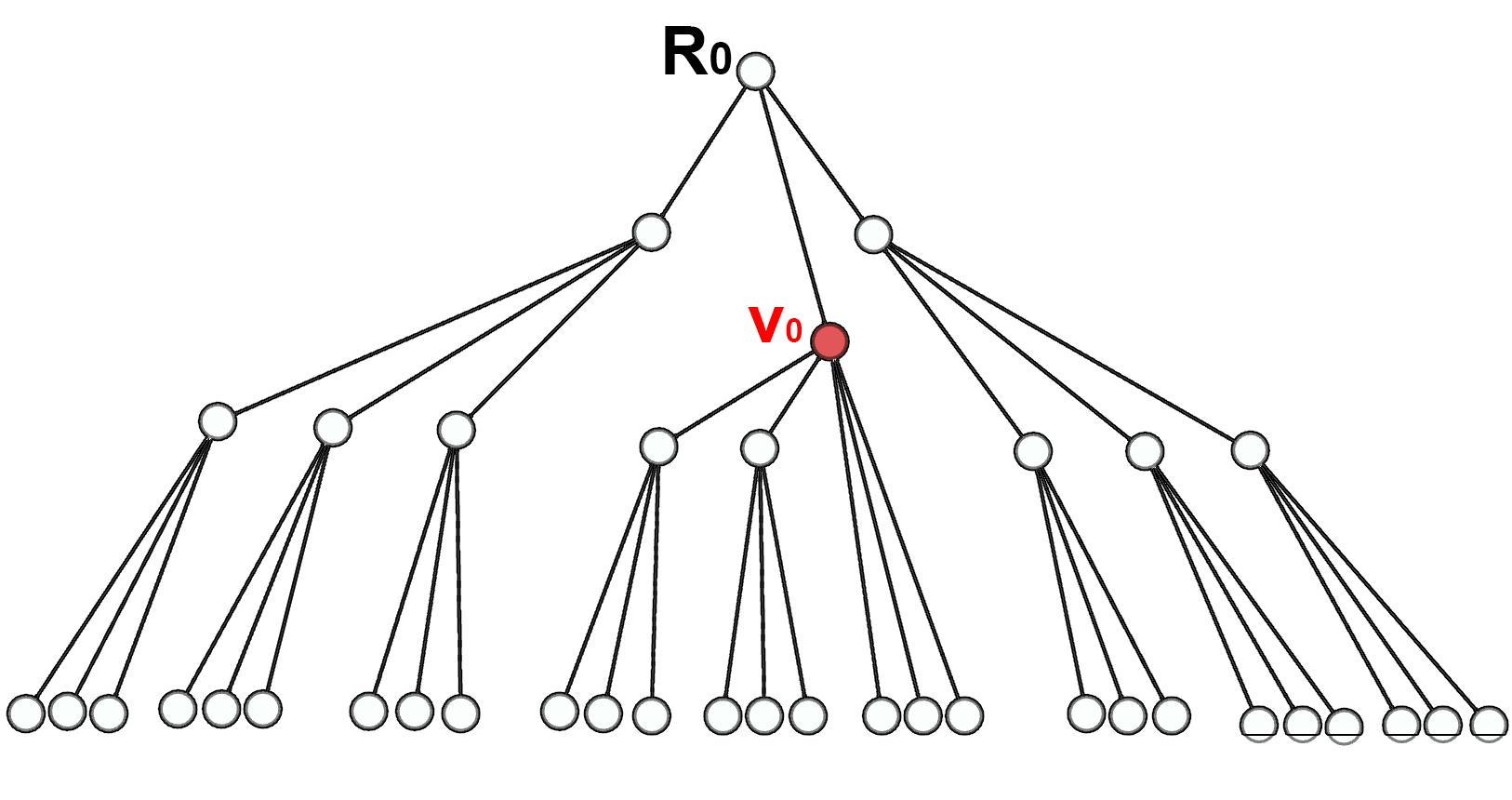}
				\caption{}
			\end{subfigure}
			\begin{subfigure}{0.3\textwidth}
				\centering
				\includegraphics[width=0.9\linewidth]{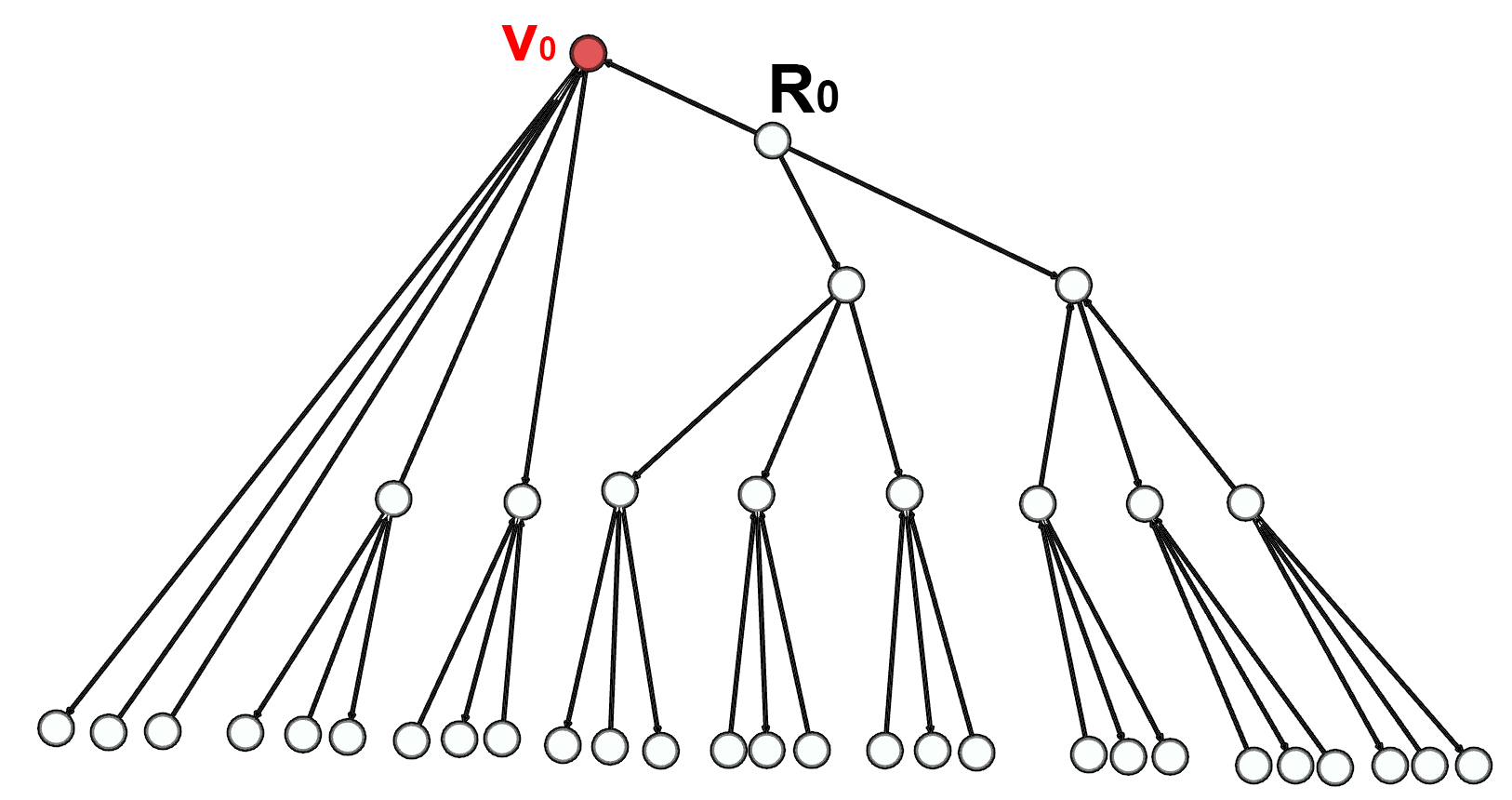}
				\caption{}
			\end{subfigure}
			\caption{ The creation of $T'$ for $R_0 = root$ and $d = 3$. (a) Original tree $T$. (b) Merge $I_0$ to one node $v_0$. (c) Visualization of new tree with $v_0$ as its root. }
			\label{markerTtag}
		\end{figure}
%\endproof
\end{proof}
	
Theorems \ref{optimalTreeThm} and \ref{optimalTreePolicyThm} give us optimal policies on trees, but do not tell us the actual expected loss, nor do they give us estimates on the needed budget for containment. In order to obtain bounds on these quantities, and also to study general (non-tree) topologies, we develop the concept of growth rates in the next section.

\section{Growth Rate} \label{growthRateSection}
\label{GrowthRateSection}
This section develops the required machinery for developing upper and lower bounds on the optimal loss. These bounds enable us to answer questions regarding containment objective. We define the notion of the growth rate. Informally, the growth rate of an infection at time $t$ is the expected cardinality growth of $I_t$ between time $t$ and $t+1$. That is, if we denote by $GR_t$ the growth rate at time $t$, then
\begin{equation}
\label{eq:gr}
\E \giventhat{\abs{I_{t+1}}}{s_t} = \abs{I_t} + GR_t.
\end{equation}
 The growth rate thus changes at each time step $t$. Intuitively, a good policy would (1) maintain a small growth rate throughout the propagation of the infection and (2) bring the growth rate to zero as quickly as possible. 
 We define the notion of the growth rate formally below.
 \begin{definition}
 	The growth rate of a set $A$ to a set $B$ for infection speed $p$ is defined by
 	\begin{equation}
 	\label{eq:gr1}
 	\GR{A,B}{p} = \sum_{v \in B} \left(1-(1-p)^{\cut{A,v}}\right).
 	\end{equation}
 	The growth rate of a set $A$ is defined using equation (\ref{eq:gr1}) as 
 	$
 	\GR{A}{p} = \GR{A,N(A)}{p}
 	$.
 	Equivalently, we define the growth rate of a state $s = (I,B)$ to be
 	$
 	\GR{s}{p} = \GR{I,N(s)}{p}
 	$.
 \end{definition}
\noindent In order to get a better understanding of the growth rate, we note the following simple lemma, which states that the number of neighbors and cut of a state behave as lower and upper bounds of the growth rate of that state, respectively.
\begin{lemma}
	\label{GRProperty}
	Let $s = (I,B)$ be a state. Then
	$\abs{N(I)\cap B}p \leq \GR{I,B}{p} \leq \min \left\{ \cut{I,B}p,1 \right\}$.
\end{lemma}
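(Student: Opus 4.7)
The plan is to work directly with the sum defining $\GR{I,B}{p} = \sum_{v \in B} \left(1 - (1-p)^{\cut{I,v}}\right)$ and handle the two inequalities term-by-term, using elementary facts about the function $k \mapsto 1 - (1-p)^k$ on the nonnegative integers.

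For the lower bound, I would partition $B$ into $N(I) \cap B$ and $B \setminus N(I)$. For $v \in B \setminus N(I)$, we have $\cut{I,v} = 0$, so the summand vanishes. For $v \in N(I) \cap B$, we have $\cut{I,v} \geq 1$, and since $(1-p)^k$ is nonincreasing in $k$ for $p \in [0,1]$, the corresponding summand satisfies
\begin{equation*}
1 - (1-p)^{\cut{I,v}} \;\geq\; 1 - (1-p) \;=\; p.
\end{equation*}
Summing over the $\abs{N(I)\cap B}$ surviving terms yields $\GR{I,B}{p} \geq \abs{N(I)\cap B}\, p$.

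For the upper bound, I would apply Bernoulli's inequality $(1-p)^k \geq 1 - kp$ (valid for integer $k \geq 0$ and $p \in [0,1]$), which gives $1 - (1-p)^{\cut{I,v}} \leq \cut{I,v}\, p$ for every $v$. Summing and invoking the definition $\cut{I,B} = \sum_{v \in B} \cut{I,v}$ yields $\GR{I,B}{p} \leq \cut{I,B}\, p$. Since each summand is a probability, it is trivially at most $1$, which accounts for the other side of the $\min$. There is no serious obstacle — the proof is a direct term-wise estimate. The only careful point is the partition of $B$ that ensures the lower bound depends on $\abs{N(I)\cap B}$ rather than $\abs{B}$, so that nodes in $B$ with no neighbor in $I$ contribute nothing and therefore do not weaken the estimate.
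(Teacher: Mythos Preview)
Your proposal is correct and follows essentially the same term-wise argument as the paper: both use $1-(1-p)^k \le \min\{pk,1\}$ for the upper bound and the partition $B = (N(I)\cap B)\cup(B\setminus N(I))$ together with $1-(1-p)^k \ge p$ for $k\ge 1$ for the lower bound. One minor caveat, which the paper's own proof also glosses over: the fact that each summand is at most $1$ yields only $\GR{I,B}{p}\le |N(I)\cap B|$, not $\le 1$, so the ``$1$'' in the $\min$ should be understood term-wise (as $\sum_{v}\min\{p\,\cut{I,v},1\}$) rather than as a bound on the whole sum.
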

%\proof{Proof of Lemma \ref{GRProperty}}: 
\begin{proof}
For any $p \in \left[0, 1\right]$ and $k \in \mathbb{N}$, 
		\begin{equation*}
		1 - (1-p)^k \leq \min \left\{ pk, 1\right\}.
		\end{equation*}
		Then
		\begin{equation*}
		\GR{I,B}{p} =  \sum_{v \in B} \left(1-(1-p)^{\cut{I,v}}\right) \leq  \sum_{v \in B} \min \left\{ p\cut{I,v}, 1\right\}  = \min \left\{ \cut{I,B} p, 1 \right\}.
		\end{equation*}
		
		On the other hand, for any $v \in N(I)$,
		\begin{equation*}
		\cut{I,v} \geq 1.
		\end{equation*}
		Also, if $v \notin N(I)$ then $\cut{I,v} = 0$.
		Then for any $p \in \left[0, 1\right]$,
		\begin{align*}
		&\GR{I,B}{p} = \sum_{v \in B} \left(1-(1-p)^{\cut{I,v}}\right) = \sum_{v \in B\cap N(I)} \left(1-(1-p)^{\cut{I,v}}\right)  \\
		&\geq \sum_{v \in B\cap N(I)} \left(1-(1-p)\right) = \abs{B\cap N(I)} p.
		\end{align*} 
%\Halmos
%\endproof
\end{proof}
 \noindent The growth rate of a state $s = (I,B)$, $\GR{s}{p}$, takes into account the vaccinated set of nodes $B$. It can be useful to write it using the growth rate for sets in the following way:
 \begin{align}
 \label{GRSeparationEq}
 &\GR{s}{p} = \sum_{v \in N(s)} \left(1-(1-p)^{\cut{I,v}}\right) = \sum_{v \in N(I)\backslash B} \left(1-(1-p)^{\cut{I,v}}\right)  \nonumber \\
 &= \sum_{v \in N(I)} \left(1-(1-p)^{\cut{I,v}}\right) - \sum_{v \in B} \left(1-(1-p)^{\cut{I,v}}\right) = \GR{I}{p} - \GR{I,B}{p}.
 \end{align}
 
\subsection{Maximal and Minimal Growth Rates over Upward Crusades}
\label{UpwardCrusadesSection}

\begin{figure}[t!]
\centering
\begin{subfigure}{0.2\textwidth}
	\centering
	\includegraphics[width=0.9\linewidth]{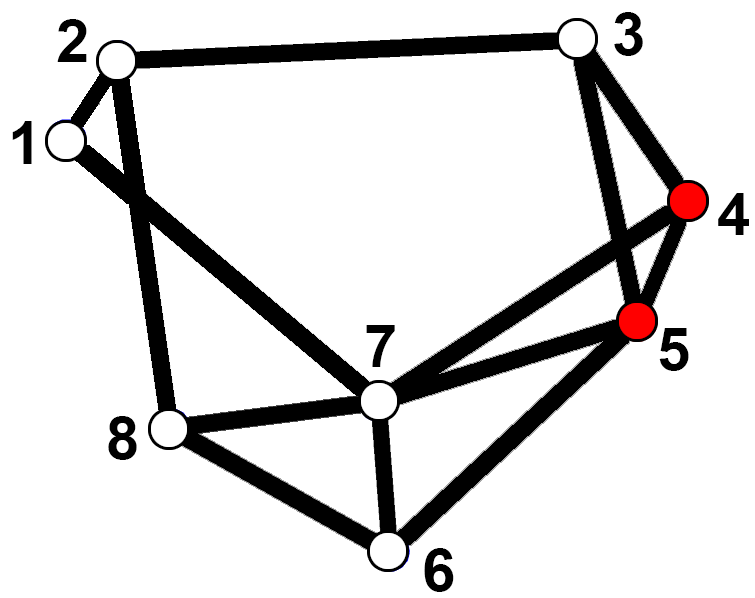}
	\caption{}
\end{subfigure}
\begin{subfigure}{0.2\textwidth}
	\centering
	\includegraphics[width=0.9\linewidth]{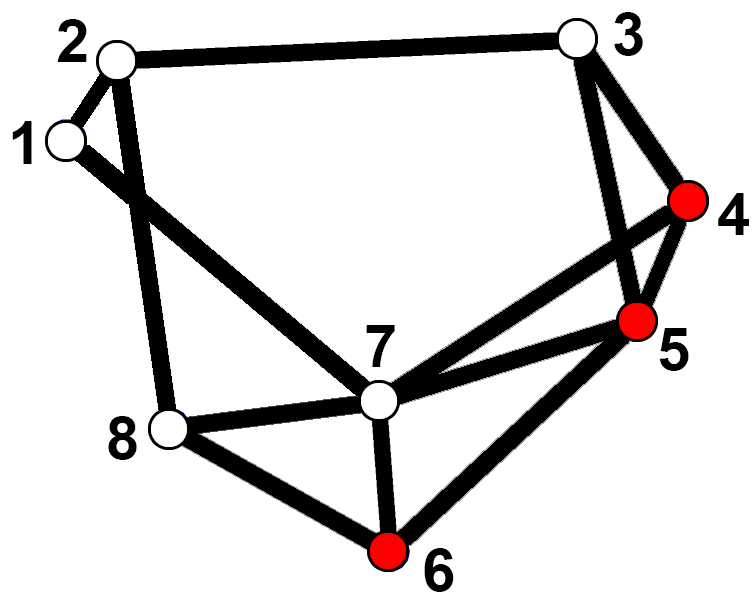}
	\caption{}
\end{subfigure}
\begin{subfigure}{0.2\textwidth}
	\centering
	\includegraphics[width=0.9\linewidth]{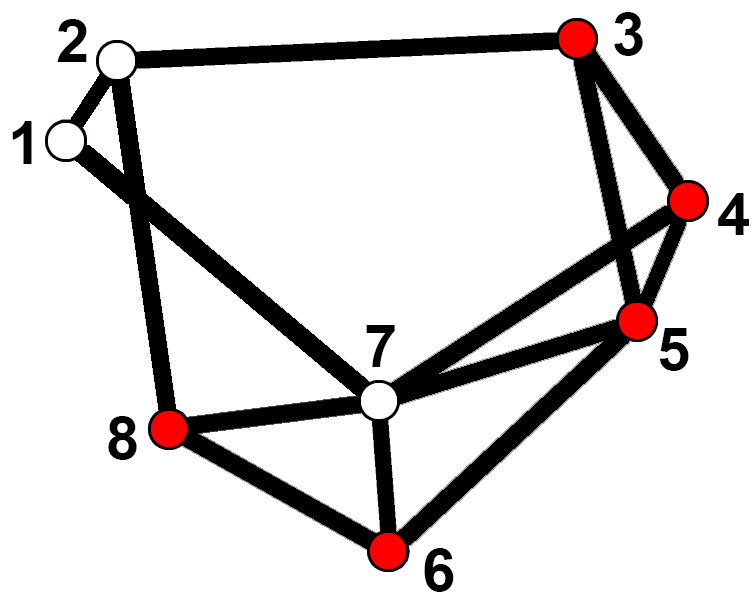}
	\caption{}
\end{subfigure}
\begin{subfigure}{0.2\textwidth}
	\centering
	\includegraphics[width=0.9\linewidth]{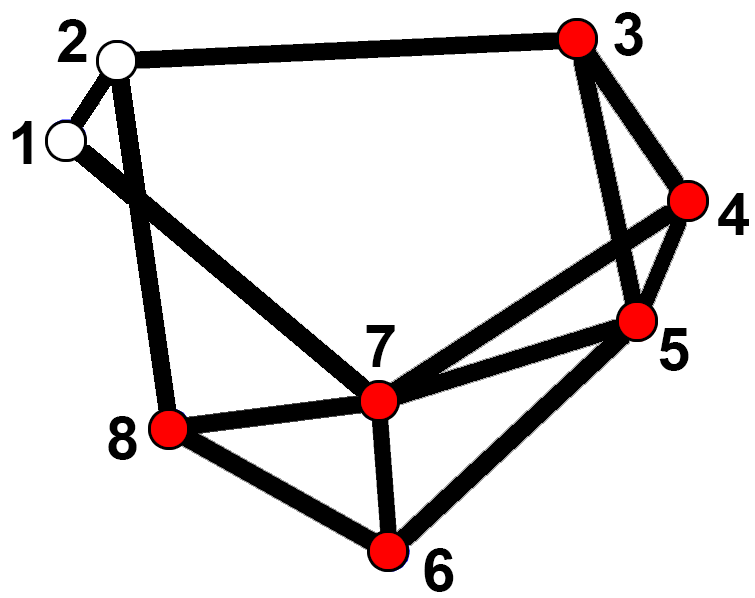}
	\caption{}
\end{subfigure}
\begin{subfigure}{0.2\textwidth}
	\centering
	\includegraphics[width=0.9\linewidth]{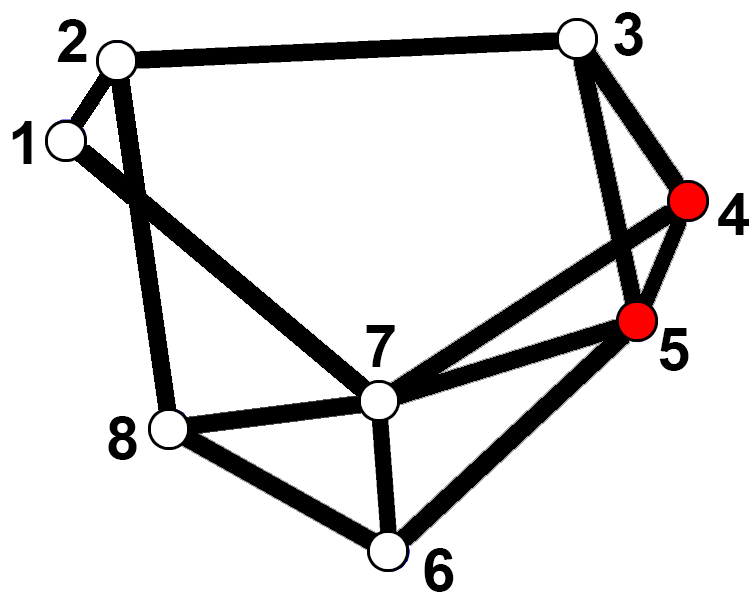}
	\caption{}
\end{subfigure}
\begin{subfigure}{0.2\textwidth}
	\centering
	\includegraphics[width=0.9\linewidth]{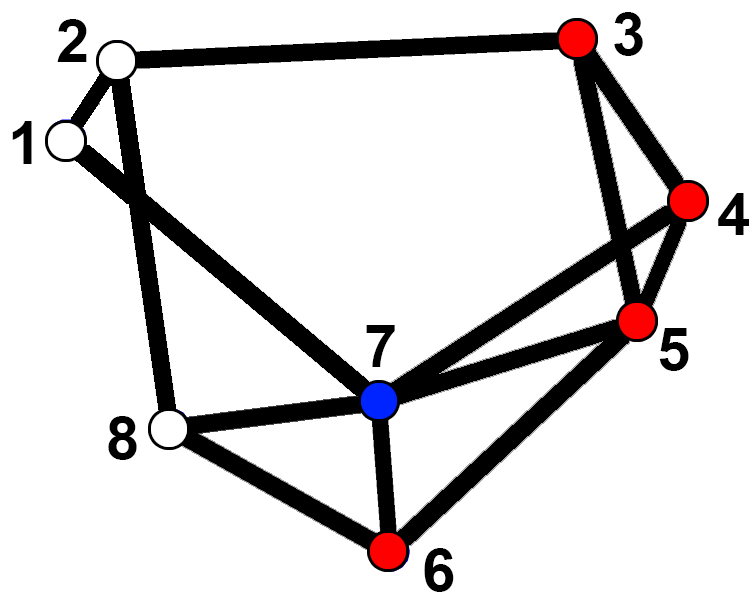}
	\caption{}
\end{subfigure}
\begin{subfigure}{0.2\textwidth}
	\centering
	\includegraphics[width=0.9\linewidth]{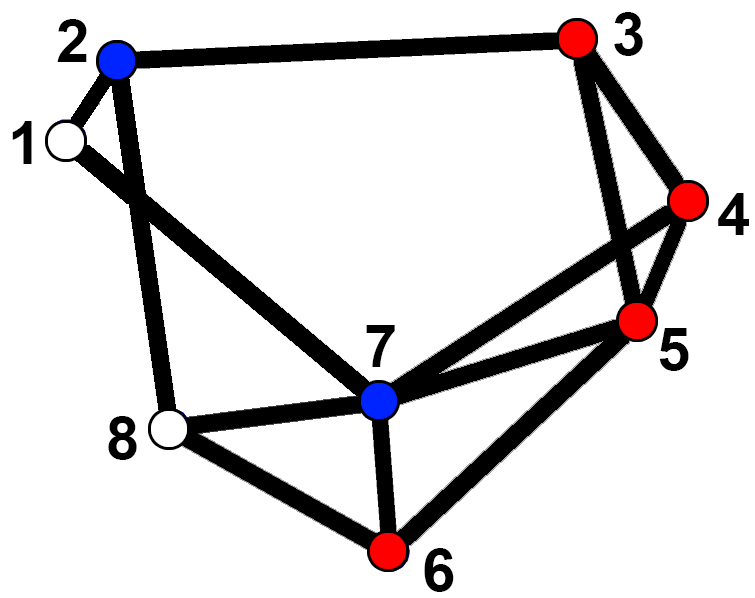}
	\caption{}
\end{subfigure}
\begin{subfigure}{0.2\textwidth}
	\centering
	\includegraphics[width=0.9\linewidth]{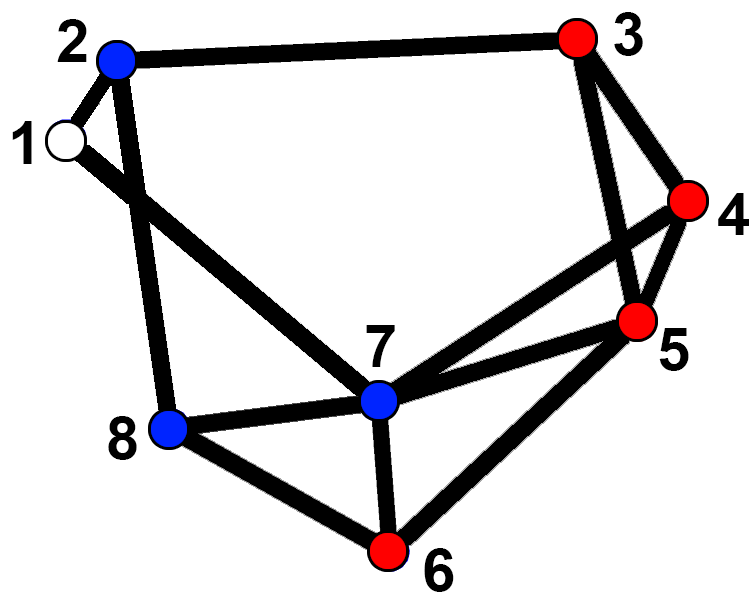}
	\caption{}
\end{subfigure}
\caption{ (a-d) An example of an upward crusade $u \in \UC{A}{3}$, where $A = \left\{ 4,5\right\}$. The sequence depicted is ${u = \left(  \left(\left\{ 4,5\right\},\emptyset \right), \left(\left\{ 4,5,6\right\},\emptyset \right),  \left(\left\{ 3,4,5,6,8\right\},\emptyset \right),  \left(\left\{ 3,4,5,6,7,8\right\},\emptyset \right)\right)}$. Throughout the whole sequence, $b=0$ and $B_t = \emptyset$.  (e-h) An example of an upward crusade $u \in \UC{\pi,s}{3}$ of some policy $\pi \in \Pi$ under a budget of $b = 1$, and $s = (I,\emptyset)$, where $I = \left\{ 4,5\right\}$. The sequence depicted is ${u = \left(  \left(\left\{ 4,5\right\}, \left\{\emptyset\right\} \right), \left(\left\{ 3,4,5,6\right\}, \left\{7 \right\} \right),\left(\left\{ 3,4,5,6\right\}, \left\{2,7 \right\} \right),\left(\left\{ 3,4,5,6\right\}, \left\{2,7,8 \right\} \right) \right)}$.  Note that between states $s_2$ and $s_3$, the infected set does not change.}
\label{markerTtag}
\end{figure}

Growth rates give us a tool for measuring the expected growth of an infection at a time $t$, but do not take into account the growth rate in future states. In order to build better structural policies, we consider the full propagation of the epidemic using the notion of maximal and minimal growth rates. These are merely parameters that tell us how large or how small the growth rate can be under a specific policy in a given time window. To define these, we follow \cite{drakopoulos2014efficient}, and define upward crusades as sequences of possible states under a given policy. We then define the maximal and minimal growth rates as the worst and best cases of growth rates over all possible sequences. 
 
An upward crusade is a sequence of infection states which follow all possible realizations of infections starting at some initial state $s_0$ under some policy $\Pi_{Sb}$. Upward crusades are important as they enable us to define maximal and minimal growth rates, which then enable us to obtain bounds on the optimal loss. 

\begin{definition}
For a state $s = (I,B)$ and policy $\Pi_{Sb}$, an upward crusade of length $k$ is a sequence $u = \left(s_0,\hdots,s_k\right)$ of $k+1$ pairs (states), $s_i = \left( I_i, B_i \right)$, with the following properties: 
\begin{enumerate}
	\item $s_0 = s$
	\item $I_i \backslash I_{i-1} \subseteq N(I_{i-1})\backslash B_i$, for $i = 1\hdots k$
	\item $B_i \backslash B_{i-1} = \pi(s_{i-1})$, for $i = 1\hdots k$
\end{enumerate}
We denote the final state of the sequence by $u_k$. We also denote the set of all upward crusades of length $k$ initiating at a state $s$ by $\UC{\pi,s}{k}$.
Finally, we denote the set of all possible upward crusades of length $k$ for the set of policies $\Pi_0$ by $\UC{\Pi_0,s}{k}$, that is
\begin{equation*}
\UC{\Pi_0,s}{k} = \bigcup_{\pi \in \Pi_0} \UC{\pi,s}{k}.
\end{equation*}
\end{definition}
It is sometimes helpful to consider upward crusades that do not vaccinate any nodes. These upward crusades depict worst case instances of infection states in the graph under arbitrary policies. When we wish to consider such upward crusades, we assume $b = 0$, and denote them by $\UC{I}{k}$, where here we omit $\pi$ (since $b = 0$), and use $I$ to denote the initial set of infected nodes. Figure \ref{markerTtag} depicts examples of upward crusades on a simple graph.

\begin{remark}[Inclusion of upward crusades]
\label{UCInclusionRemark}
Note that for any state $s = (I_0,B_0)$ and $u \in \UC{\pi,s}{k}$, with $u_i = (I_i,B_i)$ there exists $w \in \UC{I_0}{k}$ such that $w=(I_0,I_1, \hdots, I_k)$.
\end{remark}
%\newpage
\noindent The following lemma is an important monotonicity property of upward crusades.
\begin{lemma}[Monotonicity in $k$]
Suppose $k_2 \geq k_1$. Let $u \in \UC{I}{k_1}$. Then there exists $w \in \UC{I}{k_2}$ 

such that $u \subseteq w$ and $u_{k_1} = w_{k_2}$.
\end{lemma}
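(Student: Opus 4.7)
The statement is a padding lemma: we want to show any upward crusade of length $k_1$ can be extended to one of length $k_2 \geq k_1$ by appending states, while keeping the final state unchanged. The plan is purely constructive, exploiting the fact that upward crusades in $\UC{I}{k}$ are built under the zero-budget assumption $b=0$, so every $B_i = \emptyset$ and the condition $I_i \setminus I_{i-1} \subseteq N(I_{i-1}) \setminus B_i$ collapses to $I_i \setminus I_{i-1} \subseteq N(I_{i-1})$. In particular, choosing $I_i = I_{i-1}$ is always permitted: no new infections need occur at any step.

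Concretely, I would write $u = (s_0, s_1, \ldots, s_{k_1})$ with $s_i = (I_i, \emptyset)$, and then define
\begin{equation*}
w = (s_0, s_1, \ldots, s_{k_1}, \underbrace{s_{k_1}, s_{k_1}, \ldots, s_{k_1}}_{k_2 - k_1 \text{ copies}}).
\end{equation*}
So $w$ is a sequence of $k_2 + 1$ states obtained by freezing the trajectory at its last state for the remaining $k_2 - k_1$ time steps. By construction $u \subseteq w$ (in the sense that $u$ is a prefix of $w$) and $w_{k_2} = s_{k_1} = u_{k_1}$, which are the two desired conclusions.

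It remains to verify that $w \in \UC{I}{k_2}$. I would check the three defining properties in order: (1) $w_0 = s_0 = (I, \emptyset)$ since $u$ starts at $(I,\emptyset)$; (2) for indices $i \leq k_1$ the inclusion $I_i \setminus I_{i-1} \subseteq N(I_{i-1})$ is inherited from $u$, and for $i > k_1$ we have $I_i = I_{i-1} = I_{k_1}$, so $I_i \setminus I_{i-1} = \emptyset \subseteq N(I_{i-1})$ trivially; (3) $B_i \setminus B_{i-1} = \emptyset$ throughout, consistent with $b=0$.

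The main ``obstacle'' here is really just bookkeeping: one must be careful to invoke the $b=0$ convention for $\UC{I}{k}$ so that the third upward-crusade condition is automatic, and to note that the second condition allows the no-growth transition $I_i = I_{i-1}$. No combinatorial work is needed; the lemma is essentially a tautology once the definitions are unpacked, and its role is to let later arguments compare upward crusades of different lengths on a common footing.
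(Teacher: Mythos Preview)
Your proposal is correct and matches the paper's own proof almost verbatim: both construct $w$ by taking the first $k_1+1$ states to be those of $u$ and then padding with $k_2-k_1$ copies of the final state (i.e., enforcing $I_i \setminus I_{i-1} = \emptyset$ for $i > k_1$). Your write-up is in fact more careful than the paper's, since you explicitly verify all three defining conditions of an upward crusade.
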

%\proof{Proof of Lemma}: 
\begin{proof}
Let $u \in \UC{I}{k_1}$. We build an upward crusade $w \in \UC{I}{k_2}$ as follows. The first $k_1 +1$ sets of the sequence $w$ are the same as $u$. In the final $k_2-k_1$ sets we follow the upward crusade satisfying $I_i \backslash I_{i-1} = \emptyset$. This in turn results in $u_{k_1} = w_{k_2}$. 
%\Halmos
%\endproof
\end{proof}
Maximal and minimal growth rates let us bound the expected loss of a policy. 
Using upward crusades, the maximal growth rate looks $k$ steps into the future, until the point when an infection of cardinality $c$ is reached under some policy $\Pi_{Sb}$. It then returns the worst case growth rate over all corresponding upward crusades. This enables us to bound the loss of the policy, as will be shown in Theorem \ref{MGRThm}. A similar idea follows for the minimal growth rate. We define these formally in the following definition.

\begin{definition} [Maximal and Minimal Growth Rates]
	Given a policy $\pi \in \Pi$, a state $s$, and the set 	of upward crusades of length $k$, $\UC{\pi,s}{k}$, the maximal and minimal growth rates are functions ${MGR, mgr: \left\{ 1, \hdots, \abs{V} \right\} \times \UC{\Pi,s}{k} \to \mathbb{R}_+}$, 
	defined by
	\begin{align*}
	&\MGR{c,\UC{\pi,s}{k}}{p} = \maxl_{\substack{u \in \UC{\pi,s}{k} \\ u_k \in S_{c} }} \GR{u_k}{p}, \text{ and} \\
	&\mgr{c,\UC{\pi,s}{k}}{p} = \minl_{\substack{u \in \UC{\pi,s}{k} \\ u_k \in S_{c} }} \GR{u_k}{p}, 
	\end{align*}
	
	\noindent where $S_{c} = \set{s = (I,B)}{ \abs{I} = c}$, and recall that $u_k$ denotes the final state in an upward crusade $u$. In other words, the maximum / minimum growth rates maximize the growth rate over all states that end upward crusades in $\UC{\pi,s}{k}$ and have infection cardinality $c$. 
	\end{definition}
	\begin{definition} [Expected Growth Rate]
	Given a policy $\pi \in \Pi$, a state $s$, and the set 	of upward crusades of length $k$, $\UC{\pi,s}{k}$, the expected growth rate is a function	
	${EGR: \UC{\Pi,s}{k} \to \mathbb{R}_+}$, 
	defined by
\begin{equation*}
	\EGR{c, \UC{\pi,s}{k}}{p} = \E^\pi \giventhat{ \GR{s_k}{p} } { \abs{I_k} = c, s_0 = s}  = \int_{\abs{I_k} = c} \GR{u_k}{p} dF_{u_k},
	\end{equation*}
	\noindent \mbox{where $u_k = (I_k, B_k)$ is the random variable of the final state in upward crusades of $\UC{\pi,s}{k}$}, such that the integral is taken over final states with cardinality $\abs{I_k} = c$.
	\end{definition}
	It is helpful to consider maximal, minimal, and expected growth rates of an empty policy which does not vaccinate any nodes. We will denote these using upward crusades $\UC{I}{k}$ in place of $\UC{\pi,s}{k}$ to emphasize the fact the maximum/minimum/expectation is taken over crusades which do not vaccinate any nodes (i.e., $b = 0$). This notation is useful for finding bounds for specific topologies (see Section \ref{containmentSection}).
%\begin{definition} [Type III stop rates]
%	Given a policy $\Pi_{Sb}$, a state $s$, and the set of upward crusades 
%	
%	$\UC{\pi,s}{k}$, the maximal and minimal stop rates are functions 
%	${MSR, msr: \left( \left\{ 1, \hdots, \abs{V} \right\}, \UC{\pi,s}{k} \right)\to \mathbb{R}_+}$, 
%	
%	defined by
%	\begin{align*}
%	&\MSR{c,\UC{\pi,s}{k}}{p} = \maxl_{u \in \UC{\pi,s}{k}} \maxl_{s_k \in S_{u,c}} \GR{I_k,B_k}{p}, \text{and} \\
%	&\msr{c,\UC{\pi,s}{k}}{p} = \minl_{u \in \UC{\pi,s}{k}} \minl_{s_k \in S_{u,c}} \GR{I_k,B_k}{p},
%	\end{align*}
%	
%	where $S_{u,c} = \set{s_k = (I_k,B_k)}{s_k = u_k, \abs{I_k} = c}$.
%\end{definition}

%\noindent By definition of the maximal and minimal growth rate functions, it is easy to see that for any set $A$
%\begin{equation*}
%\MGR{c,\UC{A}{k}}{p} \leq \MGR{c}{p} ,
%\end{equation*}
%and
%\begin{equation*}
%\mgr{c,\UC{A}{k}}{p} \geq \mgr{c}{p}.
%\end{equation*}

\begin{figure}[t!]
\centering
\begin{subfigure}{0.4\textwidth}
	\centering
	\includegraphics[width=0.9\linewidth]{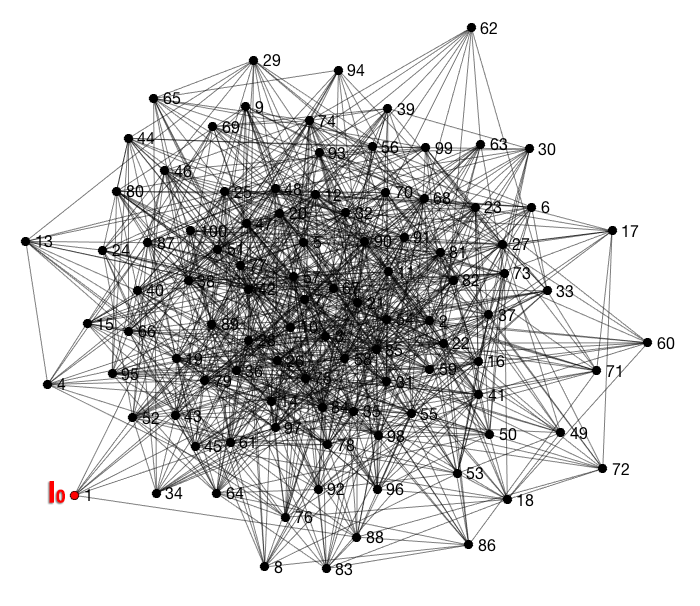}
	\caption{}
\end{subfigure}
\begin{subfigure}{0.4\textwidth}
	\centering
	\includegraphics[width=0.9\linewidth]{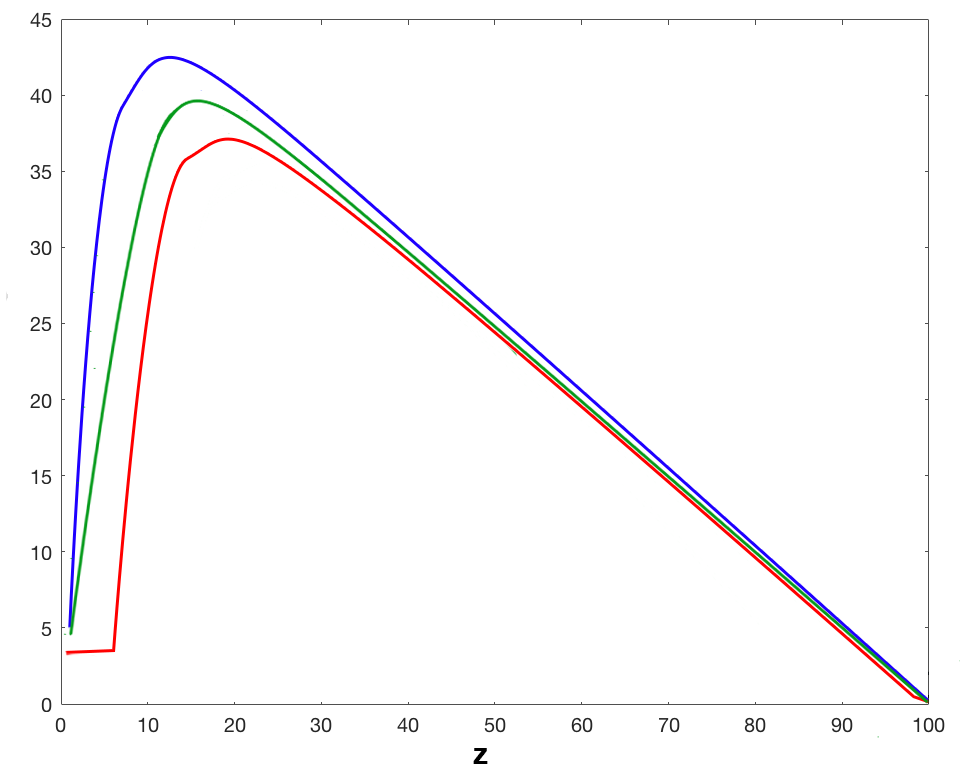}
	\caption{}
\end{subfigure}
\caption{ (a) A sample graph of an Erd\H{o}s-R\'{e}nyi model $G(n,s)$, with parameters $n = 100, s = 0.1$ with diameter 3. \\
(b) Plots of $\MGR{z,\UC{I_0}{100}}{\frac{1}{2}}$ (in blue), $\mgr{z,\UC{I_0}{100}}{\frac{1}{2}}$ (in red), and $\EGR{z,\UC{I_0}{100}}{\frac{1}{2}}$ (in green) as functions of $z$, with $I_0 = \left\{ 1 \right\}$. Estimates were made using Monte Carlo simulations (see Section \ref{stateDependentBudgetSection} for details). }
\end{figure}

\newpage
\subsection{Upper and Lower Bounds}
\label{BoundsSection}

Calculating the loss of a policy as well as the optimal loss can be done using Monte Carlo simulations. Finding analytical expressions, though, is in most cases not possible. To overcome this problem, we look for a method of bounding the loss of a policy. Theorem \ref{MGRThm} explicitly calculates upper and lower bounds for the loss. These bounds allow us to obtain specific conditions for containment, as we explain in Section \ref{containmentSection}.

Let $\Graph{V}{E}$ be a graph on $n$ nodes, and let $s_0$ be an initial state. We wish to bound from above the loss of a policy $\pi \in \Pi$. We define a process $M_t$ as the expected cardinality of the infection at time $t$ given the initial state and the policy, that is
\begin{equation*}
M_t^\pi =  \E^{\pi} \giventhat{\abs{I_t}}{s_0}.
\end{equation*}
$M_t^\pi$ is directly linked to the loss of a policy $\pi$. The loss of a policy $\pi$ can be written in terms of $M_t^\pi$ as $L^{\pi}(s_0) = M_{T^*}^\pi$.  Equation (\ref{eq:gr}) gives an interpretation of the growth rate which we can leverage to calculate $M_t^\pi$ and consequently obtain $\Loss{\pi}{s_0}$. This calculation is in general intractable. Thus, we turn to maximal, minimal, and expected growth rates to bound $M_t^\pi$ from below or above (Section \ref{stateDependentBudgetSection} also gives a method of approximating these bounds). Our main approach includes defining recursion relations which lower/upper bound the process $M_t^\pi$. We define the functions
\begin{align}
&\phi^\pi_1(c,k) = \MGR{c,\UC{\pi,s_0}{k}}{p}, \label{MGR} \\
&\phi^\pi_2(c,k) = \mgr{c,\UC{\pi,s_0}{k}}{p} \label{mgr} \text{, and } \\
&\phi^\pi_3(c,k) = \EGR{c,\UC{\pi,s_0}{k}}{p}. \label{EGR}
\end{align}
We use linear interpolation on the cardinality parameter $c$, and define the following recursion relations ($1 \leq i \leq 3$):
\begin{align}
&X_i(k+1) = X_i(k) + \phi^\pi_i\pth{X_i(k),k} &,k \geq 0, \abs{I_0} \leq X_i(k) \leq \theta  \nonumber \\ 
&X_i(0) = \abs{I_0}, \label{recursion}
\end{align}
where $\theta \in \mathbb{N} \cup \left\{ \infty \right\}$ is a given parameter.

We denote by $k_i$ the maximal $k$ in which the conditions of recursion (\ref{recursion}) are satisfied, namely, ${k_i = \max \set{k \geq 0}{X_i(k) \leq \theta}}$. The following theorem gives general bounds on the process $M_t^\pi$.

\begin{theorem}[Bounds]
\label{MGRThm}
Let $M_t$, $X_i(k)$ be defined above.
\begin{enumerate}
	\item Suppose $\phi^\pi_1(z,k)$ is concave and monotone non-decreasing in $z$, ${\forall (z,k) \in \left[ \abs{I_0} ,\theta \right] \times \left\{ 0, \hdots, k_1 \right\} }$. Then $M_t^\pi \leq X_1(t)$ for any $t \in \left\{ 0, \hdots, k_1 \right\}$.
	\item Suppose $\phi^\pi_2(z,k)$ is convex and monotone non-decreasing in $z$, ${\forall (z,k) \in \left[ \abs{I_0} ,\theta \right] \times \left\{ 0, \hdots, k_2 \right\}}$. 	Then $M_t^\pi \geq X_2(t)$ for any $t \in \left\{ 0, \hdots, k_2 \right\}$.
%		\item $M_t^\pi = X_3(t)$ for any $t \in \left\{ 0, \hdots, k_3 \right\}$.
		\item Suppose $\phi^\pi_3(z,k)$ is concave (convex) and monotone non-decreasing in $z$, \\
		${\forall (z,k) \in \left[ \abs{I_0} ,\theta \right] \times \left\{ 0, \hdots, k_3 \right\} }$. \\
		Then $M_t^\pi \leq X_3(t)$ ($M_t^\pi \geq X_3(t)$) for any $t \in \left\{ 0, \hdots, k_3 \right\}$.
\end{enumerate}
\end{theorem}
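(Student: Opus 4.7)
The plan is to prove all three parts by induction on $t$, driven by the tower-property decomposition
$M_{t+1}^{\pi} = M_t^{\pi} + \mathbb{E}^{\pi}\!\left[\GR{s_t}{p} \mid s_0\right]$,
which follows directly from equation (\ref{eq:gr}). The base case $t=0$ is trivial since $M_0^{\pi} = |I_0| = X_i(0)$ by definition. For the inductive step, the key observation is that any state $s_t$ reached under $\pi$ from $s_0$ is the terminal state $u_t$ of some upward crusade in $\UC{\pi,s_0}{t}$; hence conditioning on the cardinality $|I_t|=c$ yields $\GR{s_t}{p} \le \MGR{c,\UC{\pi,s_0}{t}}{p} = \phi_1^{\pi}(c,t)$ pointwise (and analogously $\geq \phi_2^{\pi}(c,t)$). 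For case (3) the conditional expectation is literally $\phi_3^{\pi}(c,t)$ by definition.

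For case (1), taking expectations and using the tower property gives
$\mathbb{E}^{\pi}\!\left[\GR{s_t}{p}\right] \le \mathbb{E}^{\pi}\!\left[\phi_1^{\pi}(|I_t|,t)\right] \le \phi_1^{\pi}(M_t^{\pi},t)$,
where the second step is Jensen's inequality applied to the concave function $z \mapsto \phi_1^{\pi}(z,t)$ (extended to $[\abs{I_0},\theta]$ by linear interpolation). Monotonicity of $\phi_1^{\pi}$ in $z$ together with the inductive hypothesis $M_t^{\pi} \le X_1(t)$ then gives $\phi_1^{\pi}(M_t^{\pi},t) \le \phi_1^{\pi}(X_1(t),t)$. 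Combining,
$M_{t+1}^{\pi} \le M_t^{\pi} + \phi_1^{\pi}(X_1(t),t) \le X_1(t) + \phi_1^{\pi}(X_1(t),t) = X_1(t+1),$
completing the induction as long as $t+1 \le k_1$ (so the values stay in the domain where the hypotheses apply).

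Case (2) is symmetric: convexity of $\phi_2^{\pi}$ reverses Jensen's inequality, and the direction of the min bound matches, yielding $M_{t+1}^{\pi} \ge X_2(t+1)$ for $t+1 \le k_2$. Case (3) splits on the concave/convex assumption on $\phi_3^{\pi}$; here no ``worst-case crusade'' bound is needed because the conditional expectation of the growth rate is exactly $\phi_3^{\pi}(|I_t|,t)$, so Jensen's and monotonicity give the upper (resp. lower) bound directly.

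The main obstacle is not any deep computation but the bookkeeping that makes the chain valid: one must check that (i) the restriction $M_t^{\pi} \in [\abs{I_0},\theta]$ holds throughout the induction (guaranteed by the choice of $k_i$ and the monotonicity of the infection process), (ii) the linear interpolation used to extend $\phi_i^{\pi}$ to real arguments preserves the concavity/convexity and monotonicity hypotheses in the form needed for Jensen, and (iii) the sum $z \mapsto z + \phi_i^{\pi}(z,t)$ inherits monotonicity from $\phi_i^{\pi}$, which is what allows the induction hypothesis $M_t^{\pi} \lessgtr X_i(t)$ to propagate through the recursion in equation (\ref{recursion}).
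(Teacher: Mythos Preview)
Your proposal is correct and follows essentially the same route as the paper: induction on $t$, the one-step decomposition $M_{t+1}^\pi = M_t^\pi + \E^\pi[\GR{s_t}{p}\mid s_0]$, the pointwise bound $\GR{s_t}{p} \le \phi_1^\pi(|I_t|,t)$ (resp.\ $\ge \phi_2^\pi$, $= \phi_3^\pi$) coming from the definitions of MGR/mgr/EGR, Jensen's inequality for the concave/convex hypothesis, and monotonicity of $\phi_i^\pi$ together with the induction hypothesis to close the recursion. Your additional bookkeeping remarks about the domain $[|I_0|,\theta]$ and the linear interpolation are points the paper leaves implicit, so your write-up is, if anything, slightly more careful.
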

%\proof{Proof of Theorem \ref{MGRThm}}: 
\begin{proof}
We prove by induction on  $t \in \left\{ 0, \hdots, k_i \right\}$. \\
	For $t = 0$:
	\begin{equation*}
	M_0^\pi = \E^{\pi} \giventhat{\abs{I_0}}{s_0} = \abs{I_0} = X_0.
	\end{equation*}
\begin{enumerate}
	\item 
	Assume that $M_k \leq X_k$ for some $t = k \leq k_1-1 $. By definition of $\phi^\pi_1$, the maximal growth rate function, and for any state $s_k \in u$ with $u \in \UC{\pi,s_0}{k}$,
	\begin{equation*}
	\E^{\pi} \giventhat{\abs{I_{k+1}}}{s_k} = \abs{I_k} + \GR{s_k}{p} \leq \abs{I_k} + \MGR{\abs{I_k},\UC{\pi,s_0}{k}}{p} = \abs{I_k} + \phi^\pi_1(\abs{I_k},k).
	\end{equation*}
	
	\noindent Next we bound $M_{k+1}$ by
	\begin{align*}
	&M_{k+1}^\pi = \E^{\pi} \giventhat{\abs{I_{k+1}}}{s_0} = \E^{\pi} \giventhat{\E^{\pi}  \giventhat{\abs{I_{k+1}}}{s_k,s_0}}{s_0} \leq 
	\E^{\pi} \giventhat{\abs{I_k} + \phi^\pi_1(\abs{I_k},k)}{s_0} = \\
	& = M_k^\pi + \E^{\pi} \giventhat{\phi^\pi_1(\abs{I_k},k)}{s_0} \stackrel{\text{(a)}}{\leq} M_k^\pi + \phi^\pi_1 \left(\E^{\pi} \giventhat{\abs{I_k}}{s_0},k \right) =
	M_k^\pi + \phi^\pi_1 \left(M_k^\pi,k \right) \stackrel{\text{(b)}}{\leq}\\
	&\stackrel{\text{(b)}}{\leq}  X_k + \phi^\pi_1 \left(X_k,k \right) = X_{k+1},
	\end{align*}
	where in inequality (a) we used concavity of $\phi^\pi_1(z,k)$ in $z$ for $k \in \left\{ 0, \hdots, k_1 \right\}$, and in inequality (b) monotonicity of $\phi^\pi_1(z,k)$ in $z$ for $k \in \left\{ 0, \hdots, k_1 \right\}$ as well as the induction step.
	\item The proof follows the same steps as in 1.	
	\item By definition of $\phi^\pi_3$, the expected growth rate function, and for any state $s_k \in u$ with $u \in \UC{\pi,s_0}{k}$,
	\begin{equation*}
	\E^{\pi} \giventhat{\abs{I_{k+1}}}{\abs{I_k}, s_0} = \abs{I_k} + \E^{\pi} \giventhat{\GR{s_k}{p}}{\abs{I_k} = z, s_0} = \abs{I_k} + \phi^\pi_3(\abs{I_k},k).
	\end{equation*}
	If $\phi^\pi_3$ is convex then assume that $M_k \leq X_k$ for some $t = k \leq k_3-1 $, then using the induction step and monotonicity of $\phi^\pi_3$ the result is obtained as in 1. If $\phi^\pi_3$ is concave then assume that $M_k \geq X_k$ for some $t = k \leq k_3-1 $, then using the induction step and monotonicity of $\psi^\pi_3$ the result is obtained as in 2.	%\Halmos
	\end{enumerate}
%\endproof
\end{proof}

\begin{remark}
\label{mgrRemark}
Suppose that
\begin{align*}
&\phi^\pi_3(z,k) \leq f(z,k), \forall (z,k) \in \left[ \abs{I_0} ,\theta \right] \times \left\{ 0, \hdots, k_3 \right\} \\
&\pth{\phi^\pi_3(z,k) \geq g(z,k),  \forall (z,k) \in \left[ \abs{I_0} ,\theta \right] \times \left\{ 0, \hdots, k_3 \right\} }
\end{align*}
for some function $f$ ($g$), such that $f$ is concave ($g$ is convex), monotone non-decreasing in $z$, ${\forall (z,k) \in \left[ \abs{I_0} ,\theta \right] \times \left\{ 0, \hdots, k_3 \right\} }$. Then Theorem \ref{MGRThm} can be applied by interchanging $\phi^\pi_3$ by $f$ ($g$), even when $\phi^\pi_3$ does not satisfy the convexity (concavity) condition of the theorem.
\end{remark}

%\newpage
\section{Containment}
\label{containmentSection}

In the previous section we provided a general method for bounding the loss $L^\pi$ when the maximal growth rate is concave or the minimal growth rate convex. Theorem \ref{MGRThm} offers a recursion, which once solved, outputs the bounds explicitly. In many graphs, as we show in Section \ref{explicitBounds}, both $MGR$ and $mgr$ are concave functions. Then, applying Theorem \ref{MGRThm} directly for calculating lower bounds is often not possible. In addition, the recursions defined in Equation (\ref{recursion}) can only be solved numerically. Therefore, in order to obtain explicit theoretical bounds and rules for containment, we bound $MGR$ and $mgr$ in Theorem \ref{containmentThm1} by piecewise linear functions. These allow us to then explicitly solve the recursion relations, calculate explicit bounds, as well as obtain upper and lower bounds for containment. In the second part of this section, we explicitly calculate these bounds on well known topologies including trees, d-dimensional grids, and Erd\H{o}s-R\'{e}nyi graphs.

Let $\Graph{V}{E}$ be a graph on $n$ nodes, and let $s_0 = (I_0, \emptyset)$ be an initial state. Define the function $l_{\alpha,\beta}$ for some parameters $\alpha, \beta \in \mathbb{R}_+$ (which may depend on $p$) as
\begin{equation*}
%\Loss{*}{I_0} \leq ~& \Loss{\FOAP}{I_0} \leq ~
l_{\alpha,\beta}(p,b,k) = ~
\frac{pb\left(k + 1\right) - \beta}{\alpha} +
\frac{pb}{\alpha^2} 
+ \left[\abs{I_0} - \left(\frac{pb-\beta}{\alpha} + \frac{pb}{\alpha^2} \right) \right]\left(1 + \alpha\right)^{k},
\end{equation*}
and let $k_{b,p}$ be defined by
\begin{equation*}
k_{b,p} = \min 
\left\{
k \in \left\{ 0,1,2,\hdots \right\} :
\frac{\alpha}{p}\left(\alpha\abs{I_0} + \beta\right)
\frac{\left(1+\alpha\right)^{k}}{\left(1+\alpha\right)^{k+1}-1} \leq b
\right\}
\end{equation*}
Here, $l_{\alpha, \beta}$ is the analytical solution to recursion \ref{recursion} when $\phi(z, k)$ is an affine function of the form \mbox{$\alpha z + \beta$}.
%\begin{equation*}
%\begin{aligned}
%& {\underset{k \in \left\{ 0,1,2,\hdots \right\}}{\text{minimize}}}
%& & k \\
%& \text{subject to}
%&& \frac{\alpha}{p}\left(\alpha\abs{I_0} + \beta\right)
%\frac{\left(1+\alpha\right)^{k}}{\left(1+\alpha\right)^{k+1}-1}-b \leq 0 .
%\end{aligned}
%\end{equation*}
The following Theorem is a main result of this paper that lets us calculate explicit containment bounds. The idea behind Theorem \ref{containmentThm1} is as follows. Suppose $MGR$ can be bounded from above by a linear function, and $mgr$ from below by another linear function. Applying Theorem \ref{MGRThm} on these obtains bounds for the loss. These bounds can be written using $l_{\alpha, \beta}$. This in turn enables us to get conditions for containing an infection in expectation. Specifically, we achieve upper and lower bounds for containment dependent on the bounds achieved. For brevity, we use the notation $\tilde{p} = \min \left\{ \Delta p, 1 \right\}$, where $\Delta$ is the maximal degree in $G$.
\begin{theorem}[Containment Bounds]
	\label{containmentThm1}
	Suppose $\exists \alpha,\beta,\gamma,\delta \in \mathbb{R}_+$ for which
	\begin{align}
	\label{GRCondition}
	&\MGR{z,\UC{I_0}{k}}{p} \leq \alpha z + \beta \quad, {\forall (z,k) \in \left[ \abs{I_0} ,\theta \right] \times \left\{ 0, \hdots, k_{b,p} \right\} }, \text{ and} \\
	\label{GRCondition2}
	&\mgr{z,\UC{I_0}{k}}{p} \geq \gamma z + \delta \quad, {\forall (z,k) \in \left[ \abs{I_0} ,\theta \right] \times \left\{ 0, \hdots, k_{b,\tilde{p}} \right\}. }
	\end{align} 
	
	Then
	\begin{equation*}
	b_\theta^U = 
	\begin{cases}
	\min \set{b}{l_{\alpha,\beta}(b, k_{b,p}) \leq \theta} & ,\theta < \infty \\
	\frac{\alpha}{p}\frac{\left(\alpha\abs{I_0} + \beta\right)}{1+\alpha} & ,\theta = \infty,
	\end{cases}
	\end{equation*}
	
	and
	\begin{equation*}
	b_\theta^L = 
	\begin{cases}
	\min \set{b}{l_{\gamma,\delta}(\tilde{p},b, k_{b,\tilde{p}}) \leq \theta} & ,\theta < \infty \\
	\frac{\gamma}{\tilde{p}}\frac{\left(\gamma\abs{I_0} + \delta\right)}{1+\gamma} & ,\theta = \infty
	\end{cases}
	\end{equation*}
	
	are a weak upper and lower bounds for containment, respectively. 
	
	Moreover, 
	\begin{align*}
	&\Loss{\pi}{I_0} \leq l_{\alpha,\beta}(p,b,k_{b,p}) ,\forall b \geq b_\theta^U, \text{ and} \\
	& \Loss{*}{I_0} \geq l_{\gamma,\delta}(\tilde{p},b,k_{b,\tilde{p}}) ,\forall b \geq b_\theta^L,
	\end{align*}
	
	where $\pi$ is some first-order policy.
\end{theorem}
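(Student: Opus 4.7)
The plan is to apply Theorem \ref{MGRThm} to the piecewise-linear hypotheses (\ref{GRCondition})-(\ref{GRCondition2}), then translate the resulting loss bounds into the containment statements via the definitions of $b_\theta^U$ and $b_\theta^L$.

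For the upper bound, fix a first-order policy $\pi^F$ and write $s_t'=(I_t,B_t\cup\pi^F(s_t))$ for the post-vaccination state. By the decomposition in Equation (\ref{GRSeparationEq}), $\GR{s_t'}{p}=\GR{I_t}{p}-\GR{I_t,B_t\cup\pi^F(s_t)}{p}$. A simple induction on $t$ shows that under any first-order policy $B_t\subseteq N(I_t)$, so the cumulative vaccinated set $B_t\cup\pi^F(s_t)$ consists of $b(t+1)$ nodes all lying in $N(I_t)$; Lemma \ref{GRProperty} then yields $\GR{I_t,B_t\cup\pi^F(s_t)}{p}\geq b(t+1)p$. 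Combining with hypothesis (\ref{GRCondition}) applied via Remark \ref{UCInclusionRemark} (so that $\GR{I_t}{p}\leq\alpha|I_t|+\beta$ on realized sequences under $\pi^F$) and taking outer expectation, I obtain the recursion
\begin{equation*}
M_{t+1}^{\pi^F}\leq(1+\alpha)M_t^{\pi^F}+\beta-pb(t+1),\quad M_0^{\pi^F}=|I_0|,
\end{equation*}
whose closed-form solution is precisely $l_{\alpha,\beta}(p,b,t)$.

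Symmetrically, for an arbitrary policy $\pi$ each vaccinated node has cut into $I_t$ at most $\Delta$, hence contributes at most $1-(1-p)^\Delta\leq\tilde{p}$ to $\GR{I_t,B_t\cup\pi(s_t)}{p}$, giving $\GR{I_t,B_t\cup\pi(s_t)}{p}\leq b(t+1)\tilde{p}$. Combining with hypothesis (\ref{GRCondition2}) through the same Remark yields
\begin{equation*}
M_{t+1}^{\pi}\geq(1+\gamma)M_t^{\pi}+\delta-\tilde{p}b(t+1),\quad M_0^{\pi}=|I_0|,
\end{equation*}
with closed-form solution $l_{\gamma,\delta}(\tilde{p},b,t)$. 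Since $I_t$ is monotone non-decreasing, so is $M_t^\pi$, and hence $\Loss{\pi}{I_0}=\lim_t M_t^\pi\geq M_{k_{b,\tilde{p}}}^\pi\geq l_{\gamma,\delta}(\tilde{p},b,k_{b,\tilde{p}})$ for every $\pi$, giving the same lower bound for $\Loss{*}{I_0}$. For $\theta<\infty$ the definitions of $b_\theta^U,b_\theta^L$ now immediately give $\Loss{\pi^F}{I_0}\leq\theta$ whenever $b\geq b_\theta^U$ and $\Loss{*}{I_0}>\theta$ whenever $b<b_\theta^L$. For $\theta=\infty$ I would read off the coefficient of $(1+\alpha)^k$ in $l_{\alpha,\beta}(p,b,k)$: it becomes non-positive exactly when $b\geq\alpha(\alpha|I_0|+\beta)/(p(1+\alpha))=b_\infty^U$, forcing $l_{\alpha,\beta}$ to stay bounded in $k$; symmetrically, $b<b_\infty^L$ makes the $(1+\gamma)^k$ coefficient in $l_{\gamma,\delta}$ strictly positive and the lower bound blows up.

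The main obstacle I expect is justifying the accumulated $(t+1)$ factor multiplying $bp$ (and $b\tilde{p}$) in the recursions. A naive one-step analysis would subtract only $bp$ per iteration, but the correct argument requires recognizing that under a first-order policy all previously vaccinated nodes remain inside $N(I_t)$ and continue to absorb growth-rate mass at every subsequent step; correspondingly, for the lower bound, every node ever vaccinated can cancel at most $\tilde{p}$ of growth per step regardless of policy. A secondary subtlety is the transition from ``$M_t$ controlled for $t\leq k_{b,p}$'' to ``$\lim_t M_t$ controlled'': the definition of $k_{b,p}$ via the ratio $(1+\alpha)^k/((1+\alpha)^{k+1}-1)$ is engineered precisely so that at $t=k_{b,p}$ the cumulative blocking in the recursion has overtaken the geometric growth, ensuring no further increase and thereby validating the passage to $T^*$.
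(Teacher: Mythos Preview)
Your proposal is correct and follows essentially the same route as the paper: the paper packages your inline calculations (the decomposition via Equation~(\ref{GRSeparationEq}), the bound $\GR{I_t,B_t\cup\pi^F(s_t)}{p}\geq pb(t+1)$ for first-order policies, and the bound $\leq \tilde p\, b(t+1)$ for arbitrary policies) into auxiliary ``stop rate'' definitions and a lemma, then invokes Theorem~\ref{MGRThm} on the concave function $z\mapsto[\alpha z+\beta-pb(k+1)]_+$, but the substance is identical. The only cosmetic difference is that the paper keeps the positive part $[\cdot]_+$ in the recursion throughout (making the upper-bounding sequence non-decreasing and the passage to $T^*$ automatic), whereas you drop it and argue separately that $k_{b,p}$ is the stabilization time---both are fine.
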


\noindent See appendix for proof.

\begin{corollary}
\label{containmentCorollary}
	Suppose $\exists \alpha,\beta \in \mathbb{R}_+$ for which
\begin{align}
	&\EGR{z,\UC{I_0}{k}}{p} \leq \alpha z + \beta \quad, {\forall (z,k) \in \left[ \abs{I_0} ,\theta \right] \times \left\{ 0, \hdots, k_{b,p} \right\} }, \text{ and} \\
	&\EGR{z,\UC{I_0}{k}}{p} \geq \gamma z + \delta \quad, {\forall (z,k) \in \left[ \abs{I_0} ,\theta \right] \times \left\{ 0, \hdots, k_{b,\tilde{p}} \right\}. }
\end{align}
Then all of the results of Theorem \ref{containmentThm1} hold.
\end{corollary}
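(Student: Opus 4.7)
The plan is to derive Corollary \ref{containmentCorollary} directly from Theorem \ref{containmentThm1}, routing the expected growth rate through Remark \ref{mgrRemark} and part 3 of Theorem \ref{MGRThm}. The essential observation is that an affine function $z \mapsto \alpha z + \beta$ is simultaneously concave and convex, and is monotone non-decreasing in $z$ as soon as $\alpha \geq 0$. Consequently each of the two affine bounds hypothesised on $EGR$ is a bounding function of exactly the regularity that Remark \ref{mgrRemark} requires for an application of Theorem \ref{MGRThm}(3).

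First I would handle the upper direction. From $\phi^\pi_3(z,k) \leq \alpha z + \beta$ on the relevant domain, together with the fact that the right-hand side is concave and monotone non-decreasing, Remark \ref{mgrRemark} lets me invoke the upper-bound half of Theorem \ref{MGRThm}(3) with $f(z,k) = \alpha z + \beta$ in place of $\phi^\pi_3$. This produces $M_t^\pi \leq X(t)$ for all $t \leq k_{b,p}$, where $X$ solves the affine recursion obtained from (\ref{recursion}) after the substitution. That recursion, once the per-step budget reduction is included exactly as in the proof of Theorem \ref{containmentThm1}, has closed-form solution $l_{\alpha,\beta}(p,b,k)$.

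Second I would handle the lower direction symmetrically. The hypothesis $\phi^\pi_3(z,k) \geq \gamma z + \delta$, combined with convexity and monotonicity of $\gamma z + \delta$, allows Remark \ref{mgrRemark} to apply the lower-bound half of Theorem \ref{MGRThm}(3) with $g(z,k) = \gamma z + \delta$. This yields $M_t^\pi \geq X(t)$, where $X$ now solves the affine recursion with coefficients $\gamma,\delta$ and the pessimistic per-vaccination effectiveness $\tilde{p}$, whose closed form is $l_{\gamma,\delta}(\tilde{p},b,k)$. From this point the argument of Theorem \ref{containmentThm1} transcribes verbatim: using $L^\pi(I_0) = M_{T^*}^\pi$ one converts both pointwise bounds on $M_t^\pi$ into the stated bounds on the loss, and the characterisations of $b_\theta^U$ and $b_\theta^L$, for both $\theta < \infty$ and $\theta = \infty$, reduce to exactly the same inequalities that the theorem already resolves.

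The only nontrivial piece, and what I expect to be the main (though modest) obstacle, is cleanly justifying the replacement of $\phi^\pi_3$ by $f$ and $g$ inside Theorem \ref{MGRThm}(3). This is precisely the content of Remark \ref{mgrRemark}, so the work reduces to checking its hypotheses: a concave upper envelope for the upper-bound direction and a convex lower envelope for the lower-bound direction, both monotone non-decreasing. Affinity with non-negative slope makes both checks immediate, so no additional induction or analysis beyond what is already carried out for Theorem \ref{containmentThm1} is required.
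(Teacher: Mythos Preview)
Your proposal has a genuine gap: you conflate two different quantities. The hypothesis of the corollary bounds $\EGR{z,\UC{I_0}{k}}{p}$, i.e., the expected growth rate along \emph{empty} upward crusades (no vaccination). The function $\phi^\pi_3(z,k)$ that enters Theorem \ref{MGRThm}(3) and Remark \ref{mgrRemark} is $\EGR{z,\UC{\pi,s_0}{k}}{p}$, the expected growth rate under the policy $\pi$. These are not the same, and the hypothesis does not directly say $\phi^\pi_3(z,k)\le \alpha z+\beta$ or $\phi^\pi_3(z,k)\ge \gamma z+\delta$. In particular, for the lower direction the inequality even goes the wrong way: since vaccination can only reduce growth, $\phi^\pi_3\le \EGR{z,\UC{I_0}{k}}{p}$, so from $\EGR{z,\UC{I_0}{k}}{p}\ge \gamma z+\delta$ you cannot conclude $\phi^\pi_3\ge \gamma z+\delta$.

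What bridges the two is exactly what the paper's proof supplies and yours omits: the decomposition of Lemma \ref{GRProperties}(4),
\[
\EGR{z,\UC{\pi,s}{k}}{p}=\subplus{\EGR{z,\UC{I}{k}}{p}-\ESR{z,\UC{\pi,s}{k}}{p}},
\]
together with the sandwich $\msr{\cdot}{p}\le \ESR{\cdot}{p}\le \MSR{\cdot}{p}$. Only after invoking this do the budget terms $-pb(k+1)$ (via $msr$ for a first-order $\pi$) and $-\tilde p\, b(k+1)$ (via $MSR$) appear, yielding $\phi^\pi_3(z,k)\le \subplus{\alpha z+\beta - pb(k+1)}$ and $\phi^\pi_3(z,k)\ge \subplus{\gamma z+\delta - \tilde p\, b(k+1)}$. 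Then Remark \ref{mgrRemark} applies with $f$ and $g$ equal to these shifted affine functions, and the remainder of the proof of Theorem \ref{containmentThm1} transcribes. Your phrase ``once the per-step budget reduction is included'' gestures at this, but as written your recursion from $f(z)=\alpha z+\beta$ contains no budget term at all and would give only the trivial bound $X(k+1)=(1+\alpha)X(k)+\beta$, which does not recover $l_{\alpha,\beta}(p,b,k)$ or any containment statement.
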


%\begin{corollary}
%\label{FirstOrderOptimality}
%Let $d(A) = \cut{A} - \abs{N(A)}$ and suppose
%	\begin{equation*}
%	\lim\limits_{n \to \infty} \supl_{\abs{I} = n}  \Big| \E\giventhat{ d\pth{I} }{s_0} \Big| = 0.
%	\end{equation*}	
%Then $\pi^F$ is asymptotically optimal.
%\end{corollary}

\subsection{Obtaining Explicit Bounds}
\label{explicitBounds}
We now turn to finding explicit bounds for well-known topologies, using Theorem \ref{containmentThm1}. To achieve these, we find linear upper bounds for $\MGR{z,\UC{I_0}{k}}{p}$ and linear lower bounds for $\mgr{z,\UC{I_0}{k}}{p}$. Exceptional is the case of regular trees, where the maximal and minimal growth rates coincide under a first-order policy. In that case, the optimal loss can be explicitly calculated, as the upper and lower bounds intersect.

\subsubsection{\textbf{D-Regular Trees}}
\label{fullTreesSubsection}

In Section \ref{TreeExampleSection} we showed a first-order policy,  defined in Algorithm \ref{alg:treePolicy}, is optimal. We give more insight to this topology, and to the reasoning for its first-order optimality.  In Lemma \ref{neighborLemma} we calculate the exact neighborhood cardinality for a set $A$ in a regular tree. We then calculate $L^{\pi^F}$ and show it is in fact equal to $L^*$. Finally, we obtain tight bounds for containment.

\begin{lemma}
	\label{neighborLemma}
	Let $T$ be a regular tree of degree $d$. Let $A$ be a connected set in $T$ such that $root \in A$. 
	
	Then
	\begin{equation*}
	\cut{A} = \abs{N\left(A\right)} = \abs{A}\left(d-1\right)+1.
	\end{equation*}
\end{lemma}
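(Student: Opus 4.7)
The plan is to exploit the tree structure via a parent--child counting argument, proving the two equalities essentially independently.

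First, I would show that $\cut{A} = |N(A)|$ directly from the fact that $T$ is a tree. If some $v \in N(A)$ had two distinct neighbors $u_1, u_2 \in A$, then the edges $u_1 v, v u_2$ together with the unique tree-path from $u_1$ to $u_2$ (which lies inside $A$ by the connectivity of $A$) would form a cycle, contradicting that $T$ is a tree. So each $v \in N(A)$ contributes exactly one edge to the cut, giving $\cut{A} = |N(A)|$.

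Next I would identify, for each $v \in N(A)$, its unique neighbor in $A$ as its parent in the tree rooted at $root$. This is where the hypothesis $root \in A$ is essential: if $v \in N(A)$ had a child $u \in A$ rather than a parent in $A$, then the unique tree-path from $u$ back to $root$ would pass through $v$; by connectivity of $A$ and the fact that $root \in A$, this path lies entirely in $A$, forcing $v \in A$, a contradiction. Hence the map $v \mapsto \mathrm{parent}(v)$ sends $N(A)$ into $A$.

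Finally I would count. Under the paper's convention every node of $T$ has exactly $d$ children, so the total number of (parent, child) pairs with parent in $A$ is $d|A|$. Each such child lies either in $A$ or in $N(A)$. The children lying in $A$ are precisely the non-root members of $A$, of which there are $|A|-1$. Therefore the remaining $d|A| - (|A|-1) = |A|(d-1)+1$ children lie outside $A$, and by the previous paragraph they are in bijection with $N(A)$. Combining with the first step yields $\cut{A} = |N(A)| = |A|(d-1)+1$. There is no real obstacle here; the only care needed is keeping track of the degree convention and noting that the hypothesis $root \in A$ is exactly what guarantees every node of $N(A)$ attaches to $A$ through its parent edge, which in turn is what makes the ``$-1$'' in $|A|-1$ correct.
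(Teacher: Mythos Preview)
Your proof is correct and takes a genuinely different route from the paper's. The paper argues by induction on $|A|$: it removes a node $v$ at the deepest level of $A$, applies the induction hypothesis to $A\setminus\{v\}$, and observes that re-inserting $v$ increases the neighborhood size by exactly $d-1$. You instead give a direct double-count of parent--child pairs with parent in $A$, splitting the $d|A|$ children into those inside $A$ (the $|A|-1$ non-root nodes) and those outside. Your argument is a bit cleaner in that it avoids induction and makes the role of the hypothesis $root\in A$ fully explicit: it is precisely what forces every $v\in N(A)$ to attach via a parent edge, and simultaneously what identifies the in-$A$ children with the non-root nodes. The paper's induction has the minor advantage of needing less justification for the $\cut{A}=|N(A)|$ step (it simply asserts it for connected subsets of trees), whereas you spell out the cycle contradiction; but both approaches are elementary and of comparable length.
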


A proof to the lemma can be found in the appendix.

\begin{corollary}
	\label{treeGRCorollary}
	Let $T$ be a regular tree of degree $d$. Suppose $I_0$ is a connected set such that $root \in I_0$. 
	
	Then 
	\begin{equation*}
	\mgr{z,\UC{I_0}{k}}{p} = \MGR{z,\UC{I_0}{k}}{p} = p\left(z\left(d-1\right)+1\right)
	\end{equation*}
\end{corollary}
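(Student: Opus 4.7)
The plan is to reduce both $\MGR{\cdot}{}$ and $\mgr{\cdot}{}$ to a single deterministic quantity by exploiting two tree-specific features: Lemma \ref{neighborLemma}, and the fact that in a tree each external neighbor is joined to a connected subset by exactly one edge.

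First I would verify that along any upward crusade $u \in \UC{I_0}{k}$ (with $b=0$, so $B_i = \emptyset$), the infected set $I_i$ at every step is a \emph{connected} subset of $T$ that contains the root. Connectivity follows by induction on $i$: $I_0$ is connected by hypothesis, and $I_i \setminus I_{i-1} \subseteq N(I_{i-1})$, so each newly infected node is adjacent to a node already in the (connected) set $I_{i-1}$. Containment of the root is immediate since $\mathrm{root} \in I_0 \subseteq I_i$.

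Second, for any connected $A \subseteq V(T)$ with $\mathrm{root} \in A$, I would observe that because $T$ is a tree, every $v \in N(A)$ has exactly one neighbor inside $A$, i.e.\ $\cut{A,v} = 1$ (otherwise two distinct edges from $v$ to $A$ plus a path inside the connected $A$ would form a cycle in $T$). Combined with Lemma \ref{neighborLemma}, which gives $|N(A)| = |A|(d-1)+1$, the definition of the growth rate collapses to
\begin{equation*}
\GR{A}{p} \;=\; \sum_{v \in N(A)} \left(1-(1-p)^{\cut{A,v}}\right) \;=\; p\,|N(A)| \;=\; p\bigl(|A|(d-1)+1\bigr).
\end{equation*}

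Finally, taking any $u \in \UC{I_0}{k}$ with $u_k \in S_z$ (so $|I_k| = z$), the previous two steps show that $\GR{u_k}{p} = p(z(d-1)+1)$, \emph{independently} of which crusade we chose. Because the maximum and minimum of a constant-valued function are that same constant, both $\MGR{z,\UC{I_0}{k}}{p}$ and $\mgr{z,\UC{I_0}{k}}{p}$ equal $p(z(d-1)+1)$, which is the claim.

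The only non-routine point is justifying $\cut{A,v}=1$ for $v \in N(A)$; everything else is bookkeeping. I expect this to be straightforward from the acyclicity of $T$ together with the inductive connectivity of $I_k$, so I do not anticipate a serious obstacle.
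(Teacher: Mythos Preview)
Your proposal is correct and is essentially the route the paper intends: the corollary is stated without proof, as an immediate consequence of Lemma~\ref{neighborLemma}. Your two explicit ingredients---connectivity of $I_i$ along any $u\in\UC{I_0}{k}$ and $\cut{A,v}=1$ for $v\in N(A)$---are exactly what is implicit in the paper; indeed, the second point is already encoded in Lemma~\ref{neighborLemma}'s equality $\cut{A}=\abs{N(A)}$, since $\cut{A}=\sum_{v\in N(A)}\cut{A,v}$ forces every summand to be $1$.
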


Corollary \ref{treeGRCorollary} tells us that when $root \in I_0$, the growth rate is equal to the maximal and minimal growth rates. In Theorem \ref{optimalTreePolicyThm} we prove a first-order policy is optimal. Then, applying Theorem \ref{containmentThm1} for the case of $root \in I_0$, we obtain a $tight$ bound for containment. Specifically, letting $\alpha = p(d-1)$ and $\beta = p$, then
\begin{equation*}
b_\theta = 
\begin{cases}
\min \set{b}{l_{p(d-1),p}(b, k_{b,p}) \leq \theta} & ,\theta < \infty \\
p(d-1)\frac{(d-1)\abs{I_0} + 1}{1+p(d-1)} & ,\theta = \infty
\end{cases}
\end{equation*}
is a $tight$ containment bound. \\
Moreover,
\begin{equation*}
\Loss{*}{s_0} = \Loss{\pi^F}{s_0} = l_{\alpha,\beta}(p,b,k_{b,p}) ,\forall b \geq b_\theta.
\end{equation*}	
Note the special case of $I_0 = root$ and $\theta = \infty$ where the $tight$ containment bound becomes  
\begin{equation*} 
b_\infty = \frac{d}{1+\frac{1}{p\left(d-1\right)}}.
\end{equation*}
This result is non-trivial in the sense that as $p$ decreases, a budget lower than $d$ (the degree of the tree) is necessary in order to contain an infection initiated at the root. That is, when $p \neq 1$, even if an infection cannot be stopped at its initial state, it can still be contained at a future state.

\subsubsection{\textbf{D-Dimensional Grid}}

In a conjecture made in \cite{wang2002fire}, at least $2d-1$ firefighters are needed to contain an outbreak starting at a single node on the $d$-dimensional grid. This conjecture was later proven in \cite{wang2010surviving}. In this section we prove a more general result. Specifically, we obtain the minimal number of firefighters (i.e., budget) needed to contain an infection on the $d$-dimensional grid when the infection process is stochastic, and show it agrees with the conjecture for $p = 1$.
 
Let $I_0$ be a set of nodes on the $d$-dimensional grid. Then, it is easy to see that
\begin{equation*}
\MGR{c,\UC{I_0}{k}}{p} \leq 2pdc \quad ,\forall c \in \left\{1,2,\hdots, \abs{V} \right\},
\end{equation*}
where the equality is achieved for a set $I_0$ consisting of $c$ disconnected nodes, each of which has exactly $2d$ outgoing edges.
Next, suppose $I_0$ is known to be a connected set. The maximal cut of a set of cardinality $k$, on an upward crusade initiating at such $I_0$, is given by
$2\left(d - 1 \right)k + 2$. Then by Lemma \ref{GRProperty},
\begin{equation*}
\MGR{z,\UC{I_0}{k}}{p} \leq p\left( 2\left(d - 1 \right)z + 2 \right).
\end{equation*}
Here the set which achieves the maximum is a connected line of cardinality $z$. \\
Applying Theorem \ref{containmentThm1} with $\theta = \infty$ we obtain the weak upper bound
\begin{equation}
\label{eq:gridBound}
b_\infty = \frac{4p\tilde{d}\left(\tilde{d}\abs{I_0} + 1\right)}{1+2p\tilde{d}},
\end{equation}
where $\tilde{d} = d-1$. In the special case of $\abs{I_0} = 1$ and $p = 1$ we then obtain
\begin{equation*}
b_\infty =  \frac{4d(d-1)}{2d-1},
\end{equation*}
which equivalently means an infection can be contained for any $b$ which satisfies
\begin{equation*}
b \geq \ceil{\frac{4d(d-1)}{2d-1}} = \ceil{2d-2 + \frac{2d-2}{2d-1}} = 2d-1.
\end{equation*}
That is, Equation (\ref{eq:gridBound}) proves the bound proven in \cite{wang2010surviving} for a more general case of $p < 1$.

Next, we look for weak lower bounds for containment. In order to use Theorem \ref{containmentThm1}, we must lower bound $mgr$ by a linear function. Unfortunately, on the $d$-dimensional grid, $mgr$ is a concave, non-monotonic function. One method to overcome this problem is to lower-bound $mgr$ by a piecewise linear function, enabling us to apply Theorem \ref{containmentThm1} recursively on finite intervals. Lemma \ref{gridmgr} gives an affine, monotonic non-decreasing lower bound for $mgr$ on the $d$-dimensional grid on a finite interval. The idea of the lemma is to look at balls of radius $r$ around some center node $v$. Next, we define these balls, and a function to calculate their cardinalities, and then use these to obtain an affine lower bound for $mgr$ in Lemma \ref{gridmgr}.

Let $C_{d,r}$ and $N_{c_{d,r}}$  denote the cardinality and neighborhood cardinality of a ball of radius $r$ on the $d$-dimensional grid, respectively.
For clarity, we first consider the 2-dimensional grid. A ball of radius $r$ on the 2-$d$ grid has cardinality 
\begin{equation*}
C_{2,r} = 1 + 2r(r-1), r \in \mathbb{N}.
\end{equation*}
It is not hard to show that the neighborhood cardinality of such a ball is given by $N_{C_{2,r}} = 4r$. \\
Denote by $\left( A_1, A_2, A_3, \hdots \right)$ the sequence of balls of cardinality $C_{2,r}$ and center node $v$, where here $A_1 = \left\{ v \right\}$ and  $\left\{ A_i \right\}_{i = 2}^r$ satisfy
\begin{equation*}
A_i = A_{i-1} \cup N(A_{i-1}).
\end{equation*}
We have that $\abs{A_r} = 1+ \sum_{i=1}^{r-1} N_{C_{2,i}} = C_{2,r}$. It was shown in \cite{wang1977discrete} that these sets minimize the number of neighbors for their respective cardinalities on the 2d grid. Figure \ref{markerGrid} depicts this sequence.

\begin{figure}[t!]
	\centering
	\begin{subfigure}{0.15\textwidth}
		\centering
		\includegraphics[width=0.8\linewidth]{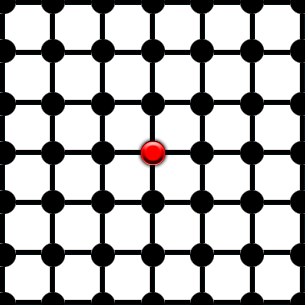}
		\caption{r = 1}
	\end{subfigure}
	\begin{subfigure}{0.15\textwidth}
		\centering
		\includegraphics[width=0.8\linewidth]{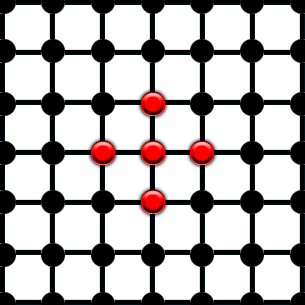}
		\caption{r = 2}
	\end{subfigure}
	\begin{subfigure}{0.15\textwidth}
		\centering
		\includegraphics[width=0.8\linewidth]{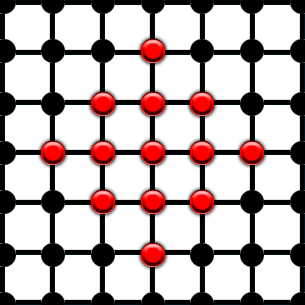}
		\caption{r = 3}
	\end{subfigure}
	\caption{ Plotting configurations of minimal number of neighbors on the 2d-grid for sets of cardinality\\ $C_{2,r} = 1 + 2r(r-1)$ }
	\label{markerGrid}
\end{figure}

\noindent The case of $d$-dimensional balls is similar. We define the function $\nu_d(r)$ recursively by
\begin{align*}
& \nu_2(r) = 4r \\
& \nu_d(r) = \nu_{d-1}(r) + 2\left( 1 + \sum_{i = 0}^{r-1} \nu_{d-1}(i) \right) \quad ,d \geq 3.
\end{align*}
Here $\nu_d(r)$ are cardinalities of neighborhoods for balls of radius $r$ on the $d$-dimensional grid.  
Then \begin{equation*}
C_{d,r} = 1 + \sum_{i = 0}^{r-1} \nu_{d}(i).
\end{equation*}
Having defined $C_{d,r}$ and $\nu_d(r)$ we can now lower bound $mgr$ by a piecewise linear function. 

\begin{lemma}
	\label{gridmgr}
	Let $\Graph{V}{E}$ be a d-dimensional grid, and let $a,\theta \in \mathbb{N}$ with $a < \theta$. Then
	\begin{align*}
	&\mgr{c,\UC{I_0}{k}}{p} \geq p \left( m_{a,\theta}(c-a) + \nu_d(r_a)\right) \quad ,c \in \left\{ a, \hdots, \theta \right\},
	\end{align*}
	where
	\begin{equation*}
	m_{a,\theta} = \frac{\nu_d(r_\theta) - \nu_d(r_a)}{\sum_{i = r_a-1}^{r_\theta-1} \nu_{d}(i)} 
	\quad
	\text{ and }
	\quad 
	r_z = \min \set{r}{ z \leq 1 + \sum_{i = 0}^{r-1} \nu_{d}(i)}.
	\end{equation*}
\end{lemma}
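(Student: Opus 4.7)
The plan is to reduce the minimum growth rate to the discrete isoperimetric inequality on $\mathbb{Z}^d$, and then to lower-bound the ball-cardinality data by a chord-type linear function on the interval $[a,\theta]$.

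First, the upward crusades in $\UC{I_0}{k}$ carry no vaccinations, so every terminal state has the form $u_k = (I_k,\emptyset)$ and hence $\GR{u_k}{p} = \GR{I_k,N(I_k)}{p}$. Applying the left inequality of Lemma~\ref{GRProperty} with $B = N(I_k)$ gives $\GR{u_k}{p} \geq p\,|N(I_k)|$. Taking the infimum over $u \in \UC{I_0}{k}$ with $|I_k| = c$ and weakening by enlarging the feasible set to all $c$-subsets of the grid yields
\[
\mgr{c,\UC{I_0}{k}}{p} \;\geq\; p\cdot \min_{|I|=c}|N(I)|.
\]

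Second, by the discrete isoperimetric inequality of Wang \& Wang~\cite{wang1977discrete}, the $\ell^{1}$-ball of cardinality $C_{d,r}$ minimizes $|N(\cdot)|$ in $\mathbb{Z}^d$ and attains the value $\nu_d(r)$. A short induction on $d$ by slicing a $d$-dimensional ball into $(d-1)$-dimensional cross-sections confirms both the recursion $\nu_d(r) = \nu_{d-1}(r) + 2C_{d-1,r}$ and the identity $C_{d,r} = 1 + \sum_{i=0}^{r-1}\nu_d(i)$ used in the excerpt. To finish, I would show that the affine map $c \mapsto m_{a,\theta}(c-a) + \nu_d(r_a)$ underbounds $\min_{|I|=c}|N(I)|$ on $\{a,\hdots,\theta\}$: the denominator $\sum_{i=r_a-1}^{r_\theta-1}\nu_d(i)$ telescopes to $C_{d,r_\theta}-C_{d,r_a-1}$, so $m_{a,\theta}$ is the slope of a chord between two points on the ``ball envelope,'' and the Wang--Wang characterization of extremal sets as nested quasi-balls makes this envelope concave, placing every such chord weakly below it. Combined with the inequalities $C_{d,r_a-1} < a \leq C_{d,r_a}$ and $\theta \leq C_{d,r_\theta}$ coming from the definition of $r_z$, this locates the stated line at or below the envelope on the whole interval, and multiplying by $p$ gives the claim.

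The main obstacle is the last step: the true minimum neighborhood cardinality $c \mapsto \min_{|I|=c}|N(I)|$ is only piecewise linear, and its values at cardinalities strictly between consecutive ball sizes must be handled through the fine-grained Wang--Wang structure of extremal sets (nested quasi-balls) rather than the isoperimetric inequality as a black box. Once concavity of the ball envelope is established, however, the chord comparison and the endpoint positioning are routine, and the factor of $p$ in step one delivers the stated bound.
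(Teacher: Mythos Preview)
Your proposal follows essentially the same skeleton as the paper's proof: reduce $\GR{u_k}{p}$ to $p\abs{N(I_k)}$ via Lemma~\ref{GRProperty}, invoke the isoperimetric data for $\ell^1$-balls on $\mathbb{Z}^d$, and then pass to a linear underestimate on $[a,\theta]$. The paper's argument is only a few lines and does exactly this.

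There is one genuine methodological difference worth flagging. To handle cardinalities $c$ that are not exact ball sizes, you appeal to the full Wang--Wang characterization of extremal sets and argue that the ``ball envelope'' $c \mapsto \min_{|I|=c}|N(I)|$ is concave, so a chord with slope $m_{a,\theta}$ sits below it. The paper instead asserts that $\mgr{z,\UC{I_0}{k}}{p}$ is monotone non-decreasing in $z$ and uses that directly: monotonicity gives a step function lower bound $f(z)=p\,\nu_d(r_z)$ (the paper's ``$pC_{d,r_z}$'' is evidently a typo for the neighborhood value), and one then underestimates this step function by the stated affine map. This route is lighter---it never needs the fine structure of quasi-balls between consecutive ball cardinalities, and so avoids precisely the obstacle you singled out---at the cost of leaving the monotonicity claim itself unproved. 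Your approach is more self-contained and rigorous where the paper is terse; the paper's is shorter once one grants monotonicity. Either way the reduction to Lemma~\ref{GRProperty} and the ball data is the real content, and you have that.
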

See Appendix for proof.
%\iftoggle{proofs}{
%	\begin{proof}[proof sketch]
%		By lemma \ref{GRProperties}, $\GR{I}{p} \geq p\abs{N(I)}$ for any set $I$. \\
%		In particular when $\abs{I} = C_{d,r}$, $\mgr{C_{d,r}}{p} \geq pN_{C_{d,r}}$. \\
%		$\mgr{z}{p}$ is a monotonic non-decreasing function in $z$. Therefore, using the lower bound we found for $\mgr{z}{p}$ at $z = C_{d,r}$, we can lower bound $\mgr{z}{p}$ for any $z \in \left\{1, 2, \hdots \right\}$ with the following (non convex) function
%		\begin{equation*}
%		f(z) = pC_{d,r_z} \text{  ,where } r_z = \min \set{r}{ z \leq C_{d,r}}
%		\end{equation*}
%		Finally, since $f$ is not convex, we again bound it from below by a linear function in $\left\{ a, \hdots, \theta \right\}$.
%	\end{proof}}
	
	\subsubsection{\textbf{Erd\H{o}s-R\'{e}nyi}}
	
	Let $G_{n,s}$ be an Erd\H{o}s-R\'{e}nyi model, where $n$ denotes the number of nodes, and $s$ the edge sampling probability.
	Then for any set $A$, we have that 
	\mbox{$\E\cut{A} = s\abs{A}\left(n - \abs{A}\right)$} and
	\mbox{$\E\abs{N(A)} = \left(n - \abs{A}\right)\left(1 - \left(1-s\right)^{\abs{A}}\right)$}.
	
	\noindent Then, by Lemma \ref{GRProperty}, we can bound the expected growth rate by
	\begin{align}
	&\EGR{z,\UC{I}{k}}{p} \leq spz\left(n - \abs{I}\right), \text{ and}
	\label{eq:erdosBound1} \\
	&\EGR{z,\UC{I}{k}}{p} \geq \left(1 - \left(1-s\right)^{z}\right)p\left(n - z\right).
	\label{eq:erdosBound2}
	\end{align}

	We consider the sparse regime, where $s = \frac{c}{n}$ for some constant $c > 1$. We can upper bound the expected growth rate in equation (\ref{eq:erdosBound1}) further by a linear function
	$
	\EGR{z,\UC{I}{k}}{p} \leq spzn = cpz
	$.
	Then, for $\theta = \infty$ a weak upper bound for containment is given by
	\begin{equation*}
	b_\infty = \frac{c^2p}{1+cp}\abs{I_0}.
	\end{equation*}
	When $cp \gg 1$ the bound becomes $b_\infty = c\abs{I_0}$. \\
	When $cp \ll 1$ the bound becomes $b_\infty = c^2p \abs{I_0}$.
	
	Similarly to the $d$-dimensional grid, the lower bound of the growth rate on the Erd\H{o}s-R\'{e}nyi graph is concave. This in turn means that in order to apply Theorems \ref{MGRThm} and \ref{containmentThm1} we must approximate a linear lower bound function on finite intervals. Using equation (\ref{eq:erdosBound2}) we can obtain an affine, monotonic non-decreasing lower bound. More specifically,
	\begin{equation*}
	\EGR{z,\UC{I}{k}}{p} \geq \frac{g_n(\theta_n^{max})}{\theta_n^{max}}pz, \forall z \in \left\{1, \hdots, \theta_n^{max} \right\},
	\end{equation*}
	where $g_n: \mathbb{N} \to \mathbb{R_+}$ is defined by
	$
	g_n(z) = \left(n - z\right)\left(1 - \left(1-\frac{c}{n}\right)^{z}\right)
	$ and 
	$
	\theta_n^{max} = \argmax_{z \in \mathbb{N}} g_n(z)
	$

\subsection{State-Dependent Budget}
\label{stateDependentBudgetSection}

In the previous section we have shown how Theorem \ref{containmentThm1} can be applied to well known toplogies that have been studied extensively in the literature. In this section we show how Theorem \ref{containmentThm1} can be used to construct a state-dependent budget allocation policy, which suggests a specific budget for every time step, according to the state's current $EGR$/$mgr$ characteristic. We show this strategy acheives better containment on two real world networks, when compared to a constant budget strategy which consumes an equal global budget.

Recall Theorem \ref{containmentThm1} gives guarantees for containtment of an infection when $mgr$ can be lower bounded by a linear function under a minimal budget $b^L$. In fact, when only a lower bound in $\left[\abs{I_0}, \theta \right]]$ is given, then $b^L = b^L_\theta$ is a lower bound for containment in $\theta$ (i.e., the infection can be contained in expectation with cardinality at most $\theta$). In theory, given $mgr/EGR$, a controller can choose at each state $s_t = \pth{I_t,B_t}$ an objective $\theta_t = \abs{I_t} + M_t$, for some $M_t \in \mathbb{N}$. Then, by lower bounding $mgr/EGR$ in $\left[ \abs{I_t}, \theta_t \right]$ and applying Theorem \ref{containmentThm1}, a specific budget can be allocated for that state. This process can be repeated at each time step in order to obtain a ``minimal" budget according to the current lower bound in $\left[ \abs{I_t}, \theta_t \right]$. 

In contrast to trees and grids (as well as other symmetric topologies), an explicit expression of $mgr/EGR$ is hard to obtain. Even a lower bound may be complicated to express. For this reason, Monte Carlo simulations can be used in order to approximate $mgr/EGR$. Specifically, let $\tau = \braces{u_1, u_2, \hdots, u_T}$ be $T$ trajectories of length $d$ starting at state $s$. We can approximate a lower bound on $mgr$ for $d$ steps by
\begin{equation}
\widehat{mgr}(k) \geq p \minl_{u \in \tau} \minl_{s_i \in u : \abs{s_i} = k} \abs{N(s_i)} \quad, k \in \braces{0, \hdots d}.
\label{eq:lb1}
\end{equation}
Similarly, a lower bound on $EGR$ can be approximated by
\begin{align}
&n_k = \sum_{u \in \tau}\sum_{s_i \in u}\indicator{\abs{s_i} = k} \quad k \in \braces{0, \hdots d} \nonumber \\
&\widehat{EGR}(k) \geq \frac{p}{n_k} \sum_{u \in \tau}\sum_{s_i \in u}N(s_i)\indicator{\abs{s_i} = k}.
\label{eq:lb2} 
\end{align}
Using the lower bounds in Equations (\ref{eq:lb1}) and (\ref{eq:lb2}) a simple linear lower bound can be constructed. Finally, Theorem \ref{containmentThm1} can be applied to these linear lower bounds.  Algorithm \ref{alg:dependentBudget} gives a state-dependent budget allocation procedure which applies Theorem \ref{containmentThm1} with an approximated linear lower bound to $mgr/EGR$ using Monte Carlo simulation.

\subsubsection{Experiments}
We have run our simulations on two real email networks: the Enron email communication network \cite{klimt2004introducing}, which covers all the email communication within a dataset of approximately half million emails, and the EU email communication network \cite{leskovec2007graph}, generated using email data from a large European research institution for a period of 18 months.
In part, using email networks is motivated by the fact that many computer viruses spread by email attachments. In our simulations we have used the first-order CUT policy, as defined in Definition \ref{def:CUT}. More specifically, we have chosen to test our algorithm on these networks due to their broad degree distributions. first-order policies are highly efficient on such topologies, allowing an unbiased assessment of our budget allocation algorithm.

\begin{figure}[t!]
\centering
\begin{subfigure}{0.49\textwidth}
	\centering
	\includegraphics[width=0.95\linewidth]{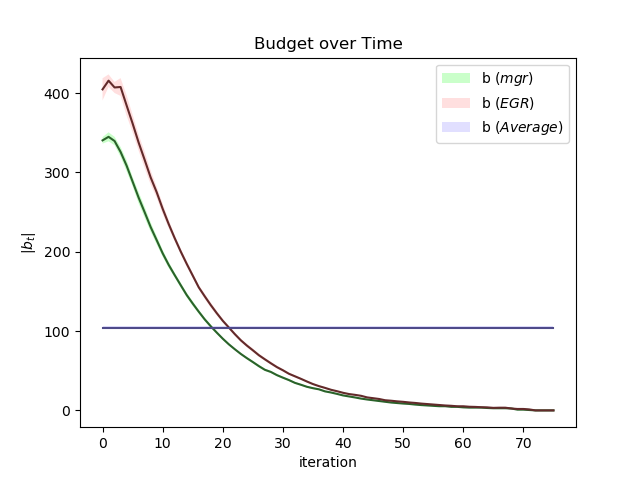}
	\caption{}
\end{subfigure}
\begin{subfigure}{0.49\textwidth}
	\centering
	\includegraphics[width=0.95\linewidth]{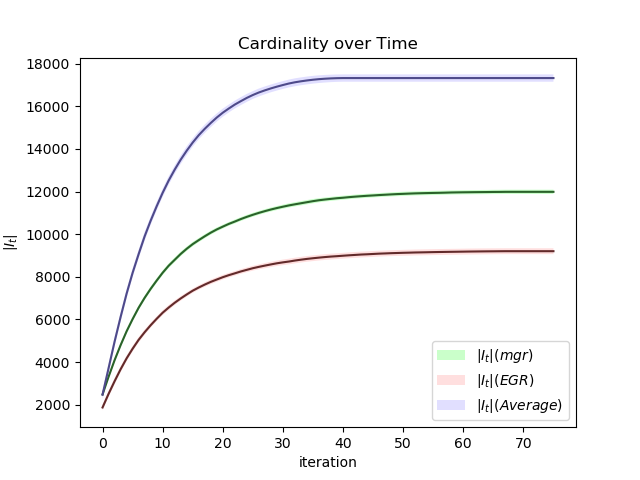}
	\caption{}
\end{subfigure}
\begin{subfigure}{0.49\textwidth}
	\centering
	\includegraphics[width=0.95\linewidth]{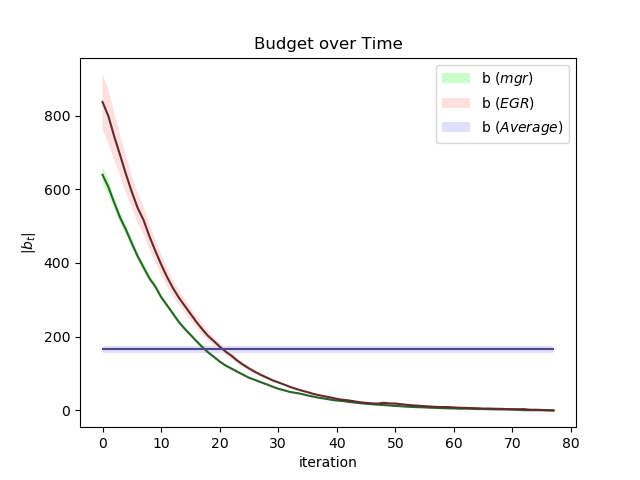}
	\caption{}
\end{subfigure}
\begin{subfigure}{0.49\textwidth}
	\centering
	\includegraphics[width=0.95\linewidth]{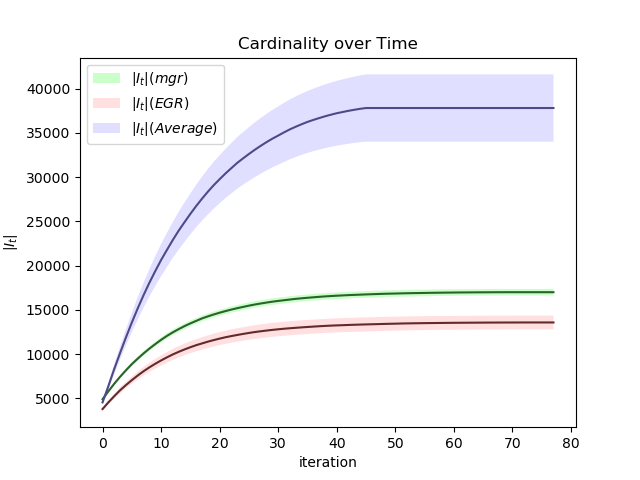}
	\caption{}
\end{subfigure}
\caption{ Plots comparing different budget allocation strategies on the Enron email network of 36,692 nodes (Plots a,b) and EU research institution email network of 265,214 nodes (Plots c,d). Plots show state dependent budget allocation using $mgr$ (green), $EGR$ (red), and constant budget equal to the average budget used by $EGR$ (blue). Initial cardinalities that were tested are 2000 infected nodes for Enron network, and 5000 infected nodes for EU network. Infection rate was $p = 0.05$ for both networks. Both networks were sampled 100 times and averages over 30 runs per sample. Light areas show standard deviation from mean. }
\label{fig:budgetAllocation}
\end{figure}

Results of our simulations are plotted in Figure \ref{fig:budgetAllocation}. Statistical data on the two networks in provided in Table \ref{table:stats}. In all of our experiments we picked starting nodes uniformly at random \footnote{Choosing sources in a realistic way is an open problem - the data that could offer a solution to this problem seems to be extremely scarce at this time.}. We averaged the results over 100 uniform samples of initial states. We used initial cardinalities of 2000 for the Enron network and 5000 for the EU network. Each sample was run 30 times. We've tested lower bounds on $EGR$ and $mgr$ using Monte Carlo simulations of length 3. All our experiments used an infection probability of $p = 0.05$. 

We compared our results to a constant budget allocation strategy. The constant budget was calculated by taking the maximal total budget used by our algorithm and dividing by the average number of iterations used. Specifically, denote by $b^{(i)}_{EGR}(t)$ and $b^{(i)}_{mgr}(t)$ the budget used in test $i$ by the $EGR$ and $mgr$ lower approximations, respectively. Also, denote by $T_{avg}$ the average number of iterations for containment under these approximations. Then,
\begin{equation*}
b_{global} = \frac{\maxl_i \sum_{t=0}^\infty \max\braces{b^{(i)}_{EGR}(t),b^{(i)}_{mgr}(t)} }{T_{avg}}.
\end{equation*}
Results, as depicted in Figure \ref{fig:budgetAllocation}, clearly show that a state dependent budget strategy which allocates more resources at the beginning of an infection is superior over one that only uses a constant budget with an equal global budget.

\begin{algorithm}[t]
	\SetAlgoNoLine
	\KwIn{state $s = (I,B)$, infection rate $p$, lower bound type $type$.}
	\KwOut{Budget $b(s)$}
	$\tau \gets$ Sample $T$ trajectories of $d$ iterations each starting at state $s$ with zero budget.\;
	$\tilde{\Delta} \gets$ MaxDegree($\tau$)\;
	$\tilde{p} \gets \min \{1, \tilde{\Delta}p\}$\;
\For{$k \in \curly{\abs{I}, \hdots \abs{V}}$}{
~~	\uIf{$type = $"$mgr$"}{
~~~~    $\hat{LB}(k,\tau) \gets p \minl_{u \in \tau} \minl_{s_i \in u : \abs{s_i} = k} \abs{N(s_i)}$ \;
  }
~~  \uElseIf{$type = $"$EGR$"}{
~~~~  $n_k \gets \sum_{u \in \tau}\sum_{s_i \in u}\indicator{\abs{s_i} = k}$ \;
~~~~   $\hat{LB}(k,\tau) \gets \frac{p}{n_k} \sum_{u \in \tau}\sum_{s_i \in u}N(s_i)\indicator{\abs{s_i} = k}$ \;
  }
 }
	$\theta \gets \argmax_{k > \abs{I}} \hat{LB}(k,\tau)$\;
	$\alpha \gets \frac{\hat{LB}(\theta,\tau) - \hat{LB}(\abs{I},\tau)}{\theta - \abs{I}}$\;
	$\beta \gets \hat{LB}(\abs{I},\tau)$\;
	\emph{Solve for b:}
	$\min{b} : l_{\alpha,\beta}(\tilde{p},b,k_{b,\tilde{p}}) \leq \theta$\;
	\Return $b$
\caption{State Dependent Budget Allocation}
\label{alg:dependentBudget}
\end{algorithm}

\begin{table}[]
\centering
\begin{tabular}{|c|c|c|c|c|c|c|c|}
\hline
      & Nodes  & Edges  & \begin{tabular}[c]{@{}c@{}}Nodes in \\ largest\\ WWC\end{tabular} & \begin{tabular}[c]{@{}c@{}}Nodes in\\ largest\\ SSC\end{tabular} & \begin{tabular}[c]{@{}c@{}}Average\\ clustering\\ coefficient\end{tabular} & \begin{tabular}[c]{@{}c@{}}Number of\\ triangles\end{tabular} & Diameter \\ \hline
Enron & 36692  & 183831 & 33696                                                             & 33696                                                            & 0.4970                                                                     & 727044                                                        & 11       \\ \hline
EU    & 265214 & 420045 & 224832                                                            & 34203                                                            & 0.0671                                                                     & 267313                                                        & 14       \\ \hline
\end{tabular}
\caption{Dataset statistics of the Enron and EU email networks.}
\label{table:stats}
\end{table}

\section{Conclusion}
In this paper we proposed several approaches for immunization and containment of "fast-moving" (i.e., not infinitesimally slow) epidemics in networks. We modeled the problem of epidemic spread as a stochastic extension of the Firefighter problem. 

Our work focused on the question when containment can be guaranteed in expectation? And conversely, when is the spread of an epidemic probable?
We found it easier to obtain weak upper bounds for containment than lower bounds. This is due to the fact that the MGR is more commonly concave, allowing the use of Theorem 5.6 for upper bounding the loss. We  showed that for some topologies, such as d-dimension grids and Erd\H{o}s R\'{e}nyi graphs, mgr is concave, not enabling us to apply Theorem 5.6 directly for lower bounding the optimal loss. We therefore approximated mgr by lower bounding it by a piecewise linear function, then applied Theorem \ref{containmentThm1} to obtain an approximate weak lower bound for containment. 

Our bounds form gaps between the weak upper and lower bounds for containment. An interesting question, not addressed in this paper is thus, can we make this gap small? We showed that there is no gap for regular trees, and that our bounds are tight. It is interesting to analyze then, how loose is this gap on different topologies? And more interestingly, what is the right dependence of the budget $b$, the infection probability $p$, and the topology, so that a budget larger than $b$ or an infection probability smaller than $p$ lead to containment, and vice versa: a budget smaller than $b$ or infection probability higher than $p$ leads to unbounded spreading of the infection. 

In Section \ref{stateDependentBudgetSection} we've constructed an algorithm that uses approximated lower bounds on $mgr/EGR$ for state dependent budget allocation. We've tested this algorithm using a first order policy and showed such a budget allocation strategy outperforms constant budget allocation with an equal global budget.

Finally, it is important to develop robust networks, where containment of infections can be established quickly with minimal casualties. Our containment bounds can be used as criteria for network robustness. Minimizing the budget needed for containment may create networks that are easier to manage under infectious attacks. Our work can be considered a first step towards the design of networks that have built-in resilience by design.

\bibliography{bibfile}
\bibliographystyle{iclr2018_conference}

\section{Appendix (Missing Proofs)}
 %In this section we prove the Theorems stated in Chapter 2: "The Stochastic Firefighter Problem". 

\subsection*{Proof of Theorem \ref{containmentThm1}}
To prove the Theorem we require a few more preliminary extensions to our definitions of growth rates, as well as rules connecting the different types of bounds on growth rates.

\subsubsection*{Definition Extensions}
We begin by extending some of the definitions related to growth rates. Similar to the definitions of (maximal/minimal) growth rates, denoted by $MGR,mgr$ (see Section \ref{growthRateSection}), we define (maximal/minimal) stop rates, denoted by $MSR,msr$, to relate to the growth rate that would have been into vaccinated nodes under a given upward crusade.

 \begin{definition} [Maximal and Minimal Stop Rates]
	Given a policy $\pi \in \Pi$, a state $s$, and the set of upward crusades of length $k$, $\UC{\pi,s}{k}$, the maximal and minimal stop rates are functions 	${MSR, msr: \left\{ 1, \hdots, \abs{V} \right\} \times \UC{\Pi,s}{k} \to \mathbb{R}_+}$, 	defined by
	\begin{align*}
	&\MSR{c,\UC{\pi,s}{k}}{p} = \maxl_{\substack{u \in \UC{\pi,s}{k} \\ u_k \in S_{c} }} \GR{I_k,B_k}{p} , \text{ and} \\
	&\msr{c,\UC{\pi,s}{k}}{p} = \minl_{\substack{u \in \UC{\pi,s}{k} \\ u_k \in S_{c} }} \GR{I_k,B_k}{p} ,
	\end{align*}
	
	where $S_{c} = \set{s = (I,B)}{ \abs{I} = c}$, and recall that $u_k = (I_k,B_k)$ denotes the final state in an upward crusade $u$. 
	\end{definition}
\noindent We also consider the expected stop rate, as defined below.
\begin{definition} [Expected Stop Rate]
	Given a policy $\pi \in \Pi$, a state $s$, and the set 	of upward crusades of length $k$, $\UC{\pi,s}{k}$, the expected stop rate is a function	
	${ESR: \UC{\Pi,s}{k} \to \mathbb{R}_+}$, 
	defined by
\begin{equation*}
	\ESR{c, \UC{\pi,s}{k}}{p} = \E^\pi \giventhat{  \GR{I_k,B_k}{p}} { \abs{I_k} = c, s_0 = s}.
	\end{equation*}
	\noindent \mbox{where $u_k = (I_k, B_k)$ is the random variable of the final state in upward crusades of $\UC{\pi,s}{k}$}.
	\end{definition}
	It is sometimes useful to work with empty policies (i.e., policies which do not vaccinate any nodes). For the case of (maximal/minimal) growth rates we use empty upward crusades, denoted by $\UC{I}{k}$. Contrary to these, (maximal/minimal) stop rates require a different definition.
	\begin{definition} [Maximal and Minimal Stop Rates - empty policies]
	Given a set of infected nodes $I$, and empty upward crusades of length $k$, $\UC{I}{k}$, 	the maximal and minimal stop rates are functions ${MSR, msr: \left\{ 1, \hdots, \abs{V} \right\} \times \left\{ 1, \hdots, \abs{V} \right\} \times \UC{V}{k} \to \mathbb{R}_+}$, 
	defined by
	\begin{align*}
	&\MSR{c_1,c_2,\UC{I}{k}}{p} = \maxl_{\substack{u \in \UC{I}{k} \\ u_k \in S_{c_1} }} \maxl_{\abs{B_k} = c_2}\GR{I_k,B_k}{p}, \text{ and} \\
	&\msr{c_1,c_2,\UC{I}{k}}{p} = \minl_{\substack{u \in \UC{I}{k} \\ u_k \in S_{c_1} }}  \minl_{\abs{B_k} = c_2}\GR{I_k,B_k}{p},
	\end{align*}
	
	where $S_{c} = \set{s = (I,\emptyset)}{ \abs{I} = c }$, and recall that $u_k = (I_k,\emptyset)$ denotes the final state 	in an upward crusade $u$. 
	\end{definition}

The following lemma forms a connection between the definitions of (maximal/minimal/expected) growth rates and (maximal/minimal) stop rates.

\begin{lemma}
\label{GRProperties}
Let $\pi \in \Pi$ and $s = (I,B)$, then
\begin{enumerate}
  \item $\abs{N(I)\cap B}p \leq \GR{I,B}{p} \leq \min \left\{ \cut{I,B}p,1 \right\}$.
	\item $\MGR{z,\UC{\pi,s}{k}}{p} \leq \subplus{\MGR{z,\UC{I}{k}}{p} - \msr{z,\UC{\pi,s}{k}}{p}}$	.
	\item $\mgr{z,\UC{\pi,s}{k}}{p} \geq \subplus{\mgr{z,\UC{I}{k}}{p} - \MSR{z,\UC{\pi,s}{k}}{p}}$.
	\item $\EGR{z,\UC{\pi,s}{k}}{p} = \subplus{\EGR{z,\UC{I}{k}}{p} - \ESR{z,\UC{\pi,s}{k}}{p}}$.
	\item ${\MSR{z,\UC{\pi,s}{k}}{p} \leq \MSR{z,b(k+1) + \abs{B},\UC{I}{k}}{p} \leq \min \left\{ p \Delta,1 \right\}\left( b(k+1)+ \abs{B}\right)}$.
	\item $\msr{z,\UC{\pi,s}{k}}{p} \geq \msr{z,b(k+1),\UC{I}{k}}{p}$.
	\item If in addition $\pi$ is a first order policy and $B = \emptyset$ then \\ 
	$\msr{z,\UC{\pi,s}{k}}{p} \geq \msr{z,b(k+1),\UC{I}{k}}{p} \geq pb(k+1)$.
	% msrLowerBound
\end{enumerate}
\end{lemma}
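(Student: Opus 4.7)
The plan is to prove the seven parts essentially in order, using two workhorses: the decomposition identity from Equation (\ref{GRSeparationEq}), namely $\GR{s}{p} = \GR{I}{p} - \GR{I,B}{p}$, and elementary pointwise bounds on the summands $1-(1-p)^{\cut{I,v}}$. Part 1 is identical to the statement of Lemma \ref{GRProperty} and requires no new work; I would simply quote that lemma.

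For parts 2, 3, and 4, I would fix an upward crusade $u \in \UC{\pi,s}{k}$ with final state $u_k=(I_k,B_k)$ and apply the decomposition to obtain $\GR{u_k}{p} = \GR{I_k}{p} - \GR{I_k,B_k}{p}$. Taking the max over crusades ending in $S_c$ yields
\[
\MGR{c,\UC{\pi,s}{k}}{p} \le \maxl_u \GR{I_k}{p} - \minl_u \GR{I_k,B_k}{p};
\]
by Remark \ref{UCInclusionRemark} the first max is bounded by the max over the larger (empty-policy) family $\UC{I}{k}$, which equals $\MGR{c,\UC{I}{k}}{p}$, and the second min is exactly $\msr{c,\UC{\pi,s}{k}}{p}$. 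Non-negativity of $\GR{u_k}{p}$ legitimizes the outer $\subplus{\cdot}$, giving part 2. Part 3 is symmetric: swap $\max$ and $\min$, use $\min(f-g)\ge \min f - \max g$, and invoke non-negativity again. Part 4 is the cleanest: conditional linearity of expectation directly gives the equality, with $\subplus{\cdot}$ justified once more by non-negativity.

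For parts 5, 6, and 7 the bookkeeping is about the cardinality of $B_k$. Any $u \in \UC{\pi,s}{k}$ with $\pi \in \Pi_{Sb}$ satisfies $|B_k| \le |B| + b(k+1)$ under the paper's convention that a vaccination occurs before the infection step. This embedding argument gives the first inequality of part 5, since $\UC{\pi,s}{k}$ is a subcollection of configurations allowed in $\MSR{z,b(k+1)+|B|,\UC{I}{k}}{p}$. The second inequality of part 5 follows from part 1 applied termwise: $\GR{I_k,B_k}{p} = \sum_{v\in B_k}(1-(1-p)^{\cut{I_k,v}}) \le |B_k|\min\{p\Delta,1\}$, using $\cut{I_k,v}\le \Delta$. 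Part 6 reverses this bookkeeping. For part 7, the first-order hypothesis is essential: with $B_0=\emptyset$, every $v$ added to some $B_{t+1}$ by $\pi^F$ lies in $N(I_t)\subseteq N(I_k)$ (because $I_t$ only grows and $v\notin I_k$), so $\cut{I_k,v}\ge 1$, giving the termwise lower bound $1-(1-p)=p$ and hence $\GR{I_k,B_k}{p}\ge p\,b(k+1)$.

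The main obstacle I anticipate is not the combinatorics but pinning down the precise indexing convention ``$|B_k|=b(k+1)$'' versus ``$|B_k|=bk$'' implied by the upward-crusade definition, since the exact offset depends on whether the vaccination performed at step $k$ is counted in $s_k$ or only reflected at $s_{k+1}$; I would state this explicitly up front and use it uniformly across parts 5--7. The first-order argument for part 7 also subtly uses the monotonicity of $I_t$, which I would call out as a separate sub-lemma. Beyond these clarifications, all steps are short and routine given the identities and pointwise bounds catalogued above.
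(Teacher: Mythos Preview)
Your proposal is correct and follows essentially the same approach as the paper: part 1 via Lemma \ref{GRProperty}, parts 2--4 via the decomposition (\ref{GRSeparationEq}) combined with $\max/\min$ (or expectation) splitting and Remark \ref{UCInclusionRemark}, and parts 5--7 via the cardinality bookkeeping $|B_k|=|B|+b(k+1)$ together with the termwise bounds from part 1. Your explicit flagging of the $b(k+1)$ versus $bk$ offset and the monotonicity-of-$I_t$ step in part 7 is more careful than the paper's write-up, but the underlying argument is identical.
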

%\iftoggle{proofs}{
	\begin{proof}
		\noindent1) Using the same steps as in the proof of Lemma \ref{GRProperty}, for any $p \in \left[0, 1\right]$ and $k \in \mathbb{N}$,
		\begin{equation*}
		\GR{s}{p} = \sum_{v \in N(s)} \left(1-(1-p)^{\cut{s,v}}\right) \leq  \sum_{v \in N(s)} \min \left\{ p\cut{s,v}, 1 \right\} = \min \left\{ \cut{s} p, 1 \right\}.
		\end{equation*}
		On the other hand, for any $v \in N(s)$,
		\begin{equation*}
		\cut{s,v} \geq 1.
		\end{equation*}
		Then for any $p \in \left[0, 1\right]$,
		\begin{equation*}
		\GR{s}{p} = \sum_{v \in N(s)} \left(1-(1-p)^{\cut{s,v}}\right) \geq \sum_{v \in N(s)} \left(1-(1-p)\right) = \abs{N(s)} p.
		\end{equation*}
		
		\noindent2) By equation (\ref{GRSeparationEq}), for a given state $s = (I,B)$, the growth rate can be written as
		\begin{equation*}
		\GR{s}{p} = \GR{I}{p} - \GR{I,B}{p}.
		\end{equation*}
		Then,
		\begin{align*}
		&\MGR{c,\UC{\pi,s}{k}}{p} = 
		\maxl_{\substack{u \in \UC{\pi,s}{k} \\ u_k \in S_{c} }} \GR{u_k}{p} =
		\maxl_{\substack{u \in \UC{\pi,s}{k} \\ u_k \in S_{c} }} \left\{ \GR{I_k}{p} - \GR{I_k,B_k}{p} \right\} \leq \\
		&\leq \maxl_{\substack{u \in \UC{\pi,s}{k} \\ u_k \in S_{c} }} \GR{I_k}{p}
		- \minl_{\substack{u \in \UC{\pi,s}{k} \\ u_k \in S_{c} }} \GR{I_k,B_k}{p}\leq \\
		& \leq \maxl_{\substack{u \in \UC{I}{k} \\ u_k \in S_{c} }} \GR{I_k}{p}
		- \minl_{\substack{u \in \UC{\pi,s}{k} \\ u_k \in S_{c} }} \GR{I_k,B_k}{p} =
		\MGR{z,\UC{I}{k}}{p} - \msr{z,\UC{\pi,s}{k}}{p},
		\end{align*}
		where the last inequality follow by Remark \ref{UCInclusionRemark}. \\
		Finally since the growth rate is non-negative, so is the maximal growth rate. Hence
		\begin{equation*}
		\MGR{z,\UC{\pi,s}{k}}{p} \leq \subplus{\MGR{z,\UC{I}{k}}{p} - \msr{z,\UC{\pi,s}{k}}{p}}.
		\end{equation*}
		
		\noindent3) The proof follows the same steps as in (2).
		
		\noindent4) Similar to (2) we have that
		\begin{equation*}
		\GR{s}{p} = \GR{I}{p} - \GR{I,B}{p}.
		\end{equation*}
		Then,
		\begin{align*}
		&\E^\pi \giventhat{ \GR{s_k}{p} } { \abs{I_k} = c, s_0 = s} = \\
		&\E^\pi \giventhat{  \GR{I_k}{p} } { \abs{I_k} = c, s_0 = s} - \E^\pi \giventhat{  \GR{I_k,B_k}{p}} { \abs{I_k} = c, s_0 = s}= \\
		&=	\EGR{z,\UC{I}{k}}{p} - \ESR{z,\UC{\pi,s}{k}}{p},
		\end{align*}
		Finally since the growth rate is non-negative, so is the expected growth rate. Hence
		\begin{equation*}
		\EGR{z,\UC{\pi,s}{k}}{p} = \subplus{\EGR{z,\UC{I}{k}}{p} - \ESR{z,\UC{\pi,s}{k}}{p}}.
		\end{equation*}

		\noindent5) Since we only consider policies which vaccinate exactly $b$ nodes at every iteration, after $k$ iterations of reaching state $s = (I,B)$ exactly $\abs{B} + b(k+1)$ nodes will be vaccinated. In addition, any upward crusade in $\UC{\pi,s}{k}$ exists as a sequence of sets for an upward crusade in $\UC{I}{k}$. Hence,
		\begin{align*}
		& \MSR{c,\UC{\pi,s}{k}}{p} = 
		\maxl_{\substack{u \in \UC{\pi,s}{k} \\ u_k \in S_{c} }} \GR{I_k,B_k}{p}
		%\maxl_{u \in \UC{\pi,s}{k}}~ \maxl_{s_k \in S_{u,z}} \GR{I_k,B_k}{p}  =
		= \maxl_{\substack{u \in \UC{\pi,s}{k} \\ u_k \in S_{c} }} \maxl_{\abs{B_k} = \abs{B} + b(k+1)} \GR{I_k,B_k}{p} \leq \\
		%\maxl_{u \in \UC{\pi,s}{k}} \maxl_{s_k \in S_{u,z}} \maxl_{B_k \in V_{u,\abs{B} + b(k+1)}} \GR{I_k,B_k}{p} \leq \\
		& \leq \maxl_{\substack{u \in \UC{I}{k} \\ u_k \in S_{c} }} \maxl_{\abs{B_k} = \abs{B} + b(k+1)} \GR{I_k,B_k}{p}  = \MSR{c,\abs{B} + b(k+1),\UC{I}{k}}{p}.
		\end{align*} 
		
		Next, using part (1), $\GR{I,B}{p} \leq \cut{I,B}p$. Then for any time $k \geq 0$,
		\begin{align*}
		&\GR{I_k,B_k}{p} \leq \cut{I_k,B_k}p = \sum_{v \in B_k} \min \left\{ p \cut{I_k,v},1 \right\} \leq \sum_{v \in B_k} \min \left\{ p \Delta,1 \right\} = \\
		&= \min \left\{ p \Delta , 1 \right\} \abs{B_k} =  \min \left\{ p \Delta, 1 \right\} \left( b(k+1)+ \abs{B}\right).
		\end{align*}

		6) The proof follows the same steps as in (4).
		
		7) By (5) we know that $\msr{z,\UC{\pi,s_0}{k}}{p} \geq \msr{z,b(k+1),\UC{I_0}{k}}{p}$. \\
	 Next, since $\pi$ is a first order policy, $B_k \subseteq N(I_k)$ for all $k \geq 0$. Then using (1),
	 \begin{equation*}
	 \GR{I_k,B_k}{p} \geq p\abs{N(I_k)\cap B_k} = p\abs{B_k} = pb(k+1).
	 \end{equation*}
	\end{proof}
	%}

\subsubsection*{Proof of Theorem}
%\iftoggle{proofs}{
%	\begin{proof}
We continue with the proof of the Theorem, and choose to start by proving the upper bounds. \\
Let $\pi$ be a first order policy. The proof will be as follows. We begin by applying Theorem \ref{MGRThm} with results obtained in Lemma \ref{GRProperties} in order to find an explicit recursion relation $X_k = f(X_{k-1})$ which bounds the process $M_k^\pi = \E^{\pi} \giventhat{\abs{I_k}}{s_0}$ from above. We then solve the recursion relation and find conditions under which the condition $X_k \leq \theta$ is satisfied. When the latter is satisfied for all $k \geq 0$, the infection is contained, and thus we obtain a weak lower bound for containment in $\theta$. We can then also calculate the exact upper bound, which in turn gives us an upper bound for $\Loss{\pi}{s_0}$.
		
		By Lemma \ref{GRProperties} (2),
		\begin{equation*}
		\MGR{z,\UC{\pi,s_0}{k}}{p} \leq \subplus{\MGR{z,\UC{I_0}{k}}{p} - \msr{z,\UC{\pi,s_0}{k}}{p}}.
		\end{equation*}
		By Lemma \ref{GRProperties} (6) we also know that for $s_0 = (I_0,\emptyset)$ and a first order policy $\pi$,
		\begin{equation*}
		\msr{z,\UC{\pi,s_0}{k}}{p} \geq pb(k+1).
		\end{equation*}
		Then, using the above together with equation \ref{GRCondition} we have
		\begin{equation*}
		\MGR{z,\UC{\pi,s_0}{k}}{p} \leq \subplus{\alpha_p z + \beta_p - pb(k+1)} \quad, {\forall (z,k) \in \left[ \abs{I_0} ,\theta \right] \times \left\{ 0, \hdots, k_{b,p} \right\} }.
		\end{equation*}
		The function $f(z) = \subplus{\alpha_p z + \beta_p - pb(k+1)}$ is concave and monotonic non-decreasing in $z$ for all $z \geq 0$. Thus, by Remark \ref{mgrRemark} of Theorem \ref{MGRThm}, the process $M_t^\pi = \E^{\pi} \giventhat{\abs{I_t}}{s_0}$ is upper bounded by the solution to the recursion relation
		\begin{align*}
		&X_{k+1} = X_{k} + \subplus{ \alpha_p X_{k} + \beta_p - pb(k+1) }  &,k \geq 0, ~X_k \leq \theta \\
		&X_0 = \abs{I_0}
		\end{align*}
		which can be rewritten
		\begin{align*}
		&X_{k} = X_{k-1} + \subplus{  \alpha_p X_{k-1} + \beta_p - pbk }  &,k \geq 1, ~X_k \leq \theta \\
		&X_0 = \abs{I_0}
		\end{align*}
		and can be further rewritten
		\begin{align}
		&X_{k} = \left(1+\alpha_p\right) X_{k-1} + \beta_p - pbk  &,k \geq 1, ~X_k \leq \theta, ~\alpha_p X_{k-1} + \beta_p - pbk \geq 0  \nonumber \\
		&X_{k} = X_{k-1}  &,k \geq 1, ~X_k \leq \theta,  ~\alpha_p X_{k-1} + \beta_p - pbk < 0  \nonumber \\
		&X_0 = \abs{I_0} \label{contrained}
		\end{align}
		We also consider an unconstrained formulation of the recursion
		\begin{align}
		&\tilde{X}_{k} = \left(1+\alpha_p\right) \tilde{X}_{k-1} + \beta_p - pbk  &,k \geq 1, ~\alpha_p \tilde{X}_{k-1} + \beta_p - pbk \geq 0  \nonumber \\
		&\tilde{X}_{k} = \tilde{X}_{k-1}  &,k \geq 1,  ~\alpha_p \tilde{X}_{k-1} + \beta_p - pbk < 0  \nonumber \\
		&\tilde{X}_0 = \abs{I_0} \label{uncontrained}
		\end{align}
		We bring the following simple auxiliary lemma
		\begin{AuxLemma}
			\label{recursionLemma}
			The solution to the recursion relation
			\begin{flalign*}
			&X_k = aX_{k-1} + bk + c  ,k \geq 1\\
			&X_0 = m
			\end{flalign*}
			is given by
			\begin{equation*}
			X_k = \frac{c+bk}{1-a} - \frac{ab}{\left(1-a\right)^2} + 
			\left(m - \left(\frac{c}{1-a} - \frac{ab}{\left(1-a\right)^2}\right)\right)a^k.
			\end{equation*}
		\end{AuxLemma} 
		\noindent If we solve recursion \ref{uncontrained} and obtain a result $\tilde{X}_k$ such that $\tilde{X}_k \leq \theta$ for all $k \geq 1$, then obviously $\tilde{X}_k = X_k$ for all $k \geq 1$, where $X_k$ is the solution to recursion \ref{contrained}. This is true since the condition $X_k \leq \theta$ is never active. Suppose $k \geq 1$ and satisfies $\alpha_p X_{k-1} + \beta_p - pbk \geq 0$. Also suppose $X_k, \tilde{X_k} \leq \theta$, such that $X_k = \tilde{X_k}$. Solving recursion \ref{contrained} using the auxiliary lemma yields
		\begin{equation}
		\label{recursionSol}
		X_k = \tilde{X}_k = 
		\frac{pb\left(k + 1\right) - \beta_p}{\alpha_p} +
		\frac{pb}{\alpha_p^2} 
		+ \left[\abs{I_0} - \left(\frac{pb-\beta_p}{\alpha_p} + \frac{pb}{\alpha_p^2} \right) \right]\left(1 + \alpha_p\right)^{k} = l_{\alpha_p,\beta_p}(p,b,k).
		\end{equation}
		
		Let us first consider the case of $\theta < \infty$.
		Denote by $k_{b,p}$ the maximal time under which $\alpha_p X_{k-1} + \beta_p - pbk \geq 0$ (or equivalently, the minimal time under which $\alpha_p X_{k-1} + \beta_p - pbk \leq 0$). If such $k_{b,p}$ exists then it also holds that ${X_k = \tilde{X}_k = l_{\alpha_p,\beta_p}(b, k_{b,p})}$ for all $k \geq k_{b,p}$. 
		Next take $b$ which satisfies ${l_{\alpha_p,\beta_p}(b, k_{b,p}) \leq \theta}$. The process $M_k^\pi = \E^{\pi} \giventhat{\abs{I_k}}{s_0}$ is then bounded from above by the solution to $X_k$ for any $k \geq 0$. Specifically since by definition of $k_{b,p}$, $X_k$ does not change for $k \geq k_{b,p}$, we have that $X_{k_{b,p}} = X_\infty \geq M_\infty^\pi$, hence
		\begin{equation*}
		\Loss{\pi}{I_0} = M_{T^*}^\pi = M_\infty^\pi \leq X_\infty = X_{k_{b,p}} = l_{\alpha_p,\beta_p}(p,b,k_{b,p}).
		\end{equation*}
		This is true for any $b$ which satisfies ${l_{\alpha_p,\beta_p}(b, k_{b,p}) \leq \theta}$, specifically for
		\begin{equation*}
		b_\theta = \min \set{b}{l_{\alpha_p,\beta_p}(b, k_{b,p}) \leq \theta}.
		\end{equation*}
		This implies, by definition, that $b_\theta$ is a weak upper bound for containment in $\theta$.
		
		Next let us consider the case of $\theta = \infty$.
		Substituting $X_{k-1}$ in $\alpha_p X_{k-1} + \beta_p - pbk \leq 0$ by the solution obtained in Equation \ref{recursionSol} yields
		\begin{equation*}
		\alpha_p l_{\alpha_p,\beta_p}(p,b,k-1) + \beta_p - pbk \leq 0.
		\end{equation*}
		Substituting the expression for $l_{\alpha_p,\beta_p}(p,b,k-1)$ and solving for $b$ then yields
		\begin{equation}
		\label{weakUpperBoundEq}
		b \geq \frac{\alpha_p}{p}\left(\alpha_p\abs{I_0} + \beta_p\right)
		\frac{\left(1+\alpha_p\right)^{k}}{\left(1+\alpha_p\right)^{k+1}-1}.
		\end{equation}
		Then $k_{b,p}$ is the minimal time under which equation \ref{weakUpperBoundEq} holds. \\
		Note that the function $\frac{\alpha_p}{p}\left(\alpha_p\abs{I_0} + \beta_p\right)
		\frac{\left(1+\alpha_p\right)^{k}}{\left(1+\alpha_p\right)^{k+1}-1}$ is monotonically decreasing in $k$ with an infimum value of
		\begin{equation}
		\label{infinityBound}
		\infl_{k \geq 0} \frac{\alpha_p}{p}\left(\alpha_p\abs{I_0} + \beta_p\right)\frac{\left(1+\alpha_p\right)^{k}}{\left(1+\alpha_p\right)^{k+1}-1} =
		\lim\limits_{k \to \infty} = \frac{\alpha_p}{p}\left(\alpha_p\abs{I_0} + \beta_p\right)\frac{\left(1+\alpha_p\right)^{k}}{\left(1+\alpha_p\right)^{k+1}-1} =
		\frac{\alpha_p}{p}\frac{\left(\alpha_p\abs{I_0} + \beta_p\right)}{1+\alpha_p}.
		\end{equation}
		Thus, when $\theta = \infty$, the minimal $b$ which satisfies equation \ref{weakUpperBoundEq} is given by equation \ref{infinityBound}, which in turn implies
		\begin{equation*}
		b_\infty = \frac{\alpha_p}{p}\frac{\left(\alpha_p\abs{I_0} + \beta_p\right)}{1+\alpha_p}.
		\end{equation*}
\\
This completes the proof of the weak upper bounds and loss upper bounds of the theorem. \\
We continue to the proof of the lower bounds. \\
		Let $\pi^*$ be an optimal policy.
		By Lemma \ref{GRProperties} (3),
		\begin{equation*}
		\mgr{z,\UC{\pi^*,s_0}{k}}{p} \geq \subplus{\mgr{z,\UC{I_0}{k}}{p} - \MSR{z,\UC{\pi^*,s_0}{k}}{p}}.
		\end{equation*}
		By Lemma \ref{GRProperties} (4) we also know that for $s_0 = (I_0, \emptyset)$,
		\begin{equation*}
		\MSR{z,\UC{\pi^*,s_0}{k}}{p} \leq \tilde{p} b(k+1).
		\end{equation*}
		
		Then, using the above together with equation \ref{GRCondition2} we have
		\begin{equation*}
		\mgr{z,\UC{\pi^*,s_0}{k}}{p} \geq \subplus{\alpha_p z + \beta_p - \tilde{p} b(k+1)} \quad, {\forall (z,k) \in \left[ \abs{I_0} ,\theta \right] \times \left\{ 0, \hdots, k_{b,\tilde{p}} \right\} }.
		\end{equation*}
		The function $f(z) = \subplus{\alpha_p z + \beta_p - \tilde{p} b(k+1)}$ is convex and monotonic non-decreasing in $z$ for all $z \geq 0$. Thus, by Remark \ref{mgrRemark} of Theorem \ref{MGRThm}, the process $M_t^{\pi^*} = \E^{\pi^*} \giventhat{\abs{I_t}}{s_0}$ is lower bounded by the solution to the recursion relation
		\begin{align*}
		&Y_{k+1} = Y_{k} + \subplus{ \alpha_p Y_{k} + \beta_p - \tilde{p} b(k+1) }  &,k \geq 0, ~Y_k \leq \theta \\
		&Y_0 = \abs{I_0}
		\end{align*}
		The rest of the proof follows similarly to the upper bound proof.
%	\end{proof}
%}

\subsection*{Proof of Corollary \ref{containmentCorollary}}
By Lemma \ref{GRProperties} (4)

\begin{equation*}
\EGR{z,\UC{\pi,s}{k}}{p} = \subplus{\EGR{z,\UC{I}{k}}{p} - \ESR{z,\UC{\pi,s}{k}}{p}}.
\end{equation*}
Since
\begin{align*}
	&\EGR{z,\UC{I_0}{k}}{p} \leq \alpha z + \beta \quad, {\forall (z,k) \in \left[ \abs{I_0} ,\theta \right] \times \left\{ 0, \hdots, k_{b,p} \right\} },\\
	&\EGR{z,\UC{I_0}{k}}{p} \geq \gamma z + \delta \quad, {\forall (z,k) \in \left[ \abs{I_0} ,\theta \right] \times \left\{ 0, \hdots, k_{b,\tilde{p}} \right\} }
\end{align*}
and
\begin{equation*}
\msr{z,\UC{I_0}{k}}{p} \leq \ESR{z,\UC{I_0}{k}}{p} \leq \MSR{z,\UC{I_0}{k}}{p},
\end{equation*}
the proof follows immediately from the proof of Theorem \ref{containmentThm1}.

\subsection*{Proof of Lemma \ref{neighborLemma}}
%\iftoggle{proofs}{
%	\begin{proof}
		By induction on $n = \abs{A}$. \\
		For $n = 1$, $A = root$, then $\abs{N\left(A\right)} = d = 1\times\left(d-1\right)+1$ and the base case holds. \\
		Suppose that for $n = k$ it holds that for any bag $A_k$ of cardinality $k$
		\begin{equation*}
		\abs{N\left(A_k\right)} = k\left(d-1\right)+1
		\end{equation*}
		We remind the reader that $root \in A_k$. \\
		Let $A_{k+1}$ be a bag of cardinality $k+1$ with $root \in A_{k+1}$.
		Next, let $v$ be a node in the deepest level of $A_{k+1}$. Then by the induction step
		\begin{equation*}
		\abs{N\left(A_{k+1} - v\right)} = k\left(d-1\right)+1
		\end{equation*}
		Since $v$ is in the deepest level of $A_{k+1}$ adding it adds exactly $d-1$ neighbors. Hence
		\begin{equation*}
		\abs{N\left(A_{k+1}\right)} = \abs{N\left(A_{k+1} - v\right)} + d-1 = \left(k+1\right)\left(d-1\right)+1
		\end{equation*}
		Finally, for any connected set $A$ in a tree, $\cut{A} = \abs{N\left(A\right)}$, thereby completing the proof of the lemma.
%	\end{proof}
%}

\subsection*{Proof of Lemma \ref{gridmgr}}

		By lemma \ref{GRProperty}, $\GR{I}{p} \geq p\abs{N(I)}$ for any set $I$. \\
		In particular when $\abs{I} = C_{d,r}$, $\GR{C_{d,r}}{p} \geq pN_{C_{d,r}}$. \\
		$\mgr{z,\UC{I_0}{k}}{p}$ is a monotonic non-decreasing function in $z$. Therefore, using the lower bound we found for $\mgr{z,\UC{I_0}{k}}{p}$ at $z = C_{d,r}$, we can lower bound $\mgr{z,\UC{I_0}{k}}{p}$ for any $z \in \left\{1, 2, \hdots \right\}$ with the following (non convex) function
		\begin{equation*}
		f(z) = pC_{d,r_z} \text{  ,where } r_z = \min \set{r}{ z \leq C_{d,r}}
		\end{equation*}
		Finally, since $f$ is not convex, we again bound it from below by a linear function in $\left\{ a, \hdots, \theta \right\}$.

\end{document}